\DeclareMathAlphabet{\mathpzc}{OT1}{pzc}{m}{it}
\newtheorem{theorem}{Theorem}
\newenvironment{proof}[1][Proof]{\begin{trivlist}
\item[\hskip \labelsep {\bfseries #1}]}{\end{trivlist}}
\newcommand{\qed}{\nobreak \ifvmode \relax \else
      \ifdim\lastskip<1.5em \hskip-\lastskip
      \hskip1.5em plus0em minus0.5em \fi \nobreak
      \vrule height0.75em width0.5em depth0.25em\fi}
\newcommand{\imag}{\mathrm{i}}
\newcommand{\mathe}{\mathrm{e}}
\newcommand{\mathd}{\mathrm{d}}
\newcommand{\omi}{\omega_{\mathrm{i}}}
\newcommand{\omr}{\omega_{\mathrm{r}}}
\newcommand{\vecx}{\bm{x}}
\newcommand{\vecu}{\bm{u}}
\newcommand{\gmr}{\gamma_{\mathrm{r}}}
\newcommand{\gmi}{\gamma_{\mathrm{i}}}
\newcommand{\radiusone}{\mathpzc{R}_1}
\newcommand{\radiustwo}{\mathpzc{R}_2}
\begin{document}
\title{Global modes  in nonlinear non-normal evolutionary models: exact solutions, perturbation theory, direct numerical simulation, and chaos}
\author{Lennon \'O N\'araigh}
\email{onaraigh@maths.ucd.ie}
\affiliation{School of Mathematical Sciences, University College Dublin, Belfield, Dublin 4}
\affiliation{Complex and Adaptive Systems Laboratory, University College Dublin, Belfield, Dublin 4}


\date{\today}

\begin{abstract}
This paper is concerned with the theory of generic non-normal nonlinear evolutionary equations, with potential applications in Fluid Dynamics and Optics.  Two theoretical models are presented.  The first is a model two-level non-normal nonlinear system  that not only highlights the phenomena of linear transient growth, subcritical transition and global modes, but is also of potential interest in its own right in the field of nonlinear optics.  The second is the fairly familiar inhomogeneous nonlinear complex Ginzburg--Landau (CGL) equation.  The two-level model is exactly solvable for the nonlinear global mode and its stability, while for the spatially-extended CGL equation, perturbative solutions for the global mode and its stability are presented, valid for inhomogeneities with arbitrary scales of spatial variation and global modes of small amplitude, corresponding to a scenario near criticality.  For other scenarios, a numerical iterative nonlinear eigenvalue technique is preferred.  Two global modes of different amplitudes are revealed in the numerical approach.   For both the two-level system and the nonlinear CGL equation, the analytical calculations are supplemented with direct numerical simulation, thus showing the fate of unstable global modes.  For the two-level model this results in unbounded growth of the full nonlinear equations.  For the spatially-extended CGL model in the subcritical regime, the global mode of larger amplitude exhibits a  `one-sided' instability leading to a chaotic dynamics, while the global mode of smaller amplitude is always unstable (theory confirms this).  However,  advection can stabilize the mode of larger amplitude.
\end{abstract}

\pacs{47.20.-k,47.52.+j,42.65.-k}
\keywords{Fluid Instability, Global Modes, Fluid-Optical analogies}

\maketitle

\section{Introduction}
\label{sec:intro}

Non-normal nonlinear evolutionary equations are of great interest not only in their own right but also in application areas, in particular, in Fluid Dynamics~\cite{eckhardt2007,Grossmann2000}, where the combination of non-normality and non-linearity give rise respectively to (i) transient growth and (ii) coherent states and global modes, and are thereby evoked as a subcritical route to turbulence.  The focus of the present work is on the study of such equations in their own right, although 
potential applications in Optics and Fluid Dynamics are highlighted.  
The present work is first of all placed in the context of the existing literature concerning transient growth, coherent states, and global modes.

\paragraph*{Transient growth in linear systems:} Historically, the linear stability of a system of evolutionary equations was tackled via eigenvalue analysis.  Consider a generic evolutionary system $\partial_t \vecu=\bm{F}(\vecu;\mu)$, where $\vecu(t)$ is a trajectory in $\mathbb{C}^n$ and $\mu$ is a real parameter.  A \textit{base state} $\vecu_0$ of the system is a solution of the equation $\bm{F}(\vecu_0;\mu)=0$.  The \textit{linearization} around the base state refers to the following equation: $\partial_t \vecu=\bm{J}\vecu$, where $\bm{J}$ is the Jacobian matrix, with $J_{ij}=(\partial F_i/\partial u_j)_{(\bm{u}_0,\mu)}$.  Let $\{\lambda_1,\cdots,\lambda_n\}=\mathrm{spec}(\bm{J})$, and let $\lambda_0$ denote the eigenvalue with largest real part.  The full system is said to be linearly stable if $\Re(\lambda_0)<0$.  Suppose for a critical parameter value $\mu=\mu_\mathrm{c}$ we have $\Re(\lambda_0)<0$ for $\mu<\mu_\mathrm{c}$ and $\Re(\lambda_0)>0$ for $\mu>\mu_\mathrm{c}$.  Then the system undergoes a bifurcation from linearly stable to linearly unstable at the critical value $\mu=\mu_\mathrm{c}$.  The same analysis can usually be applied  to spatially extended systems by projecting the evolutionary equation on to a set of basis functions of dimension $n$ and taking $n\rightarrow\infty$, assuming that the weak solutions so constructed tend to a strong solution.

Such systems can exhibit transient growth in a subcritical parameter regime whereby for suitable initial conditions involving a superposition of eigenmodes, the $L^2$ norm (or some other more appropriate measure of the energy) of the solution grows initially before eventually decaying at a rate dictated by $\lambda_0$.  This can happen in two ways.  If the eigenvectors do not form a complete set spanning $\mathbb{C}^n$, then the solution to the initial-value problem will involve contributions such as $t^p\mathe^{\lambda_j t}$, where $p\geq 1$ and $j\in\{1,\cdots,n\}$, such that transient algebraic growth is possible  before the onset of the asymptotic decay of the solution.  We do not consider this case further here (but see Reference~\cite{Schmid2001}).  Instead, we consider situations where the eigenvectors of $\bm{J}$ do indeed span $\mathbb{C}^n$ but are non-orthogonal, corresponding to a situation where $\bm{J}$ is not a normal operator.  In such a scenario, 
the $L^2$ norm of solutions involving certain combinations of eigenvectors can grow transiently before eventually decaying, with transient growth factors $\|\vecu(t)\|_2/\|\vecu(t=0)\|_2$ that can be as high as $O(10^5)$ (e.g. Reference~\cite{Chomaz1997}).  This is the phenomenon of transient growth.  

Starting in the 1990s, transient growth has been identified in fluid dynamics where the base state is a parallel flow and the linearized equations describe the dynamics of the perturbed velocity and pressure fields~\cite{Trefethan1993,Schmid2001}.  It is mooted as a mechanism for the subcritical transition to turbulence~\cite{Grossmann2000} and as a route to three-dimensional waves in two-phase parallel flow~\cite{YeckoZaleski2002}.  However, it is not always observed in direct numerical simulation based on the full nonlinear equations (even when the linearized equations of motion predicts that it should play a role)~\cite{Onaraigh2014}, and its importance as a potential mechanism for the subcritical transition to instability (hence turbulence) is debated: in the work by Waleffe~\cite{Waleffe1995,Waleffe1998} the importance of coherent states (i.e. a self-sustained spatial structure; see below) is emphasized in the subcritical transition to instability, while the putative role of transient growth is minimized.  A synthesis arising from this debate is presented in References~\cite{Schmid2001,Grossmann2000,eckhardt2007}, in which a scenario is presented whereby the transient growth provides the initial energy needed for a small-amplitude disturbance to grow subcritically and become self-sustained via the excitation of a coherent state.
Indeed, transient growth has been experimentally observed in at least one context (the optical one), in the propagation of light in unidirectional wave guides~\cite{Kostenbauder1997,SiegmanBook}.  Since the generic properties of the evolutionary equations are similar in both fluids and optics, a suggestion here is to use transient growth on the side of experimental optics (where extremely precise experiments with materials with almost arbitrary optical properties) to illuminate transient growth on the fluids side.
  

\textit{Applications of the linearized Complex Ginzburg--Landau equation:} We review briefly the applications of the complex Ginzburg--Landau equation, focusing in particular on the linearized version, although the extension to nonlinear versions is immediate.  The application of the CGL equation to optical waveguides is immediate (Reference~\cite{SiegmanBook} and Appendix~\ref{app:paraxial}) and needs no further discussion.  We therefore also describe briefly the role of the CGL equation in Fluid Dynamics.   As in the above, for linear stability analysis of parallel flows, one solves the linearized equations of motion transformed into the frequency domain and thereby reduced to eigenvalue form (e.g. Orr--Sommerfeld or Orr--Sommerfeld--Squire equations).  
 In a similar manner, one may solve for the response of the base state to an impulsive disturbance, localized both in space and time.  If the disturbance grows in time but is convected downstream by the base-state parallel flow, the system is said to be convectively unstable.  On the other hand, if the disturbance grows in time at the location of the initial impulsive disturbance, then the flow is said to be absolutely unstable~\cite{Schmid2001,huerre90a}. 

The growth rates obtained from eigenvalue analysis of the linearized Navier--Stokes equations are merely asymptotic (i.e. valid in the limit as $t\rightarrow\infty$).  This is true for both parallel flows and global stability analyses.  In the parallel flow case, for both the Orr--Sommerfeld and Orr--Sommerfeld--Squire equations, effects not captured by a simple eigenvalue analysis can be present at early times and lead to substantial transient growth because these equations are non-normal.  In addition, there are very few closed-form solutions to the Orr--Sommerfeld and Orr--Sommerfeld--Squire equations.  Thus, the complex Ginzburg--Landau (CGL) equation is introduced instead  as a minimal model of absolute instability, transient growth, and global instability in parallel flows.  Although different from the physical linearized fluid equations (e.g. Orr--Sommerfeld or Orr--Sommerfeld--Squire) equations, this model is sufficiently complex that it still retains the same important qualitative behaviours as the former.
  The model equation for the homogeneous case is
\begin{equation}
\frac{\partial u}{\partial t}+U\frac{\partial u}{\partial x}=\mu u+\gamma\frac{\partial^2 u}{\partial x^2};
\label{eq:cgl1}
\end{equation}
here $u(x,t)$ represents some instability in the fluid that is convected downstream by a constant velocity $U$.  The parameters $U$ and $\mu$ are real, while  $\gamma=\gmr+\imag\gmi$ is the complex-valued diffusion coefficient, with $\gmr>0$.  

Equation~\eqref{eq:cgl1}  is amenable to a modal analysis, wherein a normal-mode solution $u(x,t)=\mathe^{-\imag\omega t+\alpha x}$ is possible, provided the following dispersion relation is satisfied:
\begin{equation}
\omega=\imag\left(\mu-U\alpha-\gamma\alpha^2\right)
\label{eq:normalmode}
\end{equation}
In this context, a complex diffusion coefficient with positive imaginary part is necessary to give a growth rate $\omi=\mu-\gmr\alpha^2$ that is negative for short wavenumbers, leading to a well-posed problem.  This is a minimal model of linear instability, and it includes longwave growth and shortwave stabilization.  Clearly, the flow is unstable if $\mu>0$, where instability exists for a band of wavenumbers with $\alpha^2<\gmr/\mu$.  More importantly, we have a minimal model of absolute instability~\citep{huerre90a}: the model is said to be absolutely unstable if (i) it is unstable, with $\mu>0$; (ii) the Green's function $G(x,x_0,t)$ blows up at $x=x_0$ and as $t\rightarrow\infty$.   If condition (i) is satisfied but not (ii), the flow is said to be convectively unstable.   In more detail, the Green's function solves
\[
\frac{\partial G}{\partial t}+U\frac{\partial G}{\partial x}=\mu G+\gamma\frac{\partial^2G}{\partial x^2},\qquad 
G(x,x_0,t=0)=\delta(x-x_0);
\]
by direct computation we have
\begin{subequations}
\begin{equation}
G(x,x_0,t)=\left(2\pi\gamma t\right)^{-1/2}\exp\Bigg\{t\left[\mu-\frac{1}{4\gamma}\left(\frac{x-x_0}{t}-U\right)^2\right]\Bigg\},\qquad \gmr>0.
\label{eq:greens_cgl}
\end{equation}
The flow is absolutely unstable if $\lim_{t\rightarrow\infty}G(x,x_0,t)=\infty$, i.e. if $\mu-\tfrac{1}{4}U^2\Re(1/\gamma)>0$.  However, this condition can be recast as follows:
\begin{equation}
\omi(\alpha_0)>0,\qquad \frac{d\omega}{d\alpha}\bigg|_{\alpha_0}=0,
\label{eq:greens_cgl_cond}
\end{equation}%
\label{eq:greens_cgl_all}%
\end{subequations}%
i.e. $\alpha_0$ is the (complex) saddle point of the dispersion relation~\eqref{eq:normalmode}.
The derivation of Equations~\eqref{eq:greens_cgl_all} involves the computation of a Gaussian integral over all wavenumbers.  Here lies the wide applicability of CGL theory: a wide class of dispersion relations in fluid mechanics is locally quadratic in wavenumber, enabling a saddle-point approximation of the generic Green's function, such that all Green's functions in this class resemble Equation~\eqref{eq:greens_cgl} in the  limit as $t\rightarrow\infty$.  Thus, for this wide class of fluid-mechanical dispersion relations, the condition for absolute instability is simply that the imaginary part of the frequency at the saddle point be positive~\cite{huerre90a}.  Of course, there is a variety of fluid-mechanical problems where naive application of the saddle-point criterion~\eqref{eq:greens_cgl_cond} fails~\cite{Onaraigh2013}.  However, the saddle-point criterion~\eqref{eq:greens_cgl_cond} supplemented with the requirement that the spatial branches emanating from the saddle point should extend into opposite half-planes in the complex $\alpha$-plane, is sufficient to guarantee absolute instability~\cite{Briggs1964,Schmid2001}.

\paragraph*{Coherent states:}  A coherent state refers to a self-sustained nonlinear disturbance.  In work on parallel flow instability, the coherent state typically takes the form of a stationary state or a travelling wave (e.g. References~\cite{Waleffe1998,eckhardt2007}).  However, so-called nonlinear global modes in open shear flows
can be classified in the same way, since such global modes  comprise a coherent disturbance localized in space (but often involving propagating fronts) and oscillating in time, sustained by nonlinear absolute instability (e.g. Reference~\cite{huerre2000}).  A main aim in this and related works is to understand the nonlinear response of parallel-flow systems to impulsive localized forcing, moving beyond the linear theory of absolute and convective instability.  The nonlinear (homogeneous) complex Ginzburg--Landau equation (CGL) is introduced in these papers as a `toy model' whose solution gives insight into the far more complicated (but analogous) fluid-mechanical systems~\cite{Pier2002}. 

\paragraph*{The inhomogeneous complex Ginzburg--Landau equation (linear and nonlinear):} The focus of the present work is in the inhomogeneous Complex Ginzburg--Landau (CGL) equation with quadratic inhomogeneity.   The linearized equation is solved exactly in Reference~\cite{Hunt1991} (but see also Reference~\cite{le1996linear}), and the resulting solutions have been used to demonstrate the phenomenon of transient growth~\cite{Chomaz1997}.  There, a connection is made with the phenomenon of linear absolute instability in the homogeneous case and linear (`global') instability in the inhomogeneous case: linear absolute instability is necessary but not sufficient for the onset of linear global instability.
The nonlinear inhomogeneous  case has been solved asymptotically by Pier and co-workers (References~\cite{pier1996fully,pier1998}; for a review, see Reference~\cite{huerre2000}) in the limit where the spatial scale of the inhomogeneity varies slowly, such that WKB theory applies.  Two kinds of self-sustained oscillation appear, namely \textit{soft global modes} and \textit{steep global modes}.  The soft modes obtain their frequency from wholly nonlinear considerations, and are so called because the nonlinear upstream spatial wavenumber of the global mode switches smoothly to the nonlinear downstream spatial wavenumber across a region of local absolute instability.  In contrast, the frequency selected by the steep modes is obtained from linear theory, and the upstream and downstream spatial wavenumbers of the global mode connect in a non-smooth fashion.  Again, the two kinds of nonlinear global mode retain a ghostly residuum of the linear theory: a necessary condition for the existence of a soft or steep global mode is the presence of a region of local linear absolute instability.  Also, steep global modes appear as soon as there exists a point of local linear absolute instability, even though the  system is still linearly globally stable.

In the present work, the focus is moved to the study of the inhomogeneous CGL equation with inhomogeneities that vary on arbitrary spatial scales.  A small-amplitude perturbation theory is presented corresponding to the situation close to criticality.  The perturbation theory is based on the regular perturbation theory of Quantum Mechanics~\cite{cohen1977quantum}, but the results are similar to a previous work based on Stuart--Landau theory~\cite{Chomaz05}.  The main innovation therefore is to use the perturbative results as a stepping stone to a full analysis using a numerical approach that is valid for arbitrary amplitudes and arbitrary departures from criticality, supplemented by direct numerical simulation to determine whether the global modes are realised in the transient evolution of the model system.

This work is organized as follows.  In Section~\ref{sec:two_level} the model two-level nonlinear system is introduced and a closed-form solution for the self-sustained oscillation (`global mode') is obtained.  The instability of the solution is demonstrated, while in Section~\ref{sec:two_level_dns} the evolution of states close to the global mode is checked via numerical simulation.  In Section~\ref{sec:global_modes} the results concerning the inhomogeneous CGL equation are presented.  To augmet these results, numerical methods and resulting calculations are presented in Section~\ref{sec:cgl_numerical}.  Concluding remarks are made in Section~\ref{sec:conc}.

\section{A reduced model involving a two-level system with nonnormality and nonlinearity}
\label{sec:two_level}

In this section a two-level system is introduced that exhibits the twin effects of non-normality and nonlinearity.
The aim is to start with a set of linear equations and to quantify the influence of the non-normality on the transient growth at subcritical parameter regimes.  Next, a cubic nonlinearity is introduced and an exact solution for the global mode (and its stability) is obtained.  The section concludes with a crucial discussion about the physical relevance of the system in optics. 

\subsection{Non-normal linear model}

We consider the following two-level system
\begin{equation}
\imag\frac{\partial u}{\partial t}=\mathcal{H}u+\imag\left(\mu_0\mathbb{I}+\mathcal{G}\right)u,\qquad u\in\mathbb{C}^2,
\label{eq:qm_model}
\end{equation}
where
\[
\mathcal{H}=\left(\begin{array}{cc}E_0&A\\A&E_0\end{array}\right),\qquad
\mathcal{G}=\left(\begin{array}{cc}-g_1&0\\0&-g_2\end{array}\right),
\]
where $E_0,A,\mu_0,g_1,g_2$ are positive real numbers, and where $\mathcal{H}$ and $\mu_0\mathbb{I}+\mathcal{G}$ are Hermitian matrices, with $[\mathcal{H},\mathcal{G}]\neq 0$, where $[\cdot,\cdot]$ denotes the matrix commutator.  Note that $[\mathcal{H},\mathcal{G}]\neq 0$ implies that the operator on the right-hand side of Equation~\eqref{eq:qm_model} is non-normal.    One writes down the trial solution $u(t)=u_0\mathe^{-\imag\omega t}$ to obtain the following eigenvalues:
\begin{equation}
\omr=E_0\pm\sqrt{4A^2-(g_1-g_2)^2},\qquad \omi=\mu_0-\tfrac{1}{2}(g_1+g_2),\qquad 4A^2>(g_1-g_2)^2,\qquad \text{Case 1},
\label{eq:case1}
\end{equation}
and
\begin{equation}
\omr=E_0,\qquad \omi=\mu_0-\tfrac{1}{2}(g_1+g_2)\pm\sqrt{(g_1-g_2)^2-4A^2},\qquad 4A^2<(g_1-g_2)^2,\qquad \text{Case 2},
\label{eq:case2}
\end{equation}
The corresponding eigenvectors for both Case 1 and Case 2 are
\[
u_{0+}=\left(\begin{array}{c}\mathe^{-\imag\varphi/2}\\\mathe^{\imag\varphi/2}\end{array}\right),\qquad
u_{0-}=\left(\begin{array}{c}-\mathe^{\imag\varphi/2}\\\mathe^{-\imag\varphi/2}\end{array}\right),\qquad
\varphi=\sin^{-1}\left(\frac{g_1-g_2}{2A}\right).
\]
For Case 1, we have $|(g_1-g_2)/2A|<1$, hence $\varphi\in\mathbb{R}$, while for Case 2, $\varphi$ is complex.  For both cases, the eigenvectors are non-orthogonal; the degree of non-orthogonality is expressed by the relation $\langle u_{0+},u_{0-}\rangle=2\imag(g_2-g_1)/2A$.  Here, $\langle\cdot,\cdot\rangle$ denotes the usual scalar product on $\mathbb{C}^2$, with $\langle \bm{u},\bm{v}\rangle=\overline{\bm{u}^{T}}\bm{v}$, for all column vectors $\bm{u},\bm{v}$ in $\mathbb{C}^2$. Orthogonality of eigenvectors is regained for $g_1=g_2$; this corresponds precisely to $\left[\mathcal{H},\mathcal{G}\right]=0$ and the normality of the (symmetric) operator $\mathcal{H}+\imag\mathcal{G}$.
The crossover between Cases 1 and 2, where $(g_1-g_2)^2=4A^2$ is referred to as a \textit{diabolic point} in the literature concerning non-Hermitian quantum mechanics~\cite{graefe2009,rotter2009}; this point truly is fiendish, since the eigenvectors degenerate and fail to span $\mathbb{C}^2$ in this instance.  However, in this work, consideration is given strictly to Case 2, for reasons given in what follows.

For all possible parameter values, the intrinsic non-Hermiticity of the operator in Equation~\eqref{eq:qm_model} implies the following nontrivial evolution equation for the $L^2$ norm of the general solution $u(t)$:
\begin{equation}
\|u\|_2^2:=\langle u,u\rangle,\qquad
\tfrac{1}{2}\frac{d}{dt}\|u\|_2^2=\langle u,\left(\mu_0+\mathcal{G}\right)u\rangle.
\label{eq:evol_l2}
\end{equation}
The quadratic form on the right-hand side can be evaluated in the eigenbasis of the operator $\mathcal{G}$ and a bound on the growth of the $L^2$ norm is obtained:
\begin{equation}
\tfrac{1}{2}\frac{d}{dt}\|u\|_2^2\leq \left[\mu_0-\min(g_1,g_2)\right]\|u\|_2^2.
\label{eq:evol_l2_bound}
\end{equation}
In this work, we focus on models that exhibit transient growth.  Since this relies on the `mixing' of eigenstates with eigenfrequencies whose imaginary parts are distinct, we are forced by this constraint to work in Case 2 (this rules out of consideration the diabolic point).  With reference to this case, and to Equation~\eqref{eq:case2}, transient non-asymptotic growth is possible in the paramter range
\begin{equation}
\min(g_1,g_2)<\mu_0<\tfrac{1}{2}(g_1+g_2)-\sqrt{(g_1-g_2)^2-4A^2}.
\label{eq:transient_growth_sweet}
\end{equation}
This is the `sweet' operational range where the system is subcritical, in the sense that $\omi<0$ for both eigenvalues, but where the balance between forcing and dissipation is ambiguous, so that the sign of the upper bound of the growth rate in Equation~\eqref{eq:evol_l2_bound} is not definite.

\subsection{Introduction of nonlinear terms}
\label{sec:two_level_nonlin}

We add non-linearity to the problem by modifying Equation~\eqref{eq:qm_model} as follows:
\begin{equation}
\imag \frac{\partial u}{\partial t}=\mathcal{H}u+\imag\left(\mu_0\mathbb{I}+\mathcal{G}\right)u+
a\left(\begin{array}{cc}|u_1|^2&0\\0&|u_2|^2\end{array}\right)u,
\label{eq:qm_model_nonlin_local}
\end{equation}
where $a$ is real. 
The evolution of the $L^2$ norm $\|u\|_2^2$ is again unchanged under the addition of the nonlinear term: the quantity $\|u\|_2^2$ still evolves according to the norm-evolution first written down for the linear problem (i.e. Equation~\eqref{eq:evol_l2}).
However, in contrast to the linear model, Equation~\eqref{eq:qm_model_nonlin_local} has a nonlinear periodic solution, which we find by making the trial solution
\begin{equation}
u=R\mathe^{-\imag\Omega t}u_0,\qquad \|u_0\|_2^2=1,\qquad \Omega\in\mathbb{R}.
\label{eq:u_nonlin_ansatz}
\end{equation}
Substitution of Equation~\eqref{eq:u_nonlin_ansatz} into Equation~\eqref{eq:qm_model_nonlin_local} yields the following eigenvalue problem for $u_0$:
\begin{equation}
\Omega u_0=\mathcal{L}u_0+aR^2\left(\begin{array}{cc}|u_{01}|^2&0\\0&|u_{02}|^2\end{array}\right)u_0,\qquad
\mathcal{L}=\mathcal{H}+\imag\left(\mu_0\mathbb{I}+\mathcal{G}\right).
\label{eq:nonlin_eigs}
\end{equation}
In general, Equation~\eqref{eq:nonlin_eigs} has a family of solutions labelled by the continuous parameter $R$, with corresponding eigenvalues $\Omega(R)$.  However, there is  a finite number of $R$-values consistent with the requirement $\Im[\Omega(R)]=0$; all other $R$-values correspond to eigensolutions of Equation~\eqref{eq:nonlin_eigs} that are nonetheless inconsistent with Equation~\eqref{eq:u_nonlin_ansatz}.  It is as if we are solving a nonlinear double eigenvalue problem in the parameters $\Omega$ and $R$ (the analogous double eigenvalue problem of linear algebra is addressed elsewhere~\cite{Blum1978}).  Solutions of this double eigenvalue problem are called \textit{self-consistent} in the remainder of this work, and can be found analytically for the simple two-level system considered in this section.  Indeed, we have the following theorem:
\begin{theorem}
\label{thm:exact1}
Let $g_2<\mu_0<g_1$, and let
\begin{subequations}
\begin{equation}
X^2=-\frac{(\mu_0-g_1)(\mu_0-g_2)}{A^2}.
\end{equation}
Assume that $X^2<1$.   Then Equation~\eqref{eq:nonlin_eigs} has the following self-consistent solution:
\begin{equation}
\Omega=E_0+aR^2,\qquad R^2=\frac{g_1-g_2}{a}\sqrt{\frac{1}{X^2}-1},\qquad u_0=\left(\begin{array}{c}r\mathe^{\imag\varphi}\\x\end{array}\right),
\end{equation}
where
\begin{equation}
r=\sqrt{\frac{\mu_0-g_2}{g_1-g_2}},\qquad x=\sqrt{\frac{g_1-\mu_0}{g_1-g_2}},\qquad \varphi=-\sin^{-1}(X).
\end{equation}
\label{eq:lovely_theorem}
\end{subequations}
\end{theorem}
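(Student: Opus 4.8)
The plan is to verify the stated solution by direct substitution, converting the nonlinear eigenvalue problem~\eqref{eq:nonlin_eigs} into two scalar equations and then eliminating the unknown phase. First I would insert $u_0=(r\mathe^{\imag\varphi},x)^T$ with $r^2+x^2=1$ into~\eqref{eq:nonlin_eigs}, using the trial value $\Omega=E_0+aR^2$ together with the self-consistency relations $|u_{01}|^2=r^2$ and $|u_{02}|^2=x^2$. Writing $s:=aR^2$, $p:=\mu_0-g_1$ and $q:=\mu_0-g_2$, the diagonal contributions $aR^2r^2$ and $aR^2x^2$ partially cancel against $\Omega$ because $r^2-1=-x^2$ and $x^2-1=-r^2$, leaving the compact pair
\[
(\imag p-sx^2)\,r\mathe^{\imag\varphi}+Ax=0,\qquad Ar\mathe^{\imag\varphi}+(\imag q-sr^2)\,x=0.
\]

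Next I would eliminate the phase: solving each equation for $\mathe^{\imag\varphi}$ and equating the results produces the single complex self-consistency relation $(\imag p-sx^2)(\imag q-sr^2)=A^2$. Separating real and imaginary parts is the crux of the argument. The imaginary part gives $s(pr^2+qx^2)=0$; since $s\neq 0$ this forces $pr^2+qx^2=0$, which, combined with $r^2+x^2=1$, immediately yields $r^2=(\mu_0-g_2)/(g_1-g_2)$ and $x^2=(g_1-\mu_0)/(g_1-g_2)$, both positive under the hypothesis $g_2<\mu_0<g_1$. The real part gives $-pq+s^2r^2x^2=A^2$; using $-pq=A^2X^2$ and $r^2x^2=A^2X^2/(g_1-g_2)^2$, this collapses to $s^2=(g_1-g_2)^2(1/X^2-1)$, which is precisely the stated value of $R^2$ (with the positive root). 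The hypothesis $X^2<1$ is exactly what makes $R^2$ real and positive.

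It then remains to recover the phase and confirm the outstanding constraints. From the first scalar equation I obtain $\mathe^{\imag\varphi}=-Ax/[(\imag p-sx^2)r]$, whose argument is $-\arctan[(g_1-\mu_0)/(sx^2)]$; substituting $sx^2=(g_1-\mu_0)\sqrt{1/X^2-1}$ reduces the argument of the arctangent to $X/\sqrt{1-X^2}$, and since $\tan(\sin^{-1}X)=X/\sqrt{1-X^2}$ this gives $\varphi=-\sin^{-1}(X)$. Finally $\Omega=E_0+aR^2$ is real because $R^2$ is, so the ansatz~\eqref{eq:u_nonlin_ansatz} is consistent. I expect the main obstacle to be a bookkeeping subtlety rather than a genuine difficulty: the elimination step works only with the \emph{ratio} of the two scalar equations and therefore discards the unit-modulus requirement $|\mathe^{\imag\varphi}|=1$; I would guard against this by checking that condition separately, noting it reduces to $A^2x^2=(s^2x^4+p^2)r^2$ and follows automatically from the same relation $pr^2+qx^2=0$ that fixes the amplitudes. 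Everything else is routine algebra.
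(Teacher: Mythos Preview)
Your proof is correct. Both you and the paper proceed by direct substitution into the two scalar row equations of~\eqref{eq:nonlin_eigs}, but the organisation differs in a way worth noting. The paper first invokes the self-consistency constraint $\langle u_0,(\mu_0+\mathcal{G})u_0\rangle=0$ (equivalent to $\Im(\Omega)=0$) to fix $r$ and $x$, then uses the imaginary part of one row equation to determine $\varphi$, and finally equates the two resulting real expressions for $\Omega$ to extract $R^2$; the value $\Omega=E_0+aR^2$ emerges at the end rather than being assumed. You instead take $\Omega=E_0+aR^2$ as given from the theorem statement and eliminate the phase by multiplying the two row equations, which collapses everything to the single complex relation $(\imag p-sx^2)(\imag q-sr^2)=A^2$; its imaginary and real parts then deliver $(r,x)$ and $R^2$ simultaneously, with $\varphi$ recovered afterwards. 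Your route is a little more compact algebraically; the paper's route makes the physical origin of the amplitude condition $pr^2+qx^2=0$ (energy balance between gain and loss) more transparent. One small imprecision: the unit-modulus check $A^2x^2=(s^2x^4+p^2)r^2$ does not follow from $pr^2+qx^2=0$ alone---it also needs the real-part relation $s^2r^2x^2=A^2(1-X^2)$---but since you have already established both, the verification goes through.
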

\begin{proof}
The so-called self-consistent nonlinear eigenvalue problem refers to the solution $u_0$ of Equation~\eqref{eq:nonlin_eigs} with the constraint that 
\begin{equation}
\langle u_0,(\mu_0+\mathcal{G})u_0\rangle=0.
\label{eq:nonlin_eigs_marginal}
\end{equation}
To obtain such a solution, we  take
\begin{equation}
u_0=\left(\begin{array}{c}r\mathe^{\imag\varphi(R)}\\a\end{array}\right),\qquad r,\varphi,x\in\mathbb{R},\qquad r^2+x^2=1;
\label{eq:u0_trial}
\end{equation}
this amounts to a fixed choice for the (arbitrary) global phase of $u_0$.  Substitution into Equation~\eqref{eq:nonlin_eigs_marginal} yields
\begin{equation}
r^2\left(\mu_0-g_1\right)+x^2\left(\mu_0-g_2\right)=0,\qquad r^2+x^2=1,
\label{eq:rsq1}
\end{equation}
hence
\begin{equation}
r=\sqrt{\frac{\mu_0-g_2}{g_1-g_2}},\qquad x=\sqrt{\frac{g_1-\mu_0}{g_1-g_2}},
\label{eq:rsq2}
\end{equation}
where $r$ and $x$ are both real because $g_2<\mu_0<g_1$.
The phase $\varphi$ is not determined by this analysis; this is obtained  by consideration of the full nonlinear eigenvalue problem, which is written out in full as follows:
\begin{subequations}
\begin{eqnarray}
\Omega r\mathe^{\imag\varphi}&=&\mathcal{L}_{11}r\mathe^{\imag\varphi}+\mathcal{L}_{12}x+aR^2r^3\mathe^{\imag\varphi},\label{eq:om1}\\
\Omega x&=&\mathcal{L}_{21}r\mathe^{\imag\varphi}+\mathcal{L}_{22}x+aR^2x^3\label{eq:om2}.
\end{eqnarray}%
\label{eq:om_all}%
\end{subequations}%
The imaginary part of the second equation is set to zero to yield a root-finding condition for $\varphi$:
\begin{equation}
\mathcal{L}_{21}^{(\mathrm{r})}\sin\varphi+\mathcal{L}_{21}^{(\mathrm{i})}\cos\varphi+\mathcal{L}_{22}^{(\mathrm{i})}\left(x/r\right)=0.
\label{eq:root_phi}
\end{equation}
Because the operator $\mathcal{L}$ is symmetric, setting the imaginary part of the first equation to zero yields precisely the same condition for $\varphi$. 
Equation~\eqref{eq:root_phi} simplifies drastically, once the coefficients $\mathcal{L}_{21}^{(\mathrm{r})}$ etc. are filled in:
\[
A\sin\varphi+(\mu_0-g_2)(x/r)=0.
\]
Using Equations~\eqref{eq:rsq1}--\eqref{eq:rsq2}, this expression can be written as an explicit function of the problem parameters alone:
\[
A\sin\varphi\pm (\mu_0-g_2)\sqrt{-\frac{\mu_0-g_1}{\mu_0-g_2}}=0,
\]
where the two branches come from taking $x=\pm \sqrt{1-r^2}$ in Equation~\eqref{eq:rsq2}, and where the radicand is positive because of the constraints on the parameters in Theorem~\ref{thm:exact1}.  Hence,
\begin{equation}
\varphi=\mp \sin^{-1}\left(\sqrt{-\frac{(\mu_0-g_1)(\mu_0-g_2)}{A^2}}\right)=\mp \sin^{-1}(X),
\label{eq:root_phi_expl}
\end{equation}
with $\varphi\in [-\pi/2,\pi/2]$.
Upon satisfaction of the condition~\eqref{eq:root_phi_expl}, Equations~\eqref{eq:om1} and~\eqref{eq:om2} are consistent, but only in the sense that both equations now imply that $\Omega$ is real.  The equations~\eqref{eq:om1} and~\eqref{eq:om2} are made totally consistent by choosing a value of $R$ such that the value of $\Omega$ in both these equations is the same.  In terms of the explicit values of the $\mathcal{L}$-matrix, we have
\begin{eqnarray*}
\Omega&=&E_0+A\cos\varphi(x/r)+aR^2r^2,\\
&=&E_0+A\cos\varphi(r/x)+aR^2x^2.
\end{eqnarray*}
This is a simple quadratic equation in $R$, for which $R$ has a single real positive root:
\[
R^2=\frac{g_1-g_2}{a}\sqrt{\frac{1}{X^2}-1}.
\]
This procedure therefore picks out two isolated pairs $(\mp\varphi,R)$ that leads to a self-consistent solution of the nonlinear double eigenvalue problem.\qed
\end{proof}
Further analytical progress is possible with respect to Equation~\eqref{eq:nonlin_eigs}: the  family of complex-valued eigenvalues $\Omega(R)$ can be controlled in the following manner:
\begin{theorem}
Take as given the non-linear eigenvalue problem~\eqref{eq:nonlin_eigs}, with eigenvalues $\Omega(R)\in\mathbb{C}$ for $R\geq 0$.  Then
\[
(\mu_0-g_2)-(g_1+g_2)\leq \Im(\Omega) \leq \mu_0-g_2.
\]
\end{theorem}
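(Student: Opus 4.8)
The plan is to reduce the claim to a Rayleigh-quotient bound for the Hermitian matrix $\mu_0\mathbb{I}+\mathcal{G}$, exploiting the fact that the non-Hermitian content of $\mathcal{L}$ is concentrated entirely in that term while the nonlinearity is manifestly real. First I would take the inner product of the nonlinear eigenvalue equation~\eqref{eq:nonlin_eigs} with $u_0$ itself and invoke the normalization $\|u_0\|_2^2=1$ built into the ansatz~\eqref{eq:u_nonlin_ansatz}, so that $\langle u_0,u_0\rangle=1$ and
\[
\Omega=\langle u_0,\mathcal{L}u_0\rangle+aR^2\langle u_0,\mathrm{diag}(|u_{01}|^2,|u_{02}|^2)\,u_0\rangle.
\]

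The next step is to isolate the imaginary part. Writing $\mathcal{L}=\mathcal{H}+\imag(\mu_0\mathbb{I}+\mathcal{G})$, the matrix $\mathcal{H}$ is Hermitian, so $\langle u_0,\mathcal{H}u_0\rangle\in\mathbb{R}$, whereas $\mu_0\mathbb{I}+\mathcal{G}$ is also Hermitian, so $\langle u_0,(\mu_0\mathbb{I}+\mathcal{G})u_0\rangle\in\mathbb{R}$ and the prefactor $\imag$ transfers it to the imaginary channel. The nonlinear term is a real diagonal matrix, so its expectation value $|u_{01}|^4+|u_{02}|^4$ is real and contributes only to $\Re(\Omega)$. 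Reading off the imaginary part therefore gives
\[
\Im(\Omega)=\langle u_0,(\mu_0\mathbb{I}+\mathcal{G})u_0\rangle=(\mu_0-g_1)|u_{01}|^2+(\mu_0-g_2)|u_{02}|^2 .
\]
This is precisely the right-hand side of the norm-evolution law~\eqref{eq:evol_l2} evaluated on the mode, so the identity could equally be obtained by substituting $u=R\mathe^{-\imag\Omega t}u_0$ into~\eqref{eq:evol_l2} and matching the growth rate of $\|u\|_2^2=R^2\mathe^{2\Im(\Omega)t}$.

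Finally I would read off the bounds. Since $|u_{01}|^2+|u_{02}|^2=1$ with both weights non-negative, the expression for $\Im(\Omega)$ is a convex combination of the diagonal entries $\mu_0-g_1$ and $\mu_0-g_2$, hence lies in the closed interval between them; equivalently it is the Rayleigh quotient of the Hermitian matrix $\mu_0\mathbb{I}+\mathcal{G}$ and is trapped between its extreme eigenvalues. Under the standing ordering $g_1>g_2$ (as in Theorem~\ref{thm:exact1}), this yields the sharp interval $\mu_0-g_1\le\Im(\Omega)\le\mu_0-g_2$. The upper bound is exactly the one stated, and the stated lower bound follows \emph{a fortiori} from $\mu_0-g_1\ge(\mu_0-g_2)-(g_1+g_2)$, which holds because $g_2>0$.

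There is essentially no analytic obstacle here; the only point demanding care is the observation that the $R$-dependent nonlinear term is real and Hermitian and therefore cannot shift $\Im(\Omega)$. It is this fact that makes the bound uniform in $R$ over the entire family $\Omega(R)$, even though the eigenvalue problem is genuinely nonlinear and its real part does depend on $R$ through the cubic term.
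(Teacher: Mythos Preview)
Your argument is correct and essentially identical to the paper's: the paper forms the combination $r\mathe^{-\imag\varphi}[\text{Eq.~(\ref{eq:om1})}]+x[\text{Eq.~(\ref{eq:om2})}]$, which in the parametrization $u_0=(r\mathe^{\imag\varphi},x)^T$ is exactly your inner product $\langle u_0,\cdot\rangle$, and then reads off $\Im(\Omega)=\mu_0-g_1r^2-g_2x^2$ with $r^2+x^2=1$. Your Rayleigh-quotient framing actually yields the sharper lower bound $\mu_0-g_1$, from which the paper's stated bound $(\mu_0-g_2)-(g_1+g_2)$ follows (as you note) since $g_2>0$.
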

\begin{proof}
Take $r\mathe^{-\imag\varphi} \left[\text{Eq.~\eqref{eq:om1}}\right]+x\left[\text{Eq.~\eqref{eq:om2}}\right]$ to obtain
\[
\Omega=E_0+2Arx\cos\varphi+aR^2(r^4+x^4)+\imag(\mu_0-g_1r^2-g_2x^2).
\]
Since $x^2+r^2=1$, we have
\[
\Im(\Omega)=(\mu_0-g_2)-(g_1+g_2)r^2,
\]
with $0\leq r^2\leq 1$, hence $(\mu_0-g_2)-(g_1+g_2)\leq \Im(\Omega) \leq \mu_0-g_2.$\qed
\end{proof}
Thus, the range of allowed values of $\Im(\Omega)$ is bounded above and below.  In order for $\Im(\Omega)=0$ to be achievable, the upper bound must be non-negative, so that a necessary condition for the existence of a global mode is $\mu_0>g_2$.  By Equation~\eqref{eq:evol_l2_bound}, this is the same necessary condition as the one required for  linear transient growth to occur.

\subsection{Stability of nonlinear oscillatory state}
The stability of the self-consistent oscillatory state is investigated by consideration of a trial solution
\[
u=R\mathe^{-\imag\Omega t}u_0+\delta u,
\]
where $R=R_\mathrm{c}$, and where $u_0$ is the solution of the nonlinear eigenvalue problem.  Linearization around the periodic state yields the following ODE:
\begin{eqnarray}
\imag \frac{d}{dt}\delta u&=&\mathcal{L}\delta u+2R^2a \left(\begin{array}{cc} |u_{01}|^2 & 0\\0& |u_{02}|^2\end{array}\right)\delta u+R^2\mathe^{-2\imag\Omega t}\left(\begin{array}{cc} u_{01}^2 &0\\0&u_{02}^2\end{array}\right)\overline{\delta u},\nonumber\\
&:=&\mathcal{A}\,\delta u+\mathe^{-2\imag\Omega t}\mathcal{B}\,\overline{\delta u}.
\label{eq:periodic_lsa0}
\end{eqnarray}
The manifestly time-dependent term is removed by making the trial solution $\delta u=\delta v \mathe^{-\imag\Omega t}$, such that Equation~\eqref{eq:periodic_lsa0} becomes
\begin{equation}
\imag \frac{d}{dt}\delta v=\left(\mathcal{A}- \Omega\right)\delta v+\mathcal{B}\overline{\delta v}.
\label{eq:periodic_lsa1}
\end{equation}
Equation~\eqref{eq:periodic_lsa1} can be further simplified by breaking it up into real and imaginary parts.  We take $\delta v=(a+\imag b,u+\imag v)^T$, where $a$, $b$, $u$, and $v$ are real-valued functions of $t$.  We also take $\mathcal{B}=\widehat{\beta}+\imag\widehat{\gamma}$, where $\widehat{\beta}$ and $\widehat{\gamma}$ are real matrices.  Similarly, we write $\Re(\mathcal{A}-\Omega):=\widehat{\mathcal{H}}$.  Finally, we note that $\Im(\mathcal{A}-\Omega)=\mu_0\mathbb{I}+\mathcal{G}$.  Using this notation, we obtain the following real system of equations:
\begin{equation}
\frac{d}{dt}\left(\begin{array}{c}a\\b\\u\\v\end{array}\right)=
\left(\begin{array}{cccc}
\mu_0-g_1+\widehat{\gamma}_{11}&\widehat{\mathcal{H}}_{11}-\widehat{\beta}_{11}&0&A\\
-\widehat{\mathcal{H}}_{11}-\widehat{\beta}_{11}&\mu_0-g_1-\widehat{\gamma}_{11}&-A&0\\
0&A&\mu_0-g_2+\widehat{\gamma}_{22}&\widehat{\mathcal{H}}_{22}-\widehat{\beta}_{22}\\
-A&0&-\widehat{\mathcal{H}}_{22}-\widehat{\beta}_{22}&\mu_0-g_2-\widehat{\gamma}_{22}\end{array}\right)
\left(\begin{array}{c}a\\b\\u\\v\end{array}\right).
\label{eq:periodic_lsa2}
\end{equation}
The results of Theorem~\ref{thm:exact1} are now used to fill out the entries in the matrix in Equation~\eqref{eq:periodic_lsa2} with closed-form expressions involving the model parameters:
\begin{equation}
\frac{d}{dt}\left(\begin{array}{c}a\\ \\b\\ \\u\\ \\v\end{array}\right)=
\left(\begin{array}{c|c|c|c}
\mu_0-g_1\phantom{aaaa}&aR^2(2r^2-1)&0&A\\
+aR^2r^2\sin 2\varphi & -aR^2r^2\cos 2\varphi & & \\\hline
-aR^2(2r^2-1)&\mu_0-g_1&-A&0\\
-aR^2r^2\cos 2\varphi & -aR^2r^2\sin 2\varphi & & \\\hline
0&A&\mu_0-g_2&-aR^2(2r^2-1) \\
 & &         & -aR^2(1-r^2) \\\hline
-A&0&aR^2(2r^2-1)&\mu_0-g_2\\
  &  & -aR^2(1-r^2) &      \end{array}\right)
\left(\begin{array}{c}a\\\\b\\\\u\\\\v\end{array}\right)
\label{eq:periodic_lsa3}
\end{equation}
Based on this expression of the stability equations, we prove the following theorem:
\begin{theorem}
The oscillatory state in Equation~\eqref{eq:lovely_theorem} is asymptotically unstable.
\end{theorem}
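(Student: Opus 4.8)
The plan is to show that the constant $4\times4$ real matrix $M$ appearing in Equation~\eqref{eq:periodic_lsa3} has at least one eigenvalue with strictly positive real part. The first step I would take is to exploit the global phase invariance $u\mapsto\mathe^{\imag\theta}u$ of Equation~\eqref{eq:qm_model_nonlin_local}: differentiating the one-parameter family $\mathe^{\imag\theta}R\mathe^{-\imag\Omega t}u_0$ at $\theta=0$ produces, in the ansatz $\delta u=\delta v\,\mathe^{-\imag\Omega t}$, a static (zero-growth) perturbation $\delta v\propto\imag u_0$ of Equation~\eqref{eq:periodic_lsa1}. Hence the real representation of $\imag u_0$ lies in the kernel of $M$, so $M$ has the eigenvalue $0$ and $\det M=0$. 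The characteristic polynomial therefore factorises as $p(\sigma)=\sigma\,q(\sigma)$ with $q(\sigma)=\sigma^3-E_1\sigma^2+E_2\sigma-E_3$, where $E_1=\mathrm{tr}\,M$ and $E_3$ is the sum of the four $3\times3$ principal minors of $M$. Asymptotic instability is then equivalent to $q$ possessing a root with positive real part.

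For this I would invoke two inexpensive sufficient conditions. First, the three roots of $q$ sum to $E_1=\mathrm{tr}\,M$, so if $\mathrm{tr}\,M>0$ at least one root has positive real part. Second, $q(0)=-E_3$ while $q(\sigma)\to+\infty$ as $\sigma\to+\infty$, so if $E_3>0$ the intermediate value theorem furnishes a positive real root. The entire argument thus reduces to evaluating the two scalars $\mathrm{tr}\,M$ and $E_3$ and showing that for every admissible parameter set at least one of them is positive.

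The trace is immediate: using $\mu_0-g_1=-(g_1-g_2)x^2$ and $\mu_0-g_2=(g_1-g_2)r^2$ from Theorem~\ref{thm:exact1}, the two manifestly time-dependent terms cancel and $\mathrm{tr}\,M=2(g_1-g_2)(2r^2-1)$. For $E_3$ I would write $M$ in $2\times2$ block form $\left(\begin{smallmatrix}P & A\mathcal{J}\\ A\mathcal{J} & T\end{smallmatrix}\right)$ with $\mathcal{J}=\left(\begin{smallmatrix}0&1\\-1&0\end{smallmatrix}\right)$ (the two off-diagonal blocks being equal), expand the four principal minors, and collect them into the compact combination $E_3=\det T\,\mathrm{tr}\,P+\det P\,\mathrm{tr}\,T+A^2(\mathrm{tr}\,P+\mathrm{tr}\,T)$. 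Substituting the explicit entries together with the key identity $aR^2=A\cos\varphi/(rx)$ (which follows by combining the expressions for $R^2$ and $\varphi$ in Theorem~\ref{thm:exact1}) and $A\sin\varphi=-(g_1-g_2)rx$, the diagonal blocks reduce to $\det P=\kappa x^2(x^2-2r^2\cos^2\varphi)$ and $\det T=\kappa r^2(r^2-2x^2\cos^2\varphi)$ with $\kappa=A^2/(r^2x^2)$.

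The main obstacle is precisely this algebraic reduction: the individual minors are unwieldy, and it is only after the Theorem~\ref{thm:exact1} relations are inserted that the terms quartic in $\cos\varphi$ cancel, collapsing $E_3$ to the single clean expression $E_3=4(g_1-g_2)A^2\cos^2\varphi\,(x^2-r^2)$. The conclusion is then structural. Since $r^2+x^2=1$, the two scalars carry opposite sign patterns: $\mathrm{tr}\,M>0\iff r^2>\tfrac12$, whereas (noting $\cos^2\varphi=1-X^2>0$) one has $E_3>0\iff x^2>r^2\iff r^2<\tfrac12$. The borderline $r^2=\tfrac12$ is excluded because in Case 2 the inequality $(g_1-g_2)^2>4A^2$ forces the constraint $X^2=(g_1-g_2)^2r^2x^2/A^2<1$ to require $r^2x^2<\tfrac14$, hence $r^2\neq\tfrac12$. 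Thus for every admissible parameter set exactly one of the two sufficient conditions holds, $q$ always has a root with positive real part, and the oscillatory state of Equation~\eqref{eq:lovely_theorem} is asymptotically unstable.
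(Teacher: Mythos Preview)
Your proof is correct and the core strategy---factor $\sigma=0$ out of the characteristic polynomial and then show the residual cubic has a root with positive real part by examining $q(0)$---is exactly what the paper does. Your $E_3$ is the paper's $-m_2$; using $\cos^2\varphi=1-X^2$ and $x^2-r^2=-2\bigl(\mu_0-\tfrac12(g_1+g_2)\bigr)/(g_1-g_2)$, your expression $E_3=4(g_1-g_2)A^2\cos^2\varphi\,(x^2-r^2)$ collapses to $-8A^2(1-X^2)\bigl[\mu_0-\tfrac12(g_1+g_2)\bigr]$, matching the paper verbatim.

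There are two genuine differences worth noting. First, you obtain $\det M=0$ conceptually from the global phase symmetry $u\mapsto\mathe^{\imag\theta}u$, whereas the paper simply reports it as the output of a symbolic-algebra computation; your route is more informative and transfers to other problems. Second, and more substantively, the paper's proof invokes the subcritical operating range of Equation~\eqref{eq:transient_growth_sweet}, which forces $\mu_0<\tfrac12(g_1+g_2)$ and hence $m_2<0$ throughout, so a single sign check suffices. Your argument does not assume this: by pairing the $E_3>0$ criterion (active when $r^2<\tfrac12$) with the $\mathrm{tr}\,M>0$ criterion (active when $r^2>\tfrac12$), and ruling out $r^2=\tfrac12$ via the Case~2 inequality $(g_1-g_2)^2>4A^2$, you establish instability over the full hypothesis set of Theorem~\ref{thm:exact1}, i.e.\ for all $g_2<\mu_0<g_1$ with $X^2<1$. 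So your version is strictly more general than the paper's, at the cost of having to compute the principal-minor sum by hand rather than deferring to a computer-algebra system.
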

\begin{proof}We write $(a,b,u,v)^T=\mathe^{\sigma t}(a_0,b_0,u_0,v_0)$, and search for eigensolutions where $a_0,b_0,u_0$ and $v_0$ are real.  Calling the matrix in Equation~\eqref{eq:periodic_lsa3} $\mathcal{M}$, it can be shown by direct computation that $\mathrm{det}(\mathcal{M})=0$ identically (e.g. using a symbolic algebra package), such that the characteristic equation of the matrix $\mathcal{M}$ is
\begin{equation}
\sigma\left[\sigma^3-\mathrm{tr}(\mathcal{M})\sigma^2+m_1\sigma+m_2\right]=0,
\label{eq:charpoly4}
\end{equation}
where $m_1$ and $m_2$ are real constants that depend on the problem parameters.  A further direct computation yields
\begin{equation}
m_2=8A^2(1-X^2)\left[\mu_0-\tfrac{1}{2}(g_1+g_2)\right].
\end{equation}
Under the twin assumptions that we are in the operating range given by Equation~\eqref{eq:transient_growth_sweet} and that $0<X^2<1$, we have that $m_2<0$, such that the cubic polynomial in the square brackets $\left[\cdots\right]$ in Equation~\eqref{eq:charpoly4} has at least one real \textit{positive} root, with a corresponding real eigenvector $(a_0,b_0,u_0,v_0)^T$.  The corresponding real eigenvalue therefore has $\sigma>0$, hence the system~\eqref{eq:periodic_lsa3} is asymptotically unstable.\qed
\end{proof}

\subsection{Physical relevance of the two-level system}

Remarkably, the linear two-level system~\eqref{eq:qm_model} is directly applicable in optics, and describes the interaction between two coupled optical channels.  The starting-point is the complex Ginzburg--Landau equation for propagation along the optical axis ($z$, Equation~\eqref{eq:cgl_optical}).  An optical potential is prescribed with variations in the orthogonal direction ($x$) such that the beam is strongly focused at two $x$-locations (the channels).  Much like the `linear combination of atomic orbitals' approach in Quantum Mechanics~\cite{cohen1977quantum}, the electric field can then written as a superposition of two states, each corresponding to localization at one of the preferred $x$-positions.  The two amplitudes in the superposition vary in time and describe `hopping' as the intensity maximum of the beam switches between the channels.  The evolution equation for the amplitudes is formally identical to Equation~\eqref{eq:qm_model}.  The non-normality of Equation~\eqref{eq:qm_model} is also a feature in optical systems, in particular in systems where spontaneous PT-symmetry breaking occurs~\cite{ruter2010observation}.
Also, concerning the addition of the nonlinear term in Equation~\eqref{eq:qm_model_nonlin_local}, a very similar set of equations holds in a one-dimensional optical array with a Kerr nonlinearity~\cite{lederer2008discrete}.  The proposed model here represents an extension of such coupled optical systems to non-Hermitian (i.e. PT-symmetry-breaking) optical media.  A nice feature therefore of this model is that it opens up the possibility that some fluid-like phenomena (transient growth and subcritical transition) can be demonstrated in an optical context.



\section{Two-level system -- Illustrative numerical studies}
\label{sec:two_level_dns}

Although we have completely characterized the nonlinear oscillatory state and its stability by analytical means, it is useful nonetheless to consider numerical simulations.  First, the numerical method developed in the present section for the computation of the nonlinear eigenstate can be extended to more complicated systems where exact closed-form solutions are not available.  Also, assuming the system is prepared in the nonlinear (unstable) oscillatory base state, it is of interest to know what is the ultimate fate of the system if it experiences small perturbations: whether the system reverts to the linearized dynamics, whether another nonlinear coherent structure appears, or whether nonlinear unbounded growth occurs.  This question is answered via a numerical simulation in the present section.

We first of all solve the eigenvalue problem for Equation~\eqref{eq:nonlin_eigs} numerically.  We use a modified Rayleigh-Quotient iterative (RQI) method, modified to take account both of the non-normality of the operator $\mathcal{L}=\mathcal{H}+\imag\mathcal{G}$, and the nonlinearity in Equation~\eqref{eq:nonlin_eigs}.  Modified (`two-sided') RQI methods exist for linear non-normal problems: the standard Ostrawski method exhibits local cubic convergence; other variants are globally convergent~\cite{parlett1974}.  We extend the standard Ostrawski method here to the nonlinear problem~\eqref{eq:qm_model_nonlin_local}: the initial guess $(u^{(0)},u^{\dagger(0)},\Omega^{(0)})$ for the iterative method is the solution of Equation~\eqref{eq:nonlin_eigs} with $R=0$ (i.e. the (known) solution of the linearized problem).  Here the dagger refers to solutions of the adjoint problem.    Subsequent (and improved) guesses $(\psi^{(n+1)},\psi^{\dagger(n+1)},\Omega^{(n+1)})$ with $n=0,1,2,\cdots$ are found as follows:
\begin{subequations}
\begin{eqnarray}
\mathcal{N}^{(n)}&=&aR^2\left(\begin{array}{cc}|u_1^{(n)}|^2&0\\0&|u_2^{(n)}|^2\end{array}\right),\\
u^{(n+1)}&=&
\frac{\left(\mathcal{L}+\mathcal{N}^{(n)}-\Omega^{(n)}\mathbb{I}\right)^{-1}u^{(n)}}{\|\left(\mathcal{L}+\mathcal{N}^{(n)}-\Omega^{(n)}\mathbb{I}\right)^{-1}u^{(n)}\|_2},
\label{eq:rqi_direct}\\
u^{\dagger(n+1)}&=&\frac{\left[\left(\mathcal{L}+\mathcal{N}^{(n)}-\Omega^{(n)}\mathbb{I}\right)^{\dagger}\right]^{-1}u^{\dagger (n)}}{\|\left[\left(\mathcal{L}+\mathcal{N}^{(n)}-\Omega^{(n)}\mathbb{I}\right)^{\dagger}\right]^{-1}u^{\dagger (n)}\|_2},
\label{eq:rqi_adjoint}\\
\mathcal{N}^{(n+1)}&=&aR^2\left(\begin{array}{cc}|u_1^{(n+1)}|^2&0\\0&|u_2^{(n+1)}|^2\end{array}\right),\\
\Omega^{(n+1)}&=&\langle u^{\dagger (n+1)},\left(\mathcal{L}+\mathcal{N}^{(n+1)}\right)u^{(n+1)}\rangle.
\label{eq:rqi_twolevel}
\end{eqnarray}
\end{subequations}
%
The numerical method is deemed to have converged at iteration level $n_\mathrm{c}$ when
\[
\|\Omega^{(n_\mathrm{c})}\psi^{(n_\mathrm{c})}-\left(\mathcal{L}+\mathcal{N}^{(n_\mathrm{c})}\right)\psi^{(n_\mathrm{c})}\|_2\leq \mathrm{tol},
\]
where $\mathrm{tol}=10^{-16}$ is chosen in the present work.
The rigorous analysis of the convergence of the method~\eqref{eq:rqi_twolevel} is beyond the scope of this work: in practice, the method~\eqref{eq:rqi_twolevel} converges for all of the examples considered in this Report; also, the numerical method agrees exactly (i.e. to within machine precision, using IEEE double precision) with the analytical expressions derived in Theorem~\ref{thm:exact1}.

Results of the nonlinear eigenvalue analysis are presented now.  Results of the numerical and analytical approaches are identical and are referred to interchangeably.  We use the parameter values 
\begin{multline}
E_0=1,\qquad A=0.1,\qquad g_2=0.01,\qquad g_1=g_2+(4A^2+0.01^2)^{1/2},\qquad a=1,\\
\mu_0=0.95\left[\tfrac{1}{2}\left(g_1+g_2\right)-\sqrt{\left(g_1-g_2\right)^2-4A^2}\right].
\label{eq:params_all}
\end{multline}
Also, the initial condition was taken to be $(A_0/\sqrt{2})(\imag,1)^T$, where $A_0$ is taken to be small in an appropriate sense, so that the early-stage dynamics correspond to linear theory.
The parameters~\eqref{eq:params_all} correspond to a subcritical case: both eigenfrequencies of the linearized dynamics possess negative imaginary parts.  However, the same parameters are appropriate for (linear) transient growth, as outlined in Equation~\eqref{eq:transient_growth_sweet}.    The nonlinear eigenvalue analysis is presented in the first instance in Figure~\ref{fig:spec_toy_model}.
\begin{figure}[htb]
\centering
\subfigure[]{\includegraphics[width=0.47\textwidth]{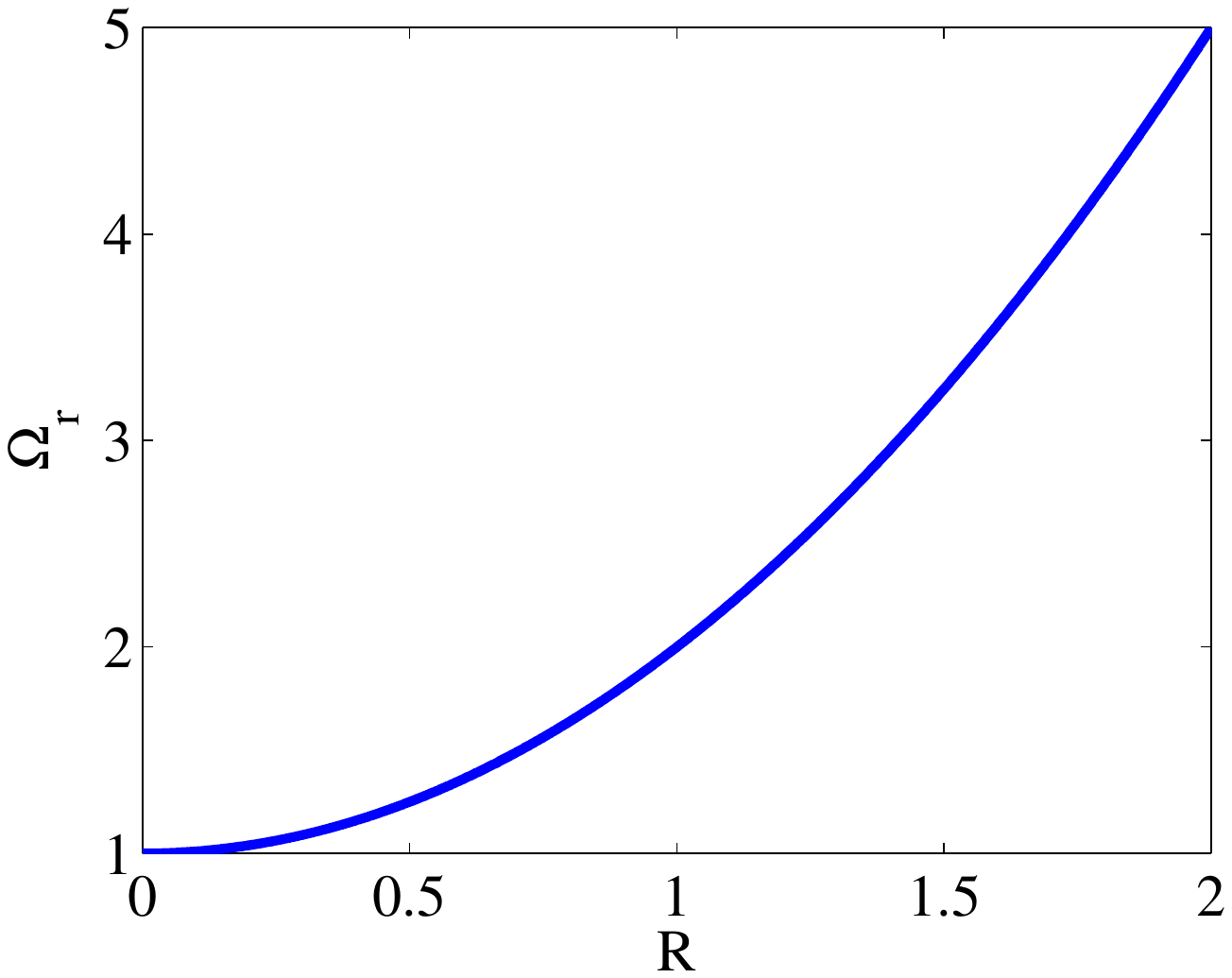}}
\subfigure[]{\includegraphics[width=0.47\textwidth]{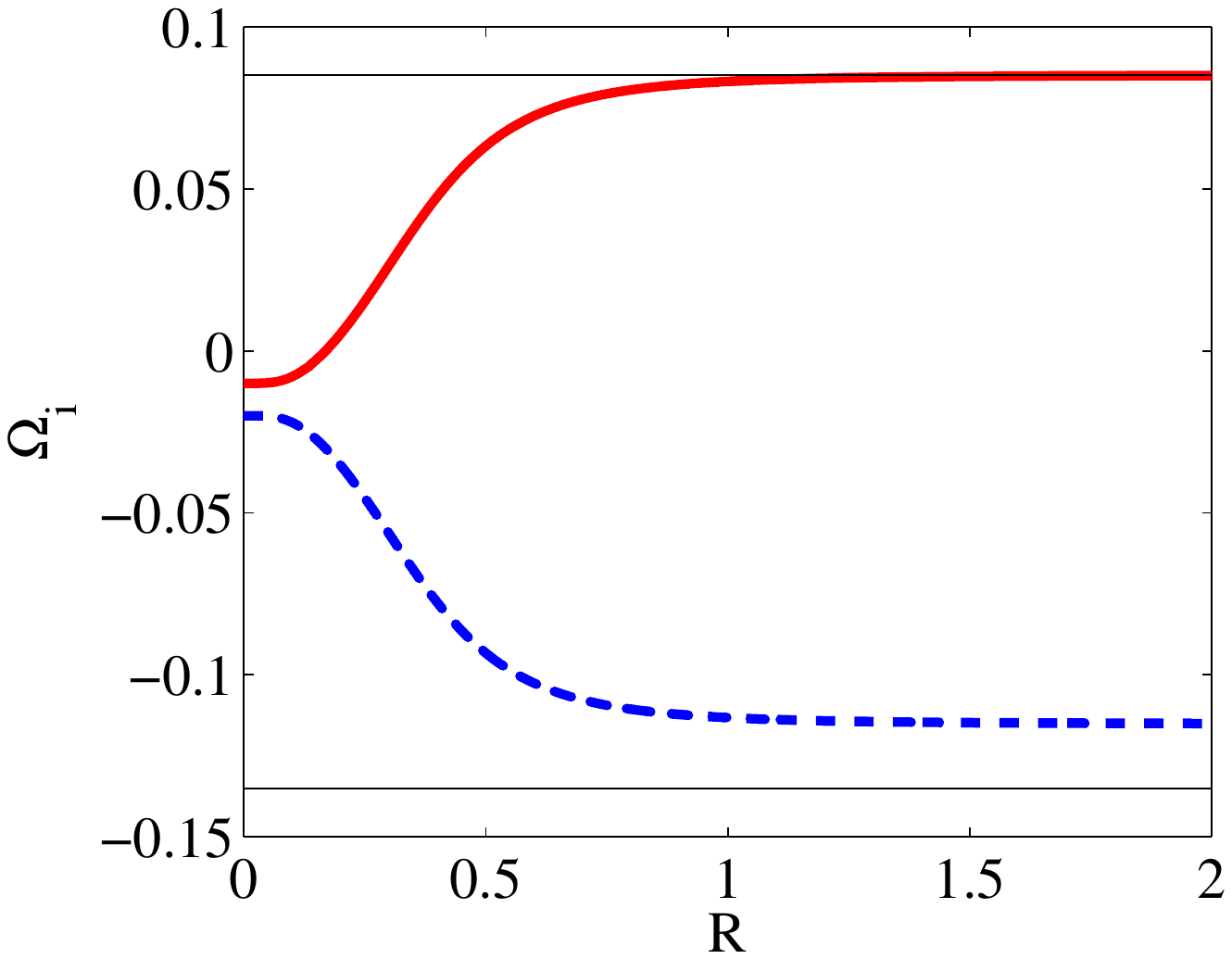}}
\caption{Eigenvalue analysis of Equation~\eqref{eq:nonlin_eigs}, for the parameter values given in the text.  At $R\approx 0.169176816128673$ (analytical calculation, to machine precision), a self-consistent solution is obtained along one of the eigenvalue branches.  }
\label{fig:spec_toy_model}
\end{figure}
The parameter $R$ is increased, starting at $R=0$.  At $R=0$, the linear problem is recovered, with two linearly independent (but nonorthogonal) eigenvalues with identical real parts but distincitive imaginary parts.  This feature is retained at finite $R$: the imaginary parts of the two distinct eigenvalues branch off and become increasingly large and increasingly small.  On the increasing branch, there is a point $R_\mathrm{c}$ at which $\Im[\Omega(R)]=0$; this is the self-consistent nonlinear eigensolution of Equation~\eqref{eq:qm_model_nonlin_local}.  Also, for each finite $R$-value, the cross product of the two eigenvalues is computed: this never vanishes.  Thus, the two eigenvalues remain linearly independent: there is no nonlinear diabolic point.  Note finally that the upper bound $\Im(\Omega)\leq \mu_0-g_2$ is sharp as $R\rightarrow\infty$, corresponding to $r=0$ and $x=1$, hence $\Re(\Omega)\sim E_0+aR^2$ as $R\rightarrow\infty$.

It is of interest to examine whether the periodic unstable state has any bearing on the dynamics of Equation~\eqref{eq:qm_model_nonlin_local}.  Thus, temporal numerical simulations of the same equation are performed, and the results are shown in Figure~\ref{fig:qm_nonlin_modelB}.   The numerical solutions are obtained using an eight-order accurate Runge--Kutta scheme~\cite{Govorukhin2003}.
\begin{figure}[htb]
\centering
\subfigure[]{\includegraphics[width=0.48\textwidth]{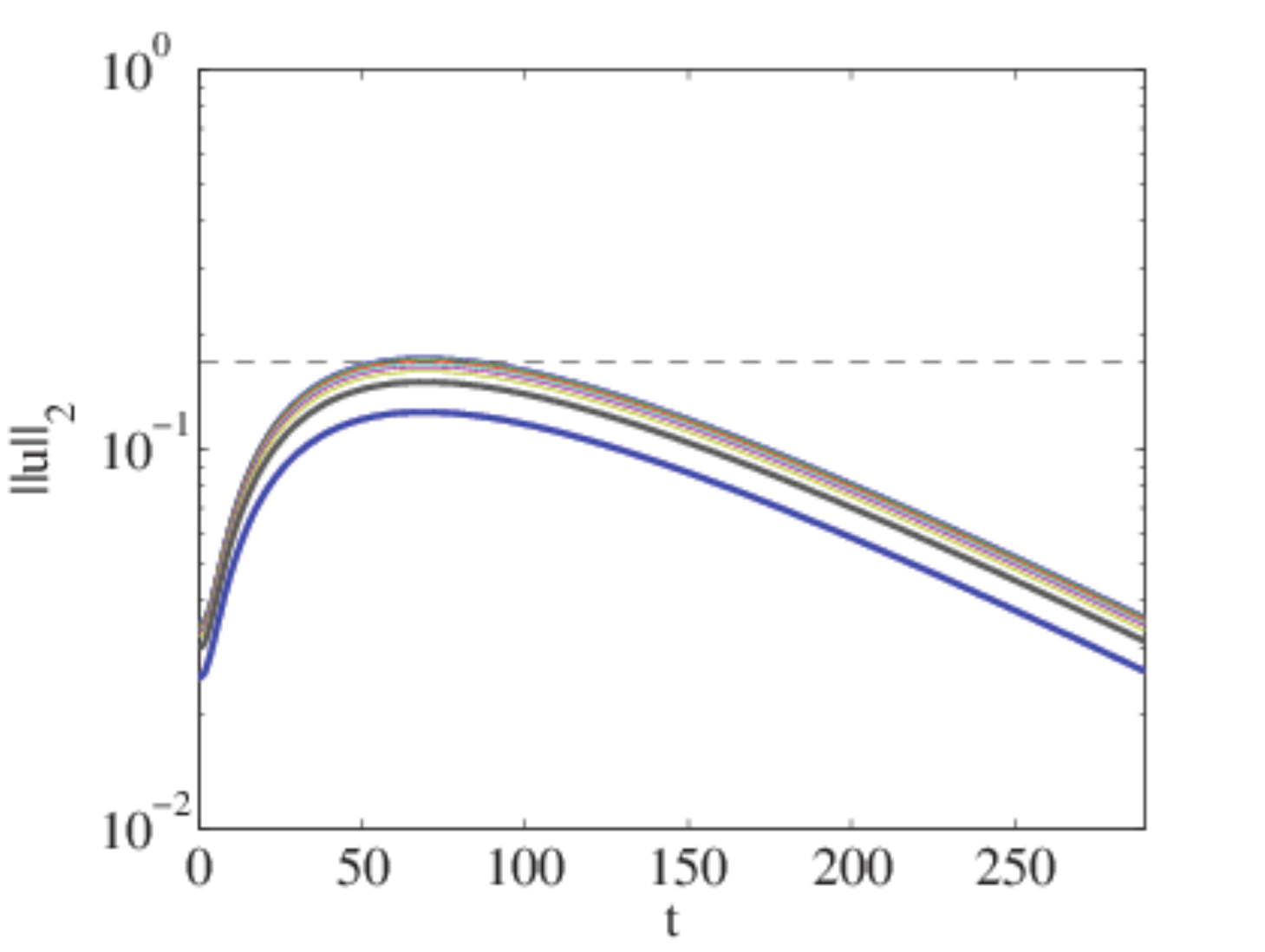}}
\subfigure[]{\includegraphics[width=0.48\textwidth]{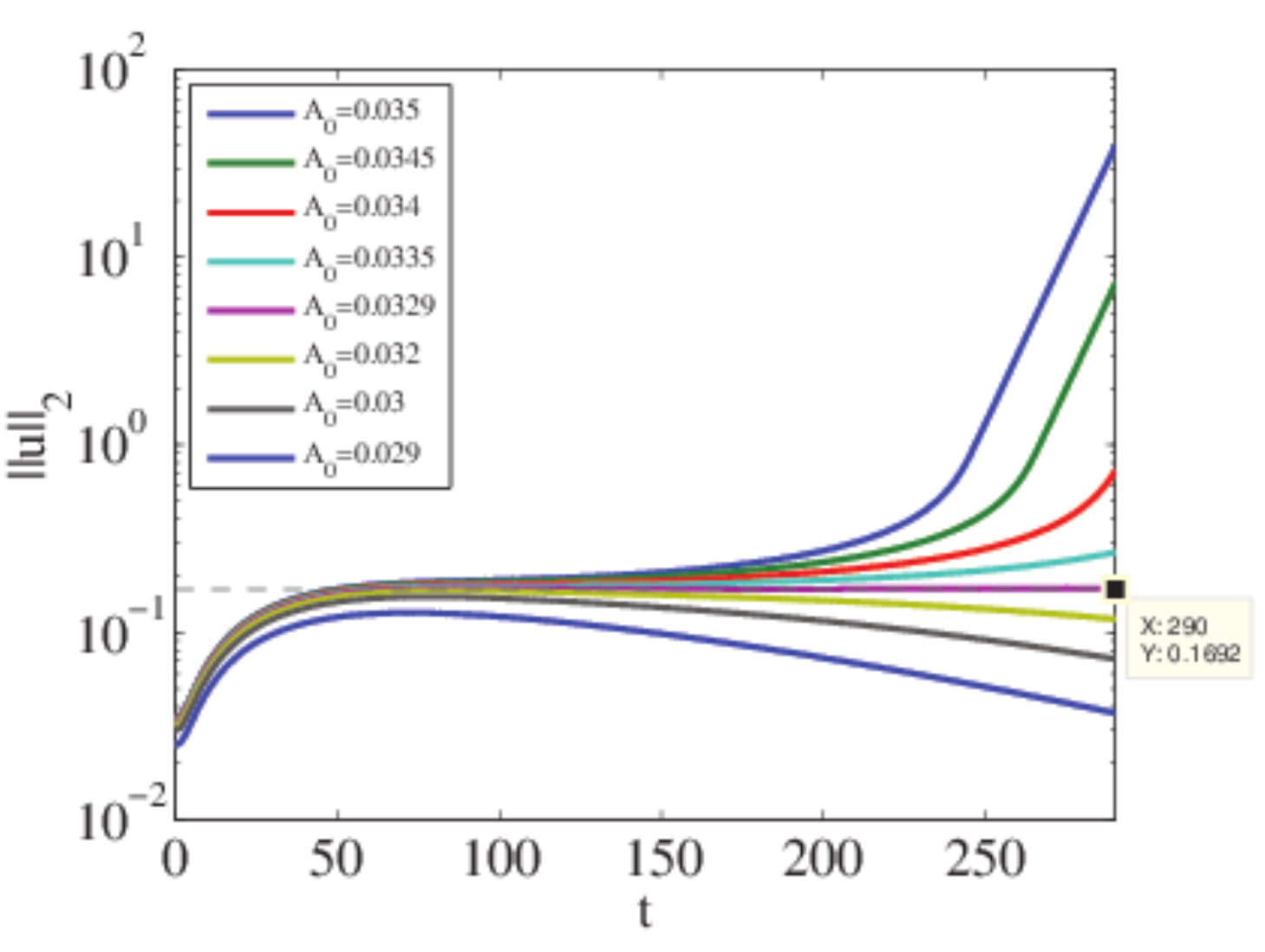}}
\caption{Solutions of (a) the  non-Hermitian linear Schrodinger equation; (b) the non-Hermitian {\textit{nonlinear}} Schr\"odinger equation.  The initial data are the parameters are the same in (a) and (b).}
\label{fig:qm_nonlin_modelB}
\end{figure}
Solutions under the linearized dynamics are shown in Figure~\ref{fig:qm_nonlin_modelB}(a).  Significant transient growth occurs for the initial data and parameter values in Equation~\eqref{eq:params_all}, in spite of the negatativity of the linear growth rates, i.e. $\Im\left[\mathrm{spec}\left(\mathcal{L}\right)\right]\leq 0$: the maximum amplification is over $500\%$ in the figure.  In contrast, solutions under the nonlinear dynamcis are shown in  Figure~\ref{fig:qm_nonlin_modelB}(b).  
%
%
The linear transient growth operates in the same manner as in Figure~\ref{fig:qm_nonlin_modelB}, up to $t\approx 80$.  Thereafter, a range of possibilities exists, depending on the initial amplitude of the disturbance.  First, one notes that there is a critical initial amplitude $A_{0\mathrm{c}}$ such that  $\lim_{t\rightarrow\infty}\|u\|_2=R_\mathrm{c}$;
%
%
this is precisely the critical radius  for which the nonlinear oscillatory state exists, as in the analysis in Equation~\eqref{eq:nonlin_eigs}.  Also, for initial amplitudes above this threshold, the same nonlinear oscillatory state is excited, but is subsequently destabilized, and indefinite exponential growth takes place.  Finally, for initial amplitudes below the same threshold, the nonlinear oscillatory state cannot be maintained, and the linear dissipation eventually causes the disturbance to decay to zero.  For the  case $A_0>A_{0\mathrm{c}}$, one obtains the eery result that the disturbance grows exponentially fast, in spite of the fact that the eigenfrequencies of the linearized problem possess negative imaginary parts.  This is a subcritical transition to (nonlinear) instability.

These results point to the following conclusion: linear transient growth by itself is not sufficient to induce a transition from a regime of small-amplitude disturbances to one of nonlinear instability in the model equation~\eqref{eq:qm_model_nonlin_local}. Rather, the transient growth must have available an excitable nonlinear eigenmode, and moreover,  the operating parameter regime must be such that the nonlinear eigenmode is linearly (asymptotically) unstable or neutral.  These findings are consistent with the literature on a vastly more complicated system, namely the subcritical transition to turbulence in parallel shear flows~\cite{Schmid2001,Grossmann2000}, where a combination of transient growth and the secondary instability of coherent states~\cite{Waleffe1995,Waleffe1998} is required for subcritical transition to turbulence. 

\section{Global mode analysis of the inhomogeneous complex Ginzburg--Landau equation}
\label{sec:global_modes}

The main aim of this section is to construct the small-amplitude global mode (or modes) of the complex Ginzburg--Landau (CGL) equation with a cubic linearity corresponding to a situation near criticality, using perturbation theory.   The same theory can also be used to determine the stability of the global modes.  This will serve as a basis for numerically determining the finite-amplitude global modes in later sections.

\subsection{The linear model with inhomogeneity}
The linearized CGL equation with parallel flow and quadratic inhomogeneity is presented here as follows:
\begin{subequations}
\begin{equation}
\imag\frac{\partial u}{\partial t}=\mathcal{H}+\imag\left(\mu_0+\mathcal{G}\right)u,\qquad (x,t)\in\mathbb{R}\times(0,\infty),\qquad \mu_0\in\mathbb{R},
\label{eq:cgl_linear}
\end{equation}
with initial and boundary conditions
\begin{equation}
u(x,t=0)=u_0(x),\qquad \lim_{|x|\rightarrow\infty}|u(x,t)|=0,\text{ }t\text{ finite},
\end{equation}
where
\begin{equation}
\mathcal{H}=-\imag U\frac{\partial}{\partial x}-\gmi\frac{\partial^2}{\partial x^2}+\tfrac{1}{2}\mu_{2\mathrm{i}} x^2,\qquad
\mathcal{G}=\gmr\frac{\partial^2}{\partial x^2}-\tfrac{1}{2}\mu_{2\mathrm{r}} x^2.
\end{equation}%
\label{eq:cgl_linear_all}%
\end{subequations}%
and $\mu_{2\mathrm{r}}>0$ corresponding to $\mathcal{G}$ giving a dissipative contribution to the dynamics.  Also, we take $\mu_0\in\mathbb{R}$.  A detailed analysis of Equations~\eqref{eq:cgl_linear_all}  reveals the eigenvalues to be 
\begin{equation}
\omega_n=\imag\left[\mu_0-\frac{U^2}{4\gamma}-\gamma\chi^2(2n+1)\right],\qquad n=0,1,\cdots,
\label{eq:cgl_eigs_formula_linear}
\end{equation}
where $\chi^2=\sqrt{\mu_2/2\gamma}$ and $\Re(\chi^2)>0$ 
(see References~\citep{Chomaz1987,Chomaz1997}).  Hence, the growth rate of the global modes is given by
\[
\Im(\omega_n)=\mu_0-\frac{U^2\gmr}{4|\gamma|^2}-(2n+1)|\gamma||\chi^2|\cos\left(\arg(\gamma)+\arg(\chi^2)\right).
\]
We specialize to a situation where $n=0$ (the `ground state') corresponds to the most unstable state.  In other words,
\begin{equation}
\cos\left(\arg(\gamma)+\arg(\chi^2)\right)>0.
\label{eq:cgl_eigs_formula_linear_cond}
\end{equation}
Also, the  eigenfunctions are
\begin{equation}
\psi_n(x)=\zeta_n\mathe^{(U/2\gamma)x-\chi^2x^2/2}H_n(\chi x),
\label{eq:cgl_eigfns_formula_linear}
\end{equation}
and the adjoint eigenfunctions are
\begin{equation}
\psi_n^\dagger(x)=\xi_n\overline{\psi_n(x)}\mathe^{-Ux/\overline{\gamma}},
\label{eq:cgl_eigfns_adj_formula_linear}
\end{equation}
where $H_n$ denotes the $n^{\mathrm{th}}$-order Hermite polynomial, with $n=0,1,\cdots$.  The eigenfunctions and their adjoints are orthogonal in the sense that $\langle \psi_n^\dagger,\psi_m\rangle=\delta_{nm}\propto \delta_{nm}$.  However, the normalizations $(\zeta_n,\xi_n)$ are chosen such that $\|\psi_n\|_2=\|\psi_n^\dagger\|_2=1$.
Here
\[
\langle f,g\rangle:=\int_{-\infty}^\infty \mathd x \overline{f}g
\]
for all complex-valued square-integrable functions $f,g$ of a single real variable $x$, with $x\in\mathbb{R}$,
and
\[
\|f\|_2=\left[\langle f,f\rangle\right]^{1/2}
\]
denotes the usual $L^2$ norm for such functions.

Based on Equations~\eqref{eq:cgl_linear_all}, the following evolution equation for the $L^2$ norm of the general solution $u(t)$ is obtained:
\begin{equation}
\tfrac{1}{2}\frac{d}{dt}\|u\|_2^2=\langle u,\left(\mu_0+\mathcal{G}\right)u\rangle.
\label{eq:cgl_evol_l2}
\end{equation}
The quadratic form on the right-hand side of Equation~\eqref{eq:cgl_evol_l2} can be evaluated in the eigenbasis of the operator $\mathcal{G}$ and a Poincar\'e-type bound on the growth of the $L^2$ norm is obtained:
\begin{equation}
\tfrac{1}{2}\frac{d}{dt}\|u\|_2^2\leq \left(\mu_{0}-\sqrt{\frac{\mu_{2\mathrm{r }}\gmr}{2}} \right)\|u\|_2^2.
\label{eq:evol_l2_bound}
\end{equation}
Thus, in order to have any growth in the $L^2$ norm (whether transient or asymptotic growth), one must have $\mu_{0}>\sqrt{\mu_{2\mathrm{r }}\gmr/2}$. 
Thus, in order to achieve transient but non-asymptotic growth, one must operate in the following parameter regime for the forcing parameter $\mu_0$
\begin{equation}
\sqrt{\frac{\mu_{2\mathrm{r }}\gmr}{2}}<\mu_{0}< 
\mu_{0\mathrm{c}},\qquad
\mu_{0\mathrm{c}}=\frac{U^2\gmr}{4|\gamma|^2}+\sqrt{\frac{|\mu_2||\gamma|}{2}}\cos(\arg(\gamma)+\arg(\chi^2)).
\label{eq:cgl_eigs_stability_range}
\end{equation}
It is this regime that concerns us throughout the rest of the present section.

\subsection{Introduction of nonlinear terms}
\label{sec:global_modes:nonlin}

We  modify Equation~\eqref{eq:cgl_linear} by introducing a nonlinear term as follows:
\begin{equation}
\imag\frac{\partial u}{\partial t}=\mathcal{H}u+\imag\left(\mu_0+\mathcal{G}\right)u+|u|^2 s(x)u,
\label{eq:cgl_nonlinear}
\end{equation}
where the functional form of $s(x)$ will be discussed later on in this section.  As in Section~\ref{sec:two_level}, we seek a solution of Equation~\eqref{eq:cgl_nonlinear} as follows:
\begin{equation}
u(x,t)=Ru_0(x)\mathe^{-\imag\Omega t},\qquad R\in\mathbb{R},\qquad \|u_0\|_2=1,\qquad \Omega\in\mathbb{R},
\label{eq:cgl_nonlinear_ansatz}
\end{equation}
where $R> 0$ is a real parameter.  
Substitution of Equation~\eqref{eq:cgl_nonlinear_ansatz} into Equation~\eqref{eq:cgl_nonlinear_ansatz} yields the following (global, nonlinear) eigenvalue equation:
\begin{equation}
\Omega u_0(x)=\left[\mathcal{H}+\imag\left(\mu_0+\mathcal{G}\right)+R^2|u_0(x)|^2s(x)\right]u_0(x).
\label{eq:cgl_nonlinear_evp}
\end{equation}
As in Section~\ref{sec:two_level}, we seek eigensolutions of Equation~\eqref{eq:cgl_nonlinear_evp} with $\Im(\Omega)=0$.  If such a state exists, then for $u(x,t)=Ru_0(x)\mathe^{-\imag\Omega t}$,
\[
\tfrac{1}{2}\frac{d}{dt}\|u\|_2^2=R^2\langle u_0,\left(\mu_0+\mathcal{G}\right)u_0\rangle=R^2\Im(\Omega)=0.
\]
The aim of the present subsection is to demonstrate the existence of such oscillatory states for $s(x)=1$, and to demonstrate further the inherent instability of such states.  The method to be used in the demonstration is non-singular time-independent perturbation theory.  To make progress, we introduce a small parameter by specifying $\mu_0$ as follows:
\begin{equation}
\mu_{0}=\mu_{0\mathrm{c}}+\epsilon^2\Delta\mu,\qquad 0<\epsilon \ll 1,
\label{eq:cgl_eigs_stability_range1}
\end{equation}
where $\Delta\mu=\pm 1$.  For $\Delta\mu=-1$, the system is just  below criticality wherein the system is asymptotically stable but wherein transient growth can still occur, while for $\Delta\mu=1$ the system is just above criticality and is linearly unstable.  We have the following theorem:
\begin{theorem}
\label{thm:oscillatory_cgl}
Consider the nonlinear eigenvalue problem~\eqref{eq:cgl_nonlinear_evp} with $s(x)=1$ and $a>0$.  Define $\theta=\arg(\chi^2)$ and let
\begin{equation}
f(\theta,k^2)=
\sqrt{\frac{\mathe^{\imag\theta}\cos\theta}{\cos\theta+\mathe^{\imag\theta}}}\exp\left(-\frac{k^2\mathe^{\imag\theta}}{\cos\theta\left(\cos\theta+\mathe^{\imag\theta}\right)}\right),\qquad
k^2=U^2[\Re(1/\gamma)]^2/4|\chi^2|.
\end{equation}
Assume that the following constraints are applied to the system parameters: 
\begin{enumerate}[label=(\roman*)]
\item $\Delta\mu=-1$ and $f_{\mathrm{i}}(\theta,k^2)>0$ (subcritical), or 
$\Delta\mu=1$ and $f_{\mathrm{i}}(\theta,k^2)<0$ (supercritical);
\item Equations~\eqref{eq:cgl_eigs_formula_linear_cond},~\eqref{eq:cgl_eigs_stability_range} and~\eqref{eq:cgl_eigs_stability_range1} hold;
\end{enumerate}
Then, to first order in a perturbation theory in the small parameter $\epsilon$, there exists at least one value of the parameter $R$ such that the nonlinear eigenvalue problem has an oscillatory solution.
\end{theorem}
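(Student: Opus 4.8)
The plan is to read Equation~\eqref{eq:cgl_nonlinear_evp} (with $s\equiv1$) as a perturbation of the linear operator $\mathcal{L}_0:=\mathcal{H}+\imag(\mu_0+\mathcal{G})$, whose ground-state eigenpair $(\omega_0,\psi_0)$ and adjoint eigenfunction $\psi_0^\dagger$ are given by Equations~\eqref{eq:cgl_eigs_formula_linear}--\eqref{eq:cgl_eigfns_adj_formula_linear}, and to apply the biorthogonal (non-degenerate Rayleigh--Schr\"odinger) perturbation theory appropriate to this non-normal operator. The eigenvalues $\omega_n$ are distinct in $n$, so the ground state is non-degenerate and regular perturbation theory applies, with the cubic term $R^2|u_0|^2$ as the perturbation. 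The governing observation is that $\mu_{0\mathrm{c}}$ is defined precisely so that $\Im(\omega_0)=0$ at criticality; hence the detuning~\eqref{eq:cgl_eigs_stability_range1} gives $\Im(\omega_0)=\epsilon^2\Delta\mu=O(\epsilon^2)$. To balance this small growth rate against the nonlinear eigenvalue shift, I would scale the amplitude as $R=\epsilon\hat R$ with $\hat R=O(1)$, so the perturbation $R^2|u_0|^2=\epsilon^2\hat R^2|u_0|^2$ enters at the same order $\epsilon^2$ as the detuning. Because the perturbation is already $O(\epsilon^2)$, it suffices at leading nontrivial order to replace $|u_0|^2$ by $|\psi_0|^2$; the nonlinear problem thereby reduces to an ordinary linear perturbation with the fixed potential $\epsilon^2\hat R^2|\psi_0|^2$.

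First I would expand $u_0=\psi_0+\epsilon u^{(1)}+\cdots$ and $\Omega=\omega_0+\Omega^{(1)}+\cdots$ with $\Omega^{(1)}=O(\epsilon^2)$, and project Equation~\eqref{eq:cgl_nonlinear_evp} onto $\psi_0^\dagger$ to obtain the first-order shift
\[
\Omega=\omega_0+R^2\,\frac{\langle\psi_0^\dagger,|\psi_0|^2\psi_0\rangle}{\langle\psi_0^\dagger,\psi_0\rangle}+O(\epsilon^3).
\]
Imposing the self-consistency condition $\Im(\Omega)=0$ and using $\Im(\omega_0)=\epsilon^2\Delta\mu$ yields a single real equation, linear in $R^2$,
\[
\epsilon^2\Delta\mu+R^2\,\Im\!\left[\frac{\langle\psi_0^\dagger,|\psi_0|^2\psi_0\rangle}{\langle\psi_0^\dagger,\psi_0\rangle}\right]=0,
\]
whose solution $R^2=-\epsilon^2\Delta\mu\big/\Im[\cdots]$ is positive precisely when $\Delta\mu$ and the bracketed imaginary part carry opposite signs. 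This reproduces the dichotomy in hypothesis~(i), provided the bracket has imaginary part equal to $f_{\mathrm{i}}(\theta,k^2)$ times a \emph{positive} real constant.

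The crux of the proof — and the step I expect to be the main obstacle, though it is a computation rather than a conceptual difficulty — is the explicit evaluation of the ratio $\langle\psi_0^\dagger,|\psi_0|^2\psi_0\rangle/\langle\psi_0^\dagger,\psi_0\rangle$. Here I would substitute the Gaussian ground states $\psi_0(x)=\zeta_0\,\mathe^{(U/2\gamma)x-\chi^2x^2/2}$ and $\psi_0^\dagger(x)=\xi_0\,\overline{\psi_0(x)}\,\mathe^{-Ux/\overline{\gamma}}$; the advective exponentials cancel so that numerator and denominator both reduce to elementary complex Gaussian integrals of the form $\int\mathe^{-ax^2+bx}\,\mathd x=\sqrt{\pi/a}\,\mathe^{b^2/4a}$, convergent because $\Re(\chi^2)>0$. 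Tracking the phase $\theta=\arg(\chi^2)$, so that $\chi^2+\Re(\chi^2)=|\chi^2|(\cos\theta+\mathe^{\imag\theta})$, and folding in the unit-norm normalization of $\psi_0$ (which supplies the factor $\mathe^{-k^2/\cos\theta}$), the ratio collapses to the positive real prefactor $\sqrt{|\chi^2|/\pi}\,\mathe^{-k^2/\cos\theta}$ multiplying exactly $f(\theta,k^2)$. Since $\Re(\chi^2)>0$ forces $\cos\theta>0$, this prefactor is real and positive (as is the cubic coefficient), so $\Im[\cdots]$ has the sign of $f_{\mathrm{i}}(\theta,k^2)$; the self-consistency equation then delivers a single positive $R^2$ in each of the two cases of hypothesis~(i).

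The remaining points requiring care are minor: a consistent choice of branch for the complex square root, and the verification that the neglected correction $u^{(1)}$ does not feed back into $\Omega$ at order $\epsilon^2$. The latter holds because $u^{(1)}$ is a superposition of the $\psi_n$ with $n\neq0$, which are annihilated by $\psi_0^\dagger$ through the biorthogonality $\langle\psi_0^\dagger,\psi_n\rangle\propto\delta_{0n}$, so that the first-order eigenvalue correction depends only on the diagonal matrix element computed above. This establishes the existence of at least one admissible $R$ to first order in the perturbation, as claimed.
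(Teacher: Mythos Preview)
Your proposal is correct and follows essentially the same route as the paper: biorthogonal Rayleigh--Schr\"odinger perturbation in $R^2$, Gaussian evaluation of $\langle\psi_0^\dagger,|\psi_0|^2\psi_0\rangle/\langle\psi_0^\dagger,\psi_0\rangle$, and the sign dichotomy yielding $R_\mathrm{c}^2>0$. One minor slip: the positive real prefactor is simply $|\chi|/\sqrt{\pi}$, with no extra factor $\mathe^{-k^2/\cos\theta}$ (the normalisation contribution you identify is already absorbed into the exponential defining $f$), but since your spurious extra factor is itself real and positive the sign argument is unaffected.
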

\begin{proof} We use fairly standard perturbation methods from Quantum Mechanics.  The eigenvalue pair $(\Omega,u_0(x))$ is expanded in powers of $R^2$, with $\Omega=\omega_0+R^2\omega_1+R^4\omega_2+O(R^6)$.   We obtain
\begin{equation}
\omega_1=a\frac{\langle \psi_0^\dagger,|\psi_0|^2\psi_0\rangle}{\langle\psi_0^\dagger,\psi_0\rangle},
\label{eq:perturbation_theory0}
\end{equation}
such that the perturbed ground-state frequency reads
\begin{equation}
\Omega=\omega_0+aR^2\left(\frac{\langle \psi_0^\dagger,|\psi_0|^2\psi_0\rangle}{\langle\psi_0^\dagger,\psi_0\rangle}\right)+O(R^4).
\label{eq:perturbation_theory1}
\end{equation}
Here, the `ground-state' refers to the nonlinear eigenstate that connects continuously to the ground state of the corresponding linear eigenvalue problem (i.e. $R=0$, and the eigenvalue $n=0$ in Equation~\eqref{eq:cgl_eigs_formula_linear}). 
By direct computation, we obtain
\begin{equation}
\Omega=\omega_0
+\tfrac{1}{\sqrt{\pi}}\sqrt{\frac{\chi^2\Re(\chi^2)}{\chi^2+\Re(\chi^2)}}\exp\bigg\{\tfrac{1}{4}U^2\Re(1/\gamma)^2\left[\frac{1}{\chi^2+\Re(\chi^2)}-\frac{1}{\Re(\chi^2)}\right]\bigg\}.
\label{eq:perturbation_theory2}
\end{equation}
We take $\theta=\arg(\chi^2)$, such that Equation~\eqref{eq:perturbation_theory2} simplifies as
\begin{equation}
\Omega=\omega_0+\tfrac{1}{\sqrt{\pi}}a|\chi| R^2\sqrt{\frac{\mathe^{\imag\theta}\cos\theta}{\cos\theta+\mathe^{\imag\theta}}}\exp\left(-\frac{k^2\mathe^{\imag\theta}}{\cos\theta(\cos\theta+\mathe^{\imag\theta})}\right)+O(R^4),\qquad k^2=\frac{U^2[\Re(1/\gamma)]^2}{4|\chi|^2},
\label{eq:perturbation_theory3}
\end{equation}
or 
\begin{equation}
\Omega=\omega_0+\tfrac{1}{\sqrt{\pi}}a|\chi| R^2 f(\theta,k^2)+O(R^4).
\label{eq:perturbation_theory4}
\end{equation}
Taking $\mu_0=\mu_{0\mathrm{c}}$ as in the theorem statement, we have
\[
\Im(\Omega)=\epsilon^2\Delta\mu+\tfrac{1}{\sqrt{\pi}}a|\chi| R^2 f_\mathrm{i}(\theta,k^2)+O(R^4).
\]
Thus, to lowest order in the perturbation theory, an oscillatory solution with  $\Im(\Omega)=0$ is possible at a critical radius $R_\mathrm{c}$, where
\begin{equation}
R_\mathrm{c}^2=-\epsilon^2\frac{\Delta\mu}{\Im(\omega_1)}=-\frac{\epsilon^2\Delta\mu\sqrt{\pi}}{a|\chi|f_\mathrm{i}(\theta,k^2)},
\label{eq:Rc_perturbation}
\end{equation}
with $R_\mathrm{c}^2>0$ for the conditions outlined in Theorem~\ref{thm:oscillatory_cgl}(i).
The critical radius is $O(\epsilon)$, and we are therefore justified in neglecting the higher-order terms in the perturbation theory.  \qed
\end{proof}

The allowed regions of parameter space with $-\pi/2<\theta<\pi/2$ are shown in Figure~\ref{fig:theta_k}.  White region correspond to the subcritical case ($f_{\mathrm{i}}(\theta,k^2)>0$, $\Delta\mu=-1$), and black regions to the supercritical one.
\begin{figure}[htb]
\centering
\includegraphics[width=0.6\textwidth]{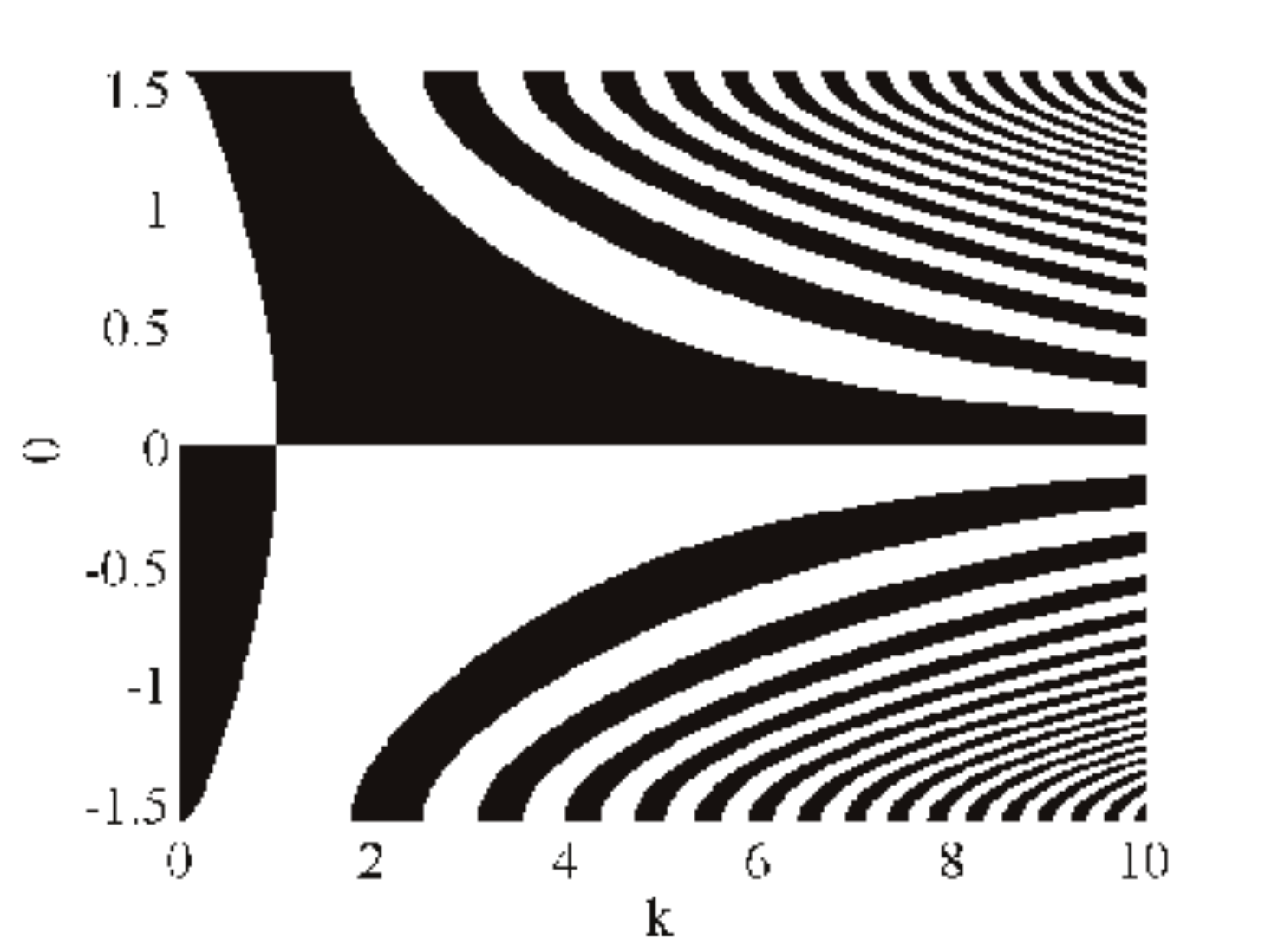}
\caption{Contour plot showing the sign of $f_{\mathrm{i}}(\theta,k^2)$.  Regions in white correspond to the positive sign (subcritical case), regions in black to the negative (supercritical case).}
\label{fig:theta_k}
\end{figure}
Perturbation theory also furnishes the eigenfunction of the nonlinear problem.  The expansion $u_0(x)=\psi_0(x)+R^2 u_1(x)+O(R^4)$ is posed, and $u_1(x)$ is computed in a standard fashion:
\begin{equation}
u_1(x)=-\Re(\langle\psi_0,\widehat{u_1}\rangle)\psi_0(x)+\widehat{u_1}(x),\qquad
\widehat{u_1}=
\sum_{p=1}^\infty \frac{\langle \psi_p^\dagger,|\psi_0|^2\psi_0\rangle}{\langle \psi_p^\dagger,\psi_p\rangle}\frac{\psi_p(x)}{\omega_0-\omega_p},
\label{eq:u1}
\end{equation}
such that
\[
\langle \psi_0+R^2u_1,\psi_0+R^2 u_1\rangle=O(R^4),
\]
i.e. the perturbed solution has unit norm, to order $R^2$ in the perturbation theory.
%
The oscillatory state defined by~\eqref{eq:Rc_perturbation} is intrinsically unstable, as we now demonstrate:
\begin{theorem}
The subcritical case obtained by the method described in Theorem~\ref{thm:oscillatory_cgl} is linearly unstable for $t\ll 1/\epsilon^2$; the supercritical case obtained by the same method is linearly stable, again for $t\ll 1/\epsilon^2$.
\label{thm:cgl_stability}
\end{theorem}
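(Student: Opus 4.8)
The plan is to mirror the stability analysis of the two-level system, but now to exploit the fact that the critical radius is small, $R_\mathrm{c}=O(\epsilon)$, so that the whole stability problem is a small perturbation of the \emph{linear} CGL spectral problem. First I would linearize Equation~\eqref{eq:cgl_nonlinear} about the oscillatory state $u=R_\mathrm{c}u_0(x)\mathe^{-\imag\Omega t}$ and strip the explicit time dependence with $\delta u=\delta v\,\mathe^{-\imag\Omega t}$, obtaining the autonomous problem $\imag\,\partial_t\delta v=(\mathcal{A}-\Omega)\delta v+\mathcal{B}\overline{\delta v}$ exactly as in Equation~\eqref{eq:periodic_lsa1}, with $\mathcal{A}=\mathcal{H}+\imag(\mu_0+\mathcal{G})+2aR_\mathrm{c}^2|u_0|^2$ and $\mathcal{B}=aR_\mathrm{c}^2u_0^2$, where both $\mathcal{A}-\mathcal{L}$ and $\mathcal{B}$ are $O(\epsilon^2)$. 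Because of the $\overline{\delta v}$ coupling I would pass to the real-linear formulation on the pair $(\delta v,\overline{\delta v})$, seeking $\mathe^{\sigma t}$ solutions of the block operator
\[
\mathcal{K}=\begin{pmatrix}\mathcal{A}-\Omega&\mathcal{B}\\-\overline{\mathcal{B}}&-\overline{(\mathcal{A}-\Omega)}\end{pmatrix},
\]
whose eigenvalue $\lambda=\imag\sigma$ fixes the growth rate through $\Re(\sigma)=\Im(\lambda)$.

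At $R_\mathrm{c}=0$ the operator $\mathcal{K}$ is block diagonal with blocks $\mathcal{L}-\omega_0$ and $-\overline{(\mathcal{L}-\omega_0)}$; since $\mathrm{spec}(\mathcal{L})=\{\omega_n\}$ with Hermite eigenfunctions $\psi_n$, both blocks yield growth rates $\Re(\sigma)=\Im(\omega_n)-\Im(\omega_0)$. By the ground-state condition~\eqref{eq:cgl_eigs_formula_linear_cond} these are $\le 0$, strictly negative and $O(1)$ for $n\ge1$, leaving a \emph{doubly degenerate marginal mode} at $n=0$ spanned by $(\psi_0,0)$ and $(0,\overline{\psi_0})$. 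The $n\ge1$ sector remains strictly stable under the $O(\epsilon^2)$ perturbation because the spectral gap $\Im(\omega_0)-\Im(\omega_n)=2n|\gamma||\chi^2|\cos(\arg(\gamma)+\arg(\chi^2))$ is $O(1)$; hence the verdict rests entirely on how the perturbation splits the marginal pair.

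For this I would use bi-orthogonal (non-self-adjoint) degenerate perturbation theory, projecting the $O(\epsilon^2)$ part of $\mathcal{K}$ onto the two-dimensional kernel with the adjoints $\psi_0^\dagger$. Every entry of the resulting $2\times2$ matrix collapses to the single integral $\langle\psi_0^\dagger,|\psi_0|^2\psi_0\rangle/\langle\psi_0^\dagger,\psi_0\rangle$ already evaluated in Theorem~\ref{thm:oscillatory_cgl}: the off-diagonal term, carrying $\mathcal{B}\overline{\psi_0}=aR_\mathrm{c}^2u_0^2\overline{\psi_0}\to aR_\mathrm{c}^2|\psi_0|^2\psi_0$, equals the diagonal self-energy term. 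The structural fact I expect to confirm is that
\[
\mathcal{K}_\mathrm{eff}=R_\mathrm{c}^2\begin{pmatrix}\omega_1&\omega_1\\-\overline{\omega_1}&-\overline{\omega_1}\end{pmatrix}
\]
is \emph{rank one}: it carries one identically-zero eigenvalue (the neutral phase/gauge mode guaranteed by the phase invariance of the cubic nonlinearity) and one eigenvalue equal to its trace, $\lambda=2\imag R_\mathrm{c}^2\Im(\omega_1)$.

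Reading off $\Re(\sigma)=\Im(\lambda)=2R_\mathrm{c}^2\Im(\omega_1)$ and inserting $R_\mathrm{c}^2$ from Equation~\eqref{eq:Rc_perturbation} together with $\Im(\omega_1)=\tfrac{1}{\sqrt{\pi}}a|\chi|f_\mathrm{i}(\theta,k^2)$ from Equation~\eqref{eq:perturbation_theory4}, all parameter dependence cancels and leaves the clean result $\Re(\sigma)=-2\epsilon^2\Delta\mu$. Thus the subcritical branch ($\Delta\mu=-1$) has $\Re(\sigma)=+2\epsilon^2>0$ and is unstable, while the supercritical branch ($\Delta\mu=+1$) has $\Re(\sigma)<0$ and is stable; because this rate is itself $O(\epsilon^2)$, the linearization and the truncated perturbation series are self-consistent only up to $t\ll1/\epsilon^2$, precisely the window asserted. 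The \textbf{main obstacle} is executing the degenerate perturbation theory correctly in the non-normal, infinite-dimensional setting, i.e. using the biorthogonal left/right eigenbasis rather than the Hermitian machinery of ordinary quantum-mechanical perturbation theory, and invoking completeness plus the $O(1)$ gap of the Hermite basis to discard the $n\ge1$ sector; the one delicate algebraic point is the cancellation that renders $\mathcal{K}_\mathrm{eff}$ rank one, with everything else reducing to integrals already computed in Theorem~\ref{thm:oscillatory_cgl}.
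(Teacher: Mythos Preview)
Your proposal is correct and rests on the same core idea as the paper: exploit the $O(1)$ spectral gap of the linear CGL operator to project the linearized dynamics onto the two--(real--)dimensional $n=0$ subspace, where the verdict is decided by the single quantity $\Im(\omega_1)$. The execution differs in style rather than substance. The paper works in the time domain: it expands $\delta v=\sum_n a_n(t)\psi_n(x)$, obtains coupled ODEs for the amplitudes, and invokes a slow-manifold argument to set $a_n\to 0$ for $n\ge 1$ on the window $\Re(\gamma\chi^2)\ll t\ll 1/\epsilon^2$, leaving the single complex ODE $\imag\dot a_0=R_\mathrm{c}^2\omega_1 a_0+R_\mathrm{c}^2\omega_1\overline{a_0}$, from which the sign of $\Im(\omega_1)$ is read off directly. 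You instead work spectrally, doubling to the block operator $\mathcal{K}$ and applying degenerate bi-orthogonal perturbation theory on its two-dimensional kernel. Your route is somewhat more systematic and delivers the explicit growth rate $\Re(\sigma)=-2\epsilon^2\Delta\mu$ (together with the neutral phase mode), which the paper leaves implicit; the paper's slow-manifold argument in turn makes the origin of the time-window restriction $t\ll 1/\epsilon^2$ slightly more transparent. Either way the reduced $2\times 2$ problem is the same, and your identification of the rank-one structure of $\mathcal{K}_\mathrm{eff}$ is correct.
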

Note that this result holds for all parameter sets consistent with the constraints in Theorem~\ref{thm:oscillatory_cgl}.  
\begin{proof}
We take $u(x,t)=R_c^2 u_0\mathe^{-\imag \Omega t}+\delta v(x,t)\mathe^{-\imag\Omega t}$ in Equation~\eqref{eq:cgl_nonlinear} and linearize, assuming that $\delta v(x,t)$ is small in the appropriate sense.  Here $\Omega=\Omega(R_\mathrm{c})$ is the real frequency of the oscillatory state, with $\Omega=\omega_0+aR_\mathrm{c}^2\omega_a$. The result is the following equation:
\begin{equation}
\imag\frac{\partial}{\partial t}\delta v=\left(\mathcal{L}-\omega_0-a R_\mathrm{c}^2\omega_1\right)\delta v+2R_\mathrm{c}^2|u_0|^2\delta v+R_\mathrm{c}^2u_0^2\overline{\delta v},
\label{eq:stability_cgl}
\end{equation}
where $\mathcal{L}=\mathcal{H}+\imag\left(\mu_0+\mathcal{G}\right)$.   The perturbation $\delta v(x,t)$ is expanded in terms of the eigenfunctions~\eqref{eq:cgl_eigfns_formula_linear}:
\[
\delta v(x,t)=\sum_{n=0}^\infty a_n (t)\psi_n(x)
\]
The following exact (i.e. non-perturbative) equation for the amplitudes $a_n(t)$ is obtained:
\[
\imag \dot a_n=(\omega_n-\omega_0)a_n-a R_\mathrm{c}^2\omega_1 a_n
+2R_\mathrm{c}^2\sum_{j=0}^\infty a_j
\frac{\langle \psi_n^\dagger,|u_0|^2\psi_j\rangle}{\langle\psi_n^\dagger,\psi_n\rangle}
+R_\mathrm{c}^2\sum_{j=0}^\infty \overline{a_j}
\frac{\langle \psi_n^\dagger,u_0^2\overline{\psi_j}\rangle}{\langle \psi_n^\dagger,\overline{\psi_n}\rangle}.
\]
We have $\omega_n-\omega_0=-2\gamma\chi^2n$.  Hence,
\begin{eqnarray*}
\imag \dot a_0&=&O(R_\mathrm{c}^2),\\
\imag \dot a_n&=&-2\imag \gamma\chi^2 a_n+O(R_\mathrm{c}^2),\qquad n\geq 1.
\end{eqnarray*}
In a slow-manifold approximation, we therefore have $a_n\propto \mathe^{-2\gamma\chi^2 t}$, hence $a_n\rightarrow 0$ for $\Re(\gamma\chi^2)\ll t\ll 1/R_\mathrm{c}^2\propto 1/\epsilon^2$.  Consequently, it is valid to assume $a_n=0$ in the $n=0$ equation, at least for in the specified range of times, up to $t\ll 1/\epsilon^2$.  We obtain
\[
\imag \dot a_0=-a\omega_1 R_\mathrm{c}^2a_0+2aR_\mathrm{c}^2 a_0\langle \psi_0^\dagger,|u_0|^2\psi_0\rangle+R_\mathrm{c}^2\overline{a_0}\langle \psi_0^{\dagger},u_0^2\overline{\psi_0}\rangle.
\]
To lowest order in perturbation theory, we have
\[
a R_\mathrm{c}^2\langle \psi_0^{\dagger},|u_0|^2\psi_0\rangle=
a R_\mathrm{c}^2\langle \psi_0^{\dagger},|\psi_0|^2\psi_0\rangle=a R_\mathrm{c}^2\omega_1,
\]
hence
\[
\imag \dot a_0= a\omega_1 R_{\mathrm{c}}^2 a_0+\overline{a_0}aR_\mathrm{c}^2\langle \psi_0^\dagger,\psi_0^2 \overline{\psi_0}\rangle.
\]
For the subcritical case, $\Im(\omega_1)>0$, corresponding to instability.  For the supercritical case, $\Im(\omega_1)<0$, corresponding to stability.\qed
\end{proof}
We have a final result concerning the parameter values necessary to support a global mode:
\begin{theorem}
Suppose that a global mode $u(x,t)$ exists.  Then
\begin{equation}
\sqrt{\frac{\mu_{2\mathrm{r}}\gmr}{2}}<\mu_{0}.
\label{eq:onset_necessary}
\end{equation}
\end{theorem}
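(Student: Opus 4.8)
The plan is to read off a sharp \emph{marginality condition} from the norm-evolution identity and then to sharpen the resulting non-strict bound into a strict one. A global mode is the ansatz~\eqref{eq:cgl_nonlinear_ansatz} with $R>0$ and $\Omega\in\mathbb{R}$, so its $L^2$ norm $\|u\|_2^2=R^2$ is constant in time. Feeding this into the norm-evolution equation~\eqref{eq:cgl_evol_l2} annihilates the left-hand side and yields, after dividing by $R^2>0$, the constraint $\langle u_0,(\mu_0+\mathcal{G})u_0\rangle=0$. First I would establish this identity; equivalently, one may take the inner product of the eigenvalue equation~\eqref{eq:cgl_nonlinear_evp} with $u_0$ and isolate the imaginary part, using that the Hermitian operator $\mathcal{H}$ and the real cubic term $R^2|u_0|^2u_0$ contribute only to $\Re(\Omega)$.

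Next I would recast the constraint as $\mu_0=\langle u_0,(-\mathcal{G})u_0\rangle$ with $\|u_0\|_2=1$, i.e.\ as a Rayleigh quotient of the operator $-\mathcal{G}=-\gmr\partial_x^2+\tfrac12\mu_{2\mathrm{r}}x^2$. Since $\gmr>0$ and $\mu_{2\mathrm{r}}>0$, this is a positive harmonic-oscillator operator whose lowest eigenvalue is $\sqrt{\mu_{2\mathrm{r}}\gmr/2}$, attained on a single real Gaussian ground state $\mathe^{-\beta x^2/2}$ with $\beta=\sqrt{\mu_{2\mathrm{r}}/(2\gmr)}$. The variational characterisation of the bottom of the spectrum then gives $\mu_0\geq\sqrt{\mu_{2\mathrm{r}}\gmr/2}$; this is precisely the content already packaged in the Poincar\'e-type bound~\eqref{eq:evol_l2_bound}, so the non-strict inequality is immediate from results established earlier.

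The crux, and the main obstacle, is upgrading $\geq$ to $<$. Equality in the Rayleigh bound holds if and only if $u_0$ lies entirely in the one-dimensional ground eigenspace, i.e.\ $u_0$ equals, up to a global phase, the real Gaussian above. I would rule this out by substituting such a Gaussian into the full eigenvalue equation~\eqref{eq:cgl_nonlinear_evp}: the advection term $-\imag U\partial_x$ produces a purely imaginary, linear-in-$x$ contribution $\imag U\beta x\,u_0$ that no other term can cancel, forcing $U=0$; and even when $U=0$ the cubic term $R^2|u_0|^2u_0\propto\mathe^{-3\beta x^2/2}$ is a strictly narrower Gaussian that cannot be proportional to $u_0\propto\mathe^{-\beta x^2/2}$ unless $R=0$, contradicting $R>0$. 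Hence a nontrivial global mode can never saturate the bound, and the inequality~\eqref{eq:onset_necessary} is strict.

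I expect the bookkeeping in this last step, namely checking the width mismatch and the impossibility of cancellation in the degenerate $U=0$ case, to be the only delicate part; everything else follows directly from the spectral theory of the harmonic oscillator and the already-derived norm identity~\eqref{eq:cgl_evol_l2}.
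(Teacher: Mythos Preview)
Your approach is essentially the paper's: both combine the norm-evolution identity~\eqref{eq:cgl_evol_l2} with the harmonic-oscillator spectral lower bound (which the paper has already packaged as the Poincar\'e-type inequality~\eqref{eq:evol_l2_bound}) and the fact that $(d/dt)\|u\|_2^2=0$ for a global mode, to obtain $0\leq(\mu_0-\sqrt{\mu_{2\mathrm{r}}\gmr/2})R^2$ with $R^2>0$. The paper stops there and declares the result immediate.

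Where you go further is in your third step, and rightly so: the paper's argument as written only delivers the non-strict inequality $\mu_0\geq\sqrt{\mu_{2\mathrm{r}}\gmr/2}$, whereas the statement claims strict inequality. Your plan to rule out the equality case---observing that equality forces $u_0$ to be the real Gaussian ground state of $-\mathcal{G}$, and then checking that this Gaussian cannot satisfy the full eigenvalue equation~\eqref{eq:cgl_nonlinear_evp} because (for $s(x)=1$) the cubic term $R^2|u_0|^2u_0\propto\mathe^{-3\beta x^2/2}$ is a strictly narrower Gaussian that no polynomial-in-$x$ term can cancel---is correct and actually closes a gap the paper leaves open. So your proposal follows the paper's route but is more complete.
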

\begin{proof}
By the Poincar\'e-type inequality~\eqref{eq:evol_l2_bound}, we have
\[
\tfrac{1}{2}\frac{d}{dt}\|u\|_2^2\leq \left(\mu_{0}-\sqrt{\frac{\mu_{2\mathrm{r}}\gmr}{2}}\right)\|u\|_2^2
\]
for the global mode.  For a global mode, we also have $(d/dt)\|u\|_2^2=0$, hence
\[
0\leq \left(\mu_{0}-\sqrt{\frac{\mu_{2\mathrm{r}}\gmr}{2}}\right)\|u\|_2^2=
\left(\mu_{0}-\sqrt{\frac{\mu_{2\mathrm{r}}\gmr}{2}}\right)R^2
\]
where $R^2>0$ is the radius squared of the global mode. The result now follows immediately.
\qed
\end{proof}
Thus,  the necessary condition for the onset of a global mode coincides with the necessary condition for transient growth (Equation~\eqref{eq:cgl_eigs_stability_range}).
Note that the condition~\eqref{eq:onset_necessary} differs from the criteria implied in the work by Cossu and Chomaz~\cite{Chomaz1997}: the critieria implied there refer instead to the $L^\infty$ norm of the solution.

\subsection{Connection to Stuart--Landau theory}
\label{sec:global_modes_sl}

We discuss briefly the connection between the present work and the computation of the weakly nonlinear global mode via Stuart--Landau theory, performed by Chomaz \textit{et al.}~\cite{chomaz1991effect}.  The aim here is to show the consistency of the present theory to existing theories, and then to transcend all such weakly nonlinear theories by solving the nonlinear eigenvalue problem numerically (Section~\ref{sec:cgl_numerical}), thereby opening up the possibility of having multiple global modes that are triggered by ever-larger disturbance amplitudes.

In the Stuart--Landau theory, consideration is again given to states near criticality, such that $\mu_0=\mu_\mathrm{c}+ \epsilon^2\Delta\mu$ as before, with $\Delta\mu=\pm 1$, and $\epsilon \ll 1$.  A slow timescale $T=\epsilon^2 t$ is introduced, and the following perturbation expansion is prescribed:
\begin{subequations}
\begin{equation}
u(x,t)=\sum_{n=1}^\infty \epsilon^n A_n(x,t,T).
\label{eq:amplsl_expansion}
\end{equation}
At order $\epsilon$, the solution is $A_1=\mathcal{A}(T)\mathe^{-\imag\omega_{0\mathrm{c}}}t \psi_{0\mathrm{c}}(x)$.  A compatibility condition leads to the following evolution equation for $\mathcal{A}(T)$:
\begin{equation}
\frac{d\mathcal{A}}{dt}=\Delta\mu\mathcal{A}+\imag\omega_1 \mathcal{A}|\mathcal{A}|^2,
\label{eq:amplsl}
\end{equation}
where $\omega_1=a\langle \psi_0^\dagger,|\psi_0|^2\psi_0\rangle/\langle\psi_0^\dagger,\psi_0\rangle$ as before.  The equation for $|\mathcal{A}|^2$ reads
\begin{equation}
\frac{d}{dt}|\mathcal{A}|^2=2|\mathcal{A}|^2\left(\Delta\mu+\Im(\omega_1) |\mathcal{A}|^2\right).
\label{eq:amplsq}%
\end{equation}%
\label{eq:my_sl}%
\end{subequations}%
The Stuart--Landau weakly nonlinear theory in Equations~\eqref{eq:my_sl} is valid for times $t\ll 1/\epsilon^2$, which is same as the condition for the stability analysis of Section~\ref{sec:global_modes:nonlin} to be valid.
Based on Equation~\eqref{eq:amplsl}, there are four cases to consider, depending on the respective signs of $\Delta\mu$ and $\Im(\omega_1)$.  Only those cases for which Equation~\eqref{eq:amplsq} admits a positive fixed point can be addressed sensibly within the framework of Stuart--Landau theory, since the existence of such a fixed point is precisely the condition needed to support a global mode.  The parameter values that produce a fixed point are the following:
\begin{itemize}
\item {\textbf{Subcritical bifurcation}}, corresponding to $\Delta\mu=-1$ and $\Im(\omega_1)>0$.  For these parameters, a fixed point $|\mathcal{A}|=1/\sqrt{\Im(\omega_1)}$ exists but is unstable.
\item {\textbf{Supercritical bifurcation}}, corresponding to $\Delta\mu=1$ and $\Im(\omega_1)<0$.  For these parameters, a fixed point $|\mathcal{A}|=1/\sqrt{-\Im(\omega_1)}$ exists and is stable.
\end{itemize}
For both these cases, the lowest-order solution at the fixed point reads
\[
u(x,t)=\epsilon\left(\frac{\Delta\mu}{\Im(\omega_1)}\right)^{1/2}\psi_{0\mathrm{c}}(x)\mathe^{-\imag\omega_{0\mathrm{c}}t}.
\]
and the subcritical case is linearly unstable and the supercritical one is linearly stable.
In comparison, the global-mode solution obtained via regular perturbation theory (Section~\ref{sec:global_modes:nonlin}) reads
\[
u(x,t)=R_\mathrm{c}\psi_0(x)\mathe^{-\imag(\omega_0+a\epsilon^2\omega_1)t}=
\epsilon\left(\frac{\Delta\mu}{\Im(\omega_1)}\right)^{1/2}\left[\psi_{0\mathrm{c}}(x)+\epsilon^2\frac{\partial\psi_0}{\partial\epsilon^2}(x)+\cdots\right]\mathe^{-\imag\omega_0 t},
\]
where again, the subcritical case corresponds to linear instability and the supercritical case to linear stability.  Thus, the two perturbation theories agree at lowest order in their characterization of the global mode.

\subsection{Discussion}

It would seem that the new approach to the construction of the global mode near criticality (i.e. drawing inspriation from the regular perturbation theory of Quantum-Mechancs) yields no more informtion than the existing approach advocated already by Chomaz \textit{et al.}~\cite{chomaz1991effect}.  However, the regular perturbation theory based on Quantum Mechanics does enable a straightforward extension beyond the first order.  Starting with Equation~\eqref{eq:u1}, one readily obtains a further correction for the complex frequency $\Omega=\omega_0+\omega_1 R^2+\omega_2R^4+O(R^6)$, with
\[
\omega_2=\frac{\langle\psi_0^\dagger,\psi_0\left(\psi_0\overline{\widehat{u_1}}+\overline{\psi_0}\widehat{u_1}\right)\psi_0\rangle+\langle\psi_0^\dagger,|\psi_0|^2\widehat{u_1}\rangle}{\langle\psi_0^\dagger,\psi_0\rangle}+2\omega_1\Re\left[-\langle\psi_0,\widehat{u_1}\rangle\right].
\]
Now, high-order perturbation theories can be dubious, since the radius of convergence of the series expansions in powers of $R$ is not known \textit{a priori}.  However, let us momentarily assume that an expansion up to and including the power $R^4$ is valid, such that $\Omega=\omega_0+\omega_1 R^2+\omega_2R^4+O(R^6)$.  We neglect the terms of order $R^6$.  Thus, the condition for the existence of a global mode now reads
\[
\Im\left(\omega_0+\omega_1 R^2+\omega_2 R^4\right)=0.
\]
This is a quadratic equation in $R^2$: previously, the criterion for the existence of a global mode (involving a perturbation theory up to and including powers of $R^2$) produced a single global mode; now the higher-order perturbation theory includes not only that single global mode, but admits also the possibility that further additional global modes can be found.  Even if the validity of these expansions is curtailed by a small radius of convergence, this result is indicative of a mechanism whereby increasing the magnitude of the nonlinearity can lead to an increase in the number of global modes (whether stable or unstable) supported by the system.  Beyond the radius of convergence of the perturbation theory, the existence of multiple global modes can be established via numerical calculations.  This topic is addressed in Section~\ref{sec:cgl_numerical}.

In the present analytical framework, it is possible to demonstrate - albeit heuristically - that the nonlinear CGL equation with focussing nonlinearity (i.e. $a>0$, $s(x)=1$) admits only a finite number of global modes.  This is done by obtaining bounds on $\Im(\Omega)$ taken over the whole range of $R$-values.  In  contrast to the two-level system considered before, where rigorous upper and lower bounds were obtained (Section~\ref{sec:two_level_nonlin}), in the present context only a rigorous upper bound on $\Im(\Omega)$ is available, and for a lower bound we resort to heuristic arguments. Thus, we solve Equation~\eqref{eq:cgl_nonlinear_evp} in the limit $R\rightarrow\infty$, with $\Omega(R)\in\mathbb{C}$.  The dominant balance in Equation~\eqref{eq:cgl_nonlinear_evp} is then between the nonlinear term and the `kinetic' term $\imag\gamma\partial_x^2 u_0$.
  For $a>0$, bound states exist, since the nonlinearity then acts like a `self-focussing' term, thereby inducing a pulse-like solution of width $w$.  The width $w$ is obtained by the dominant balance in Equation~\eqref{eq:cgl_nonlinear_evp}, viz. $|\gamma|/w^2=aR^2/w$, where $|\psi|^2\sim 1/w$ such that $\|\psi\|_2^2=1$.  Thus, $w=|\gamma|/(aR^2)$, as $R\rightarrow\infty$.  This estimate is then substituted into $\Im(\Omega)=-\gmr\|\partial_xu_0\|_2^2-(1/2)\mu_{2\mathrm{r}}\|x u_0\|_2^2+\mu_0$, such that $\Im(\Omega)\sim -\gmr /w^2$, as $R\rightarrow\infty$, hence
\begin{equation}
\Im(\Omega)\sim -\gmr a^2R^4/|\gamma|^2,\qquad R\rightarrow\infty.
\label{eq:omega_heuristic}
\end{equation}
Combining Equations~\eqref{eq:evol_l2_bound} and~\eqref{eq:omega_heuristic}, we have
\[
\mu_{0}-\sqrt{\frac{\mu_{2\mathrm{r}}\gmr}{2}}\geq \Im(\Omega)\sim -\gmr a^2R^4/|\gamma|^2,\qquad R\rightarrow\infty.
\]
The set of zeros $\{R_0|\Omega_{\mathrm{i}}(R_0)=0\}$ is empty if $\mu_{0}<\sqrt{\mu_{2\mathrm{r}}\gmr/2}$.  This case is ruled out because of the operational range of the parameters (Equation~\eqref{eq:cgl_eigs_stability_range}).  Also, the set of zeros is bounded above in view of Equation~\eqref{eq:omega_heuristic}.
Assuming $\Omega_{\mathrm{i}}(R)$ is a non-zero continuous function it follows that the number of zeros is finite,  only a finite number of global modes exists.  These fairly heuristic arguments are confirmed by numerical calculations in the next section.

\section{CGL equation -- Illustrative numerical studies}
\label{sec:cgl_numerical}

In this section, we compute eigensolutions of Equation~\eqref{eq:cgl_nonlinear_ansatz} non-perturbatively.  The aim here is to see if the intuition regarding a second global mode built up in Section~\ref{sec:cgl_numerical} is correct.
The partial differential equation~\eqref{eq:cgl_nonlinear_ansatz} is discretized on a uniform grid and periodic boundary conditions are assumed.  This is a reasonable asssumption: we expect localized eigenfunctions that decay far from the origin, meaning that the details of the solution for large $|x|$ are not important (this is checked \textit{a posteriori}).  In this way, Equation~\eqref{eq:cgl_nonlinear_ansatz} is converted into a nonlinear algebraic problem: the solution $\psi(x)$ is converted into a column vector $\psi(x_i)$, where $x_i$ is a point on the discrete grid, the operators $\partial_x$ and $\partial_{xx}$ become sparse matrices operating on the column vector $\psi(x_i)$, and multiplication of $\psi(x)$ by a function converts into multiplication of $\psi(x_i)$ by a diagonal matrix.   The resulting nonlinear algebraic eigenvalue problem is solved by Rayleigh-Quotient Iteration (RQQI) in a manner that is identical to Equation~\eqref{eq:rqi_twolevel}.  Furthermore, simple numerical continuation is used: the nonlinear algebraic eigenvalue problem with $R=\Delta R$ is solved, using the `ground-state' solution to the corresponding linear eigenvalue problem (i.e. $R=0$, and the eigenvalue $n=0$ in Equation~\eqref{eq:cgl_eigs_formula_linear}) as an initial guess.  Here, $\Delta R$ is assumed to be small in an appropriate sense.  The eigensolution at $R=\Delta R$ is obtained when the RQI method converges.  Then, this eigensolution is used as an initial guess for the nonlinear algebraic eigenvalue problem with $R=2\Delta R$.  The continuation proceeds until the final desired value of $R$ is obtained.  This approach has the additional advantage that it yields the functional dependence of $\Omega$ on $R$ -- this can then be used to obtain self-consistent solutions.

\subsection{Near criticality, with $U=0$}

The correctness of the numerical method is checked by carrying out several tests.  First, the numerical method is tested for convergence: typically, the method converges to four significant figures when the number of gridpoints is $300$.  Secondly, we have checked that the numerical method at $R=0$ produces eigensolutions that are consistent with the known analytical solution in the linear case.  Thirdly, the function $\Omega(R)$ is shown to agree with standard perturbation theory for small $R$.  This third aspect is now described in detail.  The following parameter values are chosen:
\begin{equation}
\gamma=1-\imag,\qquad \mu_2=2\gamma\left(0.1\mathe^{\imag (2\pi\times 0.15)}\right),\qquad U=0,\qquad \mu_0=0.99\mu_{0\mathrm{c}},\qquad a=1.
\label{eq:params_u0}
\end{equation}
%
The function $\Omega(R)$ is computed and the results compared with perturbation theory in Figure~\ref{fig:mypert1}.
\begin{figure}[t]
\centering
\includegraphics[width=0.6\textwidth]{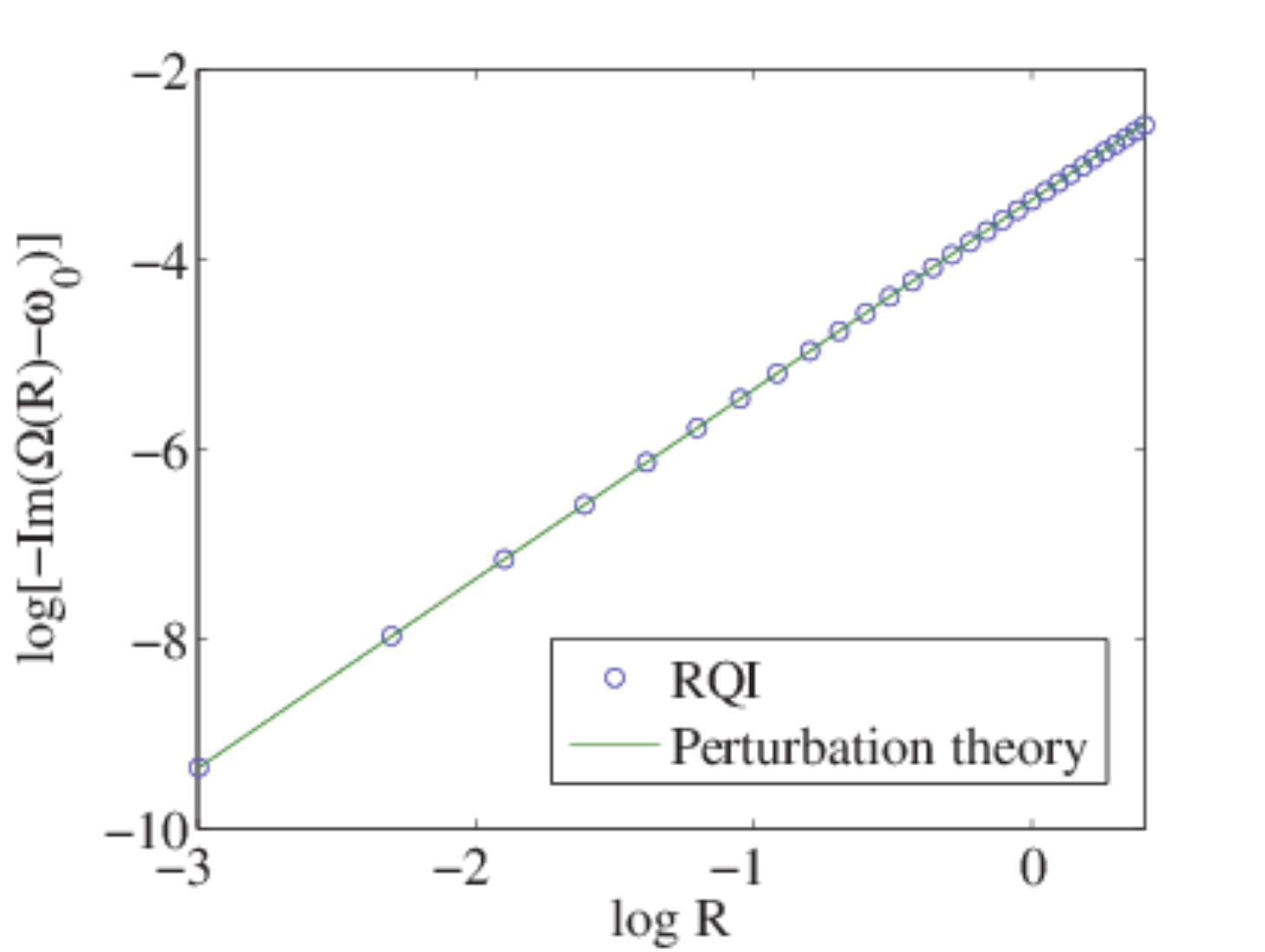}
\caption{Plot of $\Im\left[\Omega(R)-\omega_0\right]$ as a function of $R$ (log-log scale) showing the agreement between numerics and perturbation theory and confirming the correctness of the numerical method.  The parameters are given in Equation~\eqref{eq:params_u0}}
\label{fig:mypert1}
\end{figure}
Excellent agreement between the numerics and the perturbation theory is obtained, confirming the correctness of the former.  A global mode is found at the critical value   $R_\mathrm{c}=0.355$. 
The numerical method is used to continue the plot in Figure~\ref{fig:mypert1} to include a much wider range of $R$-values in Figure~\ref{fig:mypert2}(a).
\begin{figure}[t]
\centering
\includegraphics[width=0.6\textwidth]{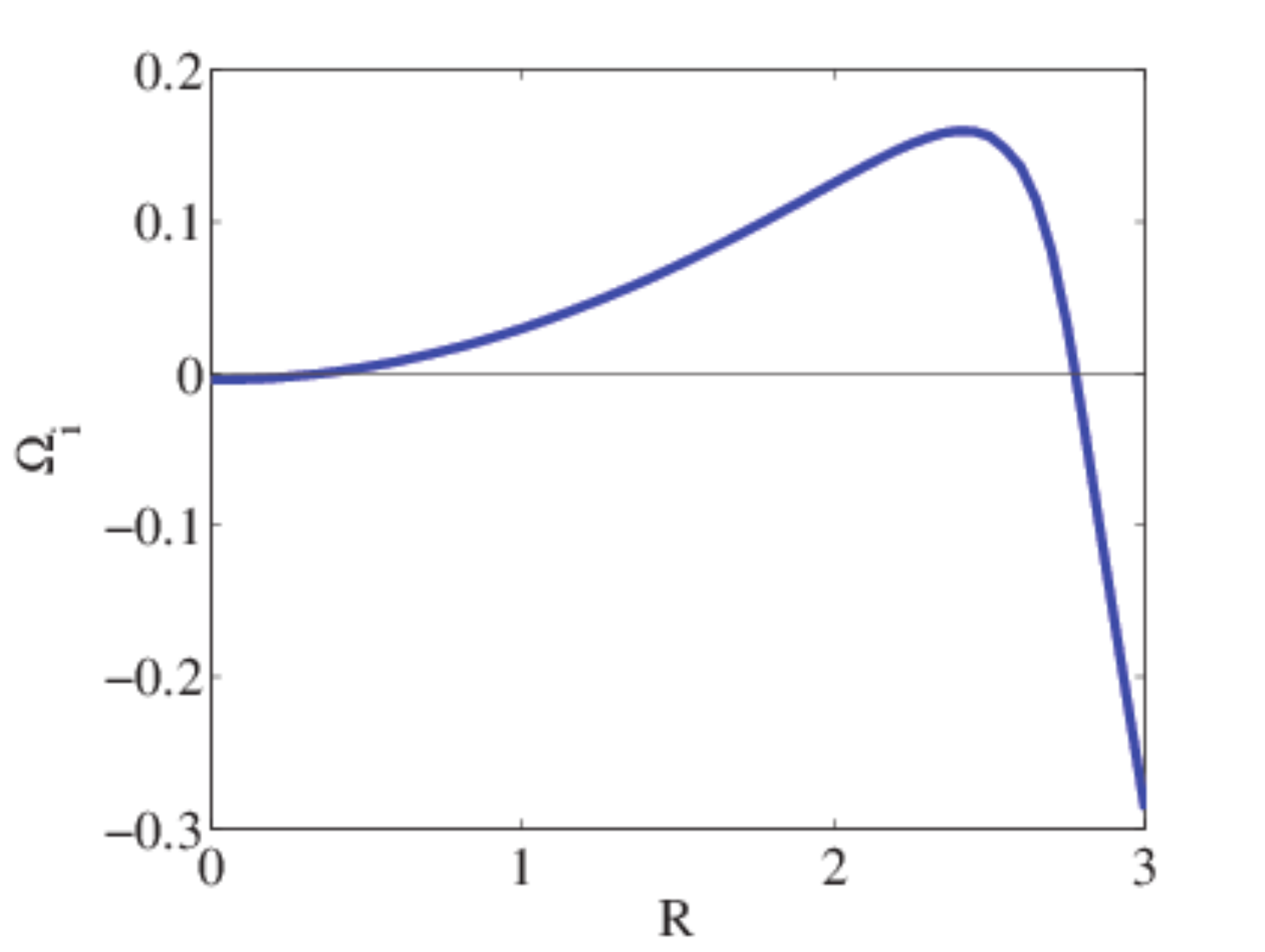}\\
\includegraphics[width=0.48\textwidth]{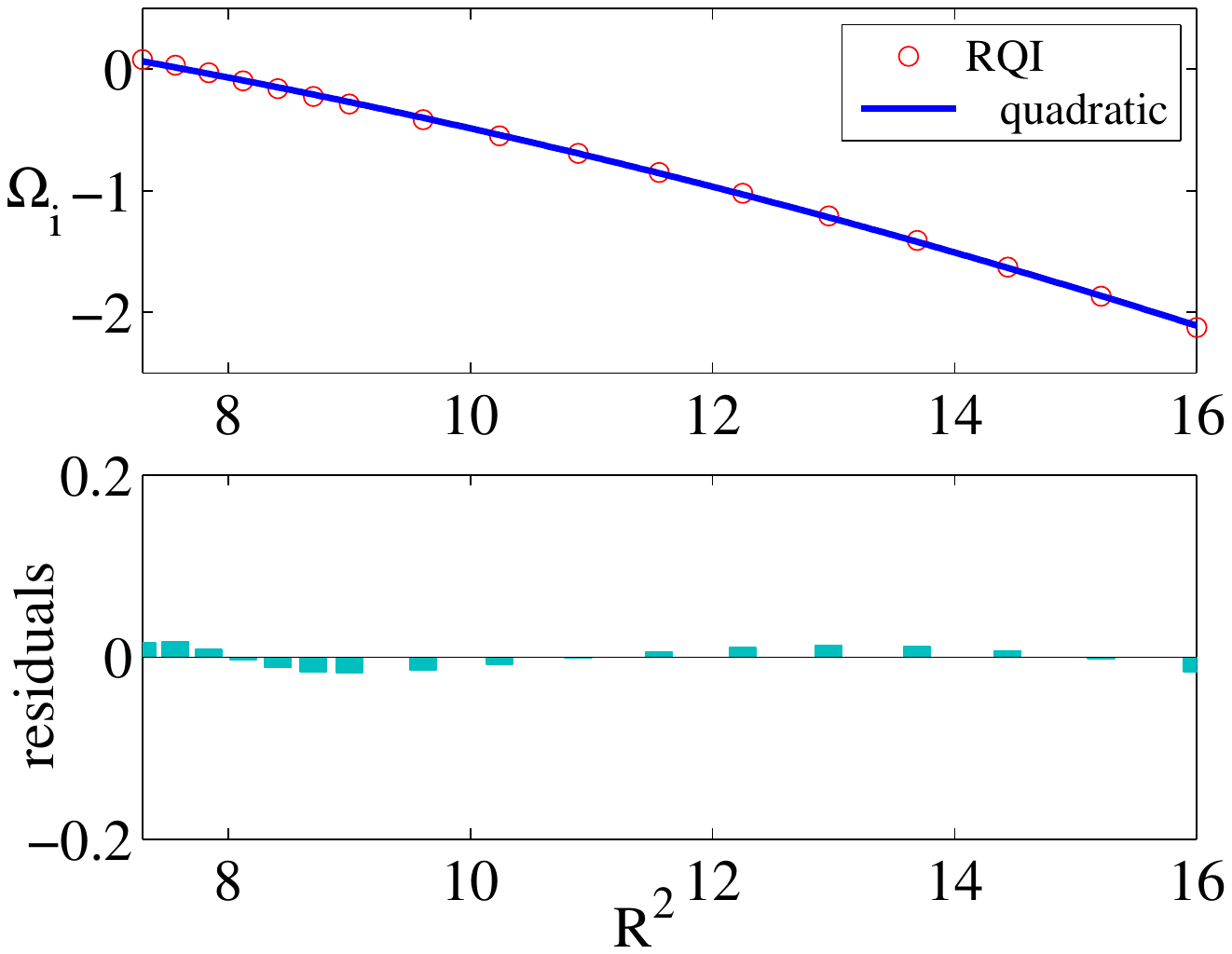}
\includegraphics[width=0.48\textwidth]{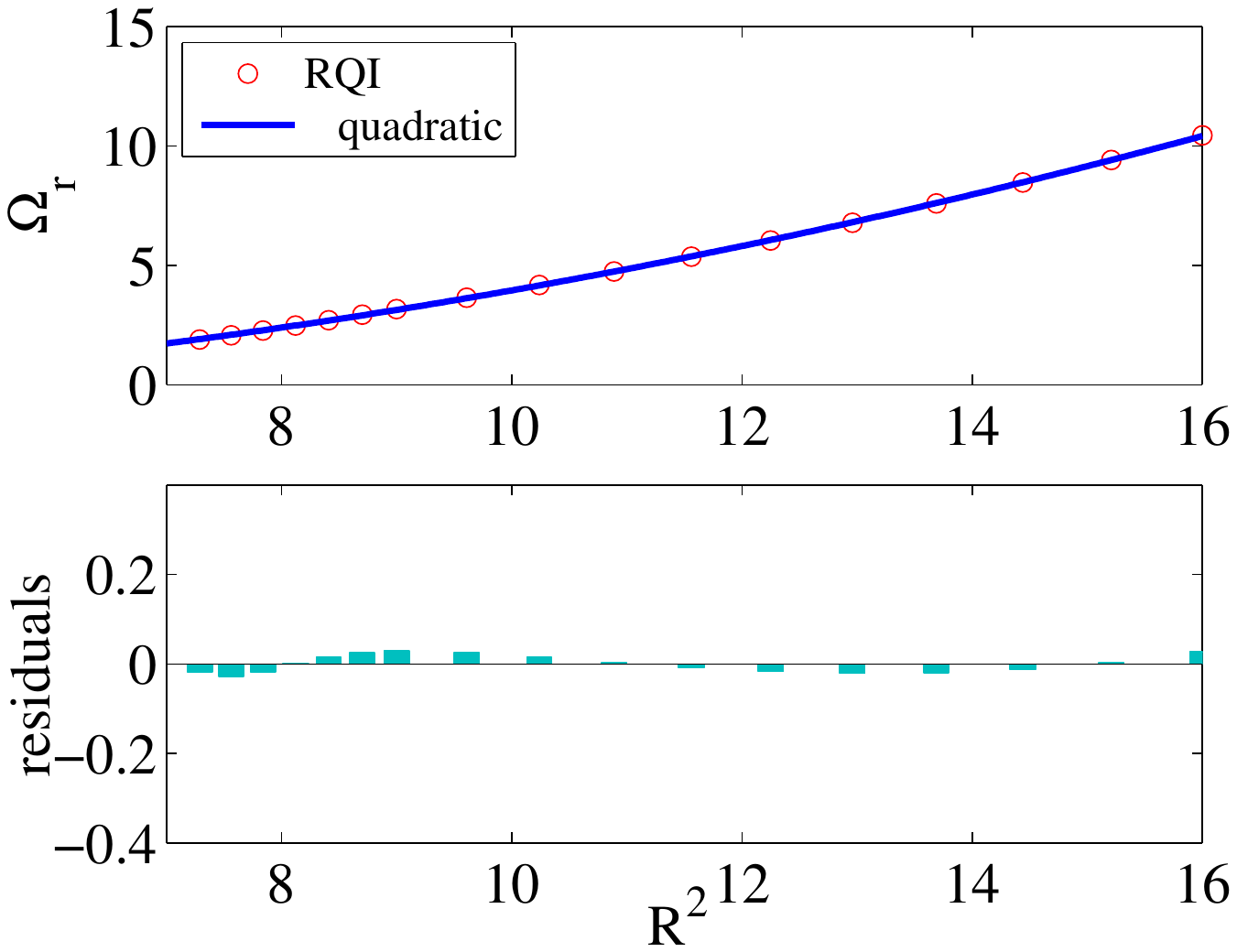}
\caption{(a) Plot of $\Im(\Omega)$ as a function of $R$ for the parameters given in Equation~\eqref{eq:params_u0} (RQI numerical method); (b),(c) Similar to (a), but showing the large-R behaviour of $\Im(\Omega)$ and $\Re(\Omega)$ as a function of $R^2$.}
\label{fig:mypert2}
\end{figure}
The perturbation method breaks down at $R\approx 1$.  At $R_\mathrm{c}=2.77$ a second global mode appears, with (by definition) a large amplitude and hence undetectable using the weakly nonlinear analyses used previously.  Also, the large-$R$ behaviour of $\Omega(R)$ is consistent with the heuristic arguments developed previously, with $\Im(\Omega),\Re(\Omega)\sim R^4$ as $R\rightarrow\infty$: in Figure~\ref{fig:mypert2}(b,c) a fit $\Im(\Omega),\Re(\Omega)=AR^2+BR^4$ is applied to the data and excellent agreement is obtained in the limit of large $R$ (a similar fit provides excellent agreement at small $R$, but with different fitting coefficients, and for reasons not relating to the heuristics of Section~\ref{sec:global_modes_sl}).

The instability of the two global modes is now checked.  Although Theorem~\ref{thm:cgl_stability} guarantees that the global mode at $R_\mathrm{c}=0.355$ is linearly unstable, this is now checked independently via 
Floquet analysis; the same analysis can be used for the large-amplitude global mode at $R_\mathrm{c}=2.77$, for which Theorem~\ref{thm:cgl_stability}  does not apply.  Thus,  the trial solution $\delta v=\alpha(x)\mathe^{\sigma t}+\beta(x)\mathe^{\overline{\sigma}t}$ is substituted into Equation~\eqref{eq:stability_cgl} and the following eigenvalue problem is obtained:
\begin{equation}
\imag \sigma \left(\begin{array}{c}\alpha\\\overline{\beta}\end{array}\right)=
\left(\begin{array}{cc}\mathcal{L}-\Omega + 2R_\mathrm{c}^2|u_0|^2& R_\mathrm{c}^2 u_0^2\\
-R_\mathrm{c}^2 \overline{u_0^2}&-\overline{\mathcal{L}}+\Omega-2R_\mathrm{c}^2|u_0|^2\end{array}\right)
\left(\begin{array}{c}\alpha\\\overline{\beta}\end{array}\right).
\label{eq:cgl_floquet}
\end{equation}
It is clear from Equation~\eqref{eq:cgl_floquet} that the eigenvalues $\sigma$ are purely real, or come in complex-conjugate pairs.  

The case with $R_\mathrm{c}=0.355$ is investigated  in  Figure~\ref{fig:floquet_u0}.  The eigenvalue with maximum real part is itself purely real and positive ($\sigma=0.008396$), thereby confirming the asymptotic result in Theorem~\ref{thm:cgl_stability}.
The corresponding $\alpha(x)$ and $\beta(x)$ are also shown in Figure~\ref{fig:floquet_u0}.
\begin{figure}[t]
\centering
\subfigure[]{\includegraphics[width=0.32\textwidth]{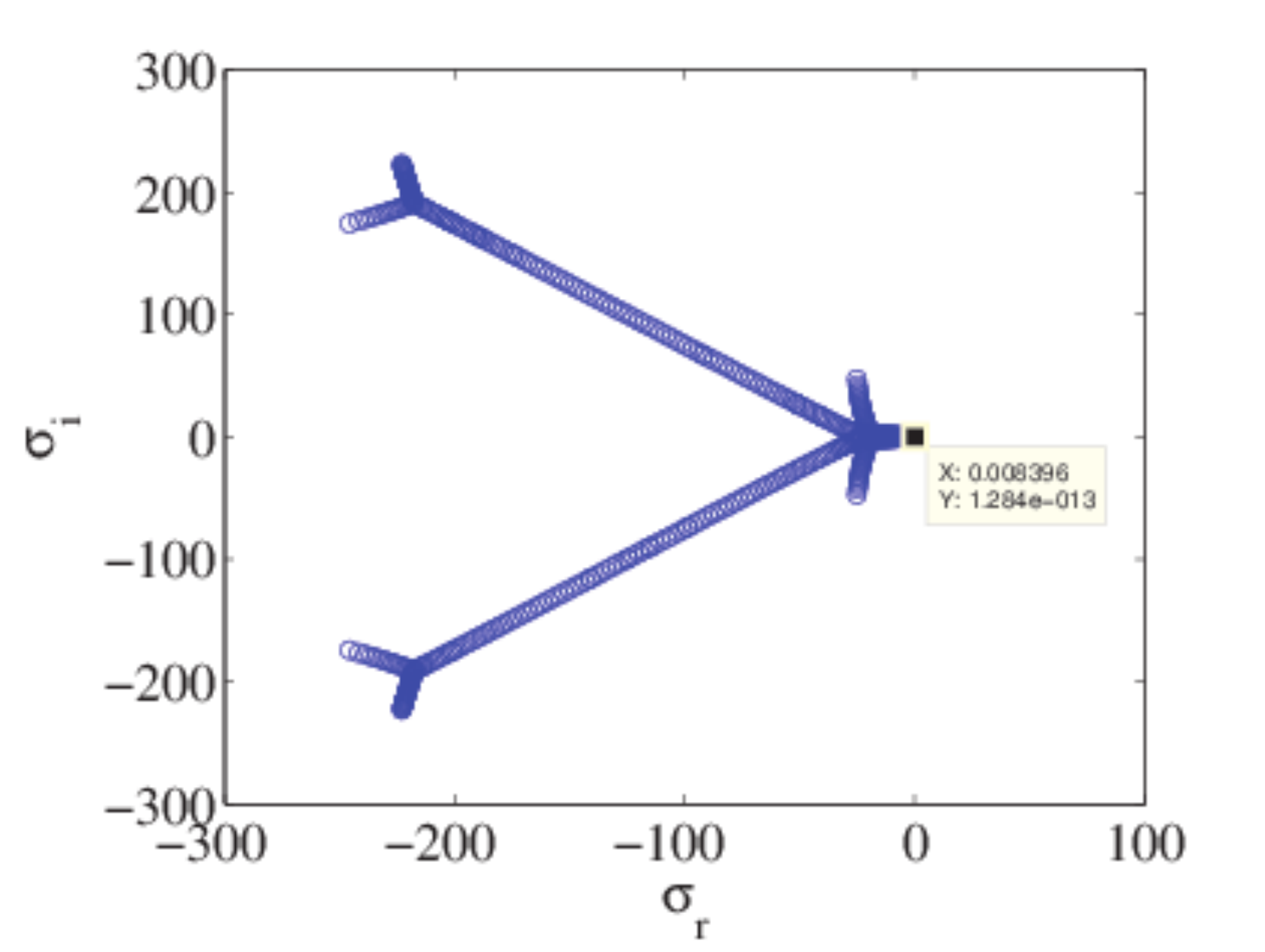}}
\subfigure[]{\includegraphics[width=0.32\textwidth]{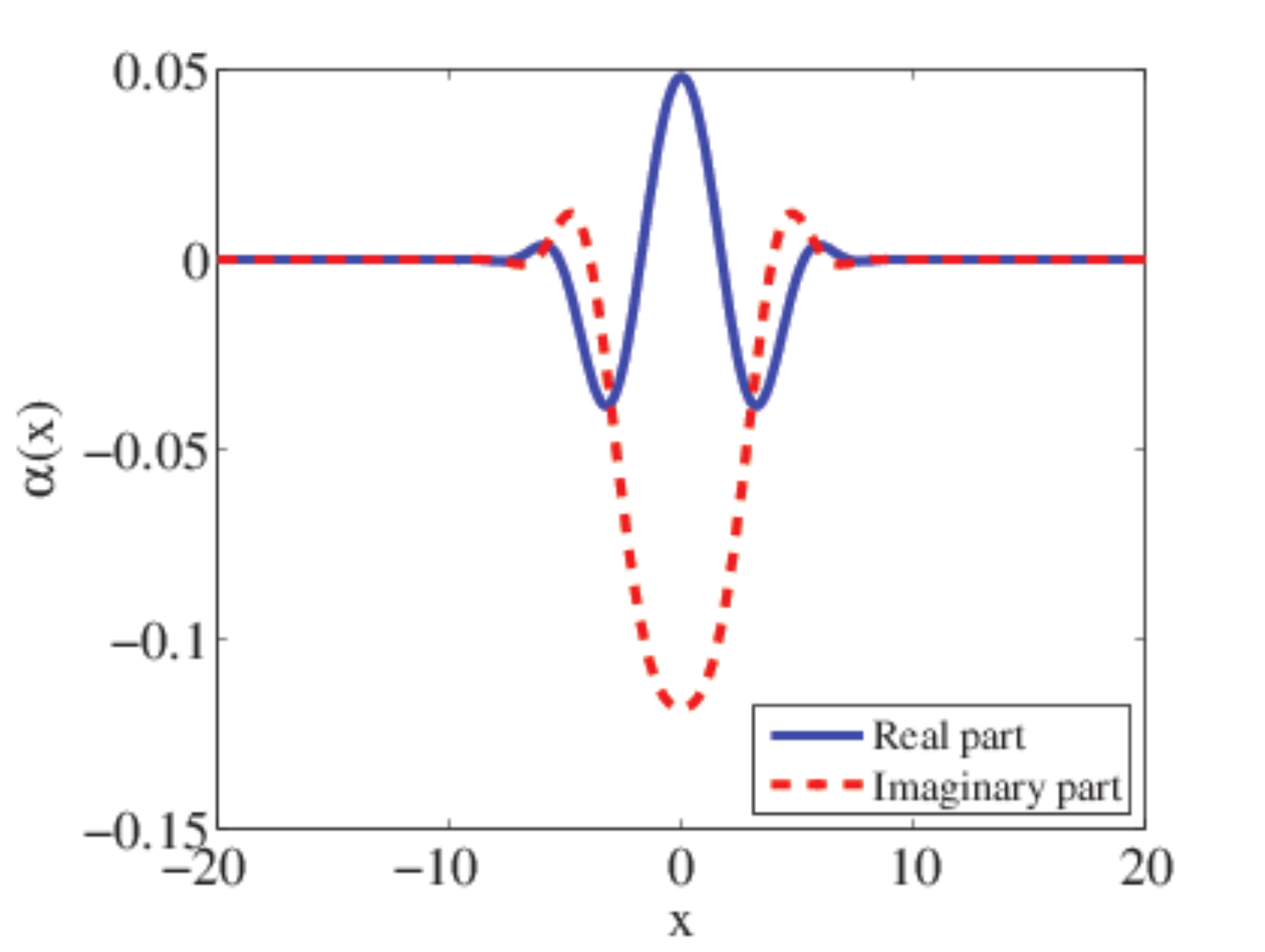}}
\subfigure[]{\includegraphics[width=0.32\textwidth]{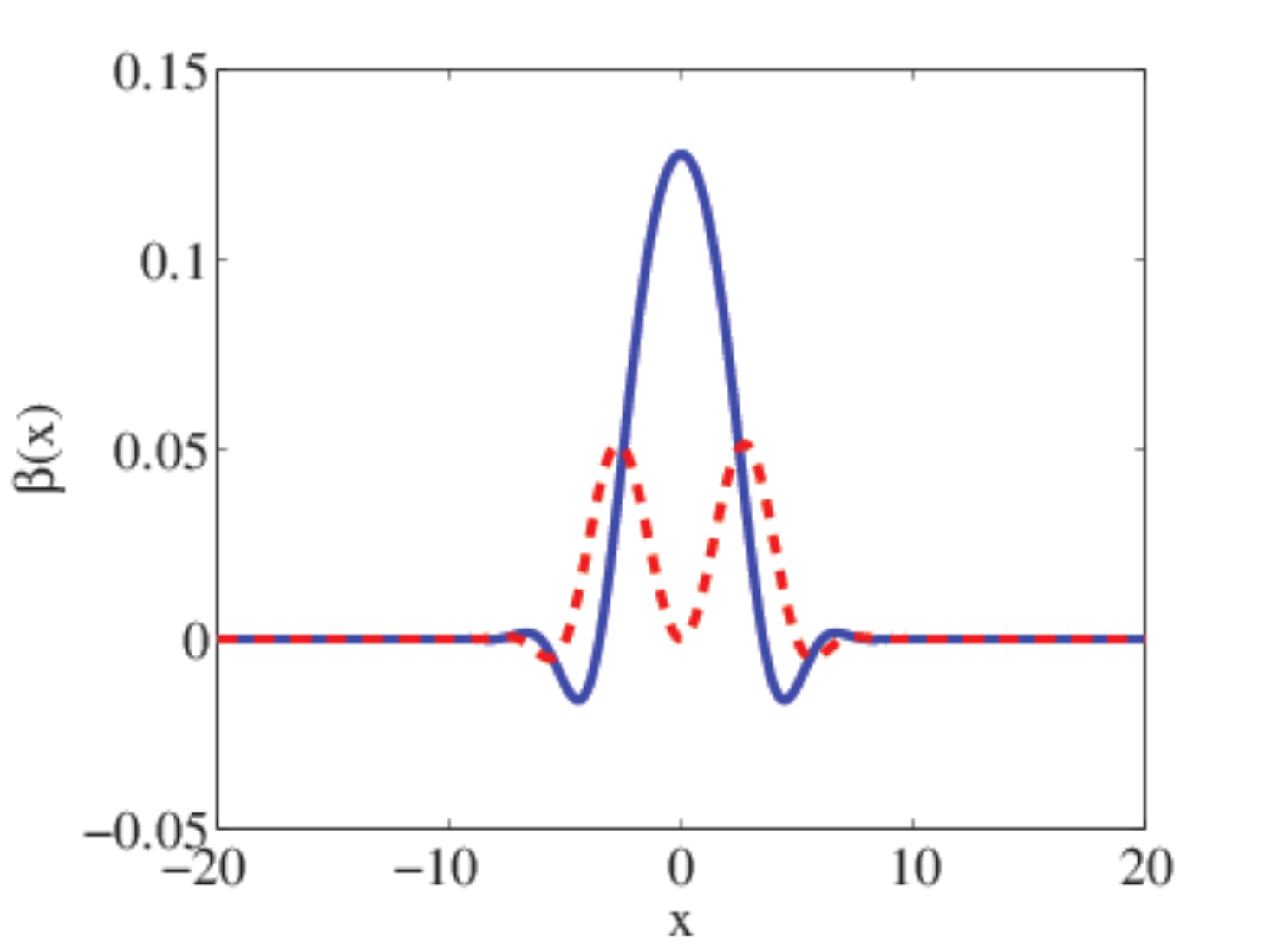}}
\caption{Panel~(a): Spectrum of Floquet exponents corresponding to the linear instability of the global mode with parameters~\eqref{eq:params_u0} and $R_\mathrm{c}=0.355$; Panels (b)-(c): the eigenfunction of the most-dangerous mode arising from the Floquet analysis. }
\label{fig:floquet_u0}
\end{figure}
The projection of $\alpha(x), \beta(x)$ on to the eigenfunctions $\psi_p^{\dagger}$ is shown in Table~\ref{tab:cgl_stability} for the first few $p$-values, confirming that the $p=0$ contribution is dominant, consistent with the analysis in Theorem~\ref{thm:cgl_stability}.
\begin{table}[t]
\centering
\begin{tabular}{|c|c|c|}
\hline
$p$ & $|\langle \psi_p^\dagger,\alpha\rangle|$ & $|\langle \psi_p^\dagger,\beta\rangle|$\\
\hline
\hline
$\phantom{aa}$ 0  $\phantom{aa}$ &  0.5439                 &    0.5439         \\
$\phantom{aa}$ 1  $\phantom{aa}$ &  $9.81\times 10^{-14}$  &    $4.54\times 10^{-14}$   \\
$\phantom{aa}$ 2  $\phantom{aa}$ &  0.0013                 &    0.0013                  \\
$\phantom{aa}$ 3  $\phantom{aa}$ &  $2.81 \times 10^{-14}$ &    $ 1.25\times 10^{-14}$  \\
$\phantom{aa}$ 4  $\phantom{aa}$ &  $8.64 \times 10^{-5}$  &    $8.64\times 10^{-5}$    \\
\hline
\end{tabular}
\caption{Coefficients of the expansion of $\alpha(x)$ and $\beta(x)$ in terms of the eigenbasis $\{\psi_p(x)\}_{p=0}^\infty$ ($R_\mathrm{c}=0.355$).}
\label{tab:cgl_stability}
\end{table}
The unstable mode is an even function of $x$, since the projection of the eigenfunctions onto the odd adjoint eigenfunctions is zero. 
The linear stability of the case with $R_\mathrm{c}=2.77$ is investigated next, in the same manner as before (Figure~\ref{fig:floquet_u1}).  This global mode is also linearly unstable, and the most-dangerous Floquet eigenvector is an odd function of $x$, as evidenced by the figure and by Table~\ref{tab:cgl_stability2}.
\begin{figure}[t]
\centering
\subfigure[]{\includegraphics[width=0.32\textwidth]{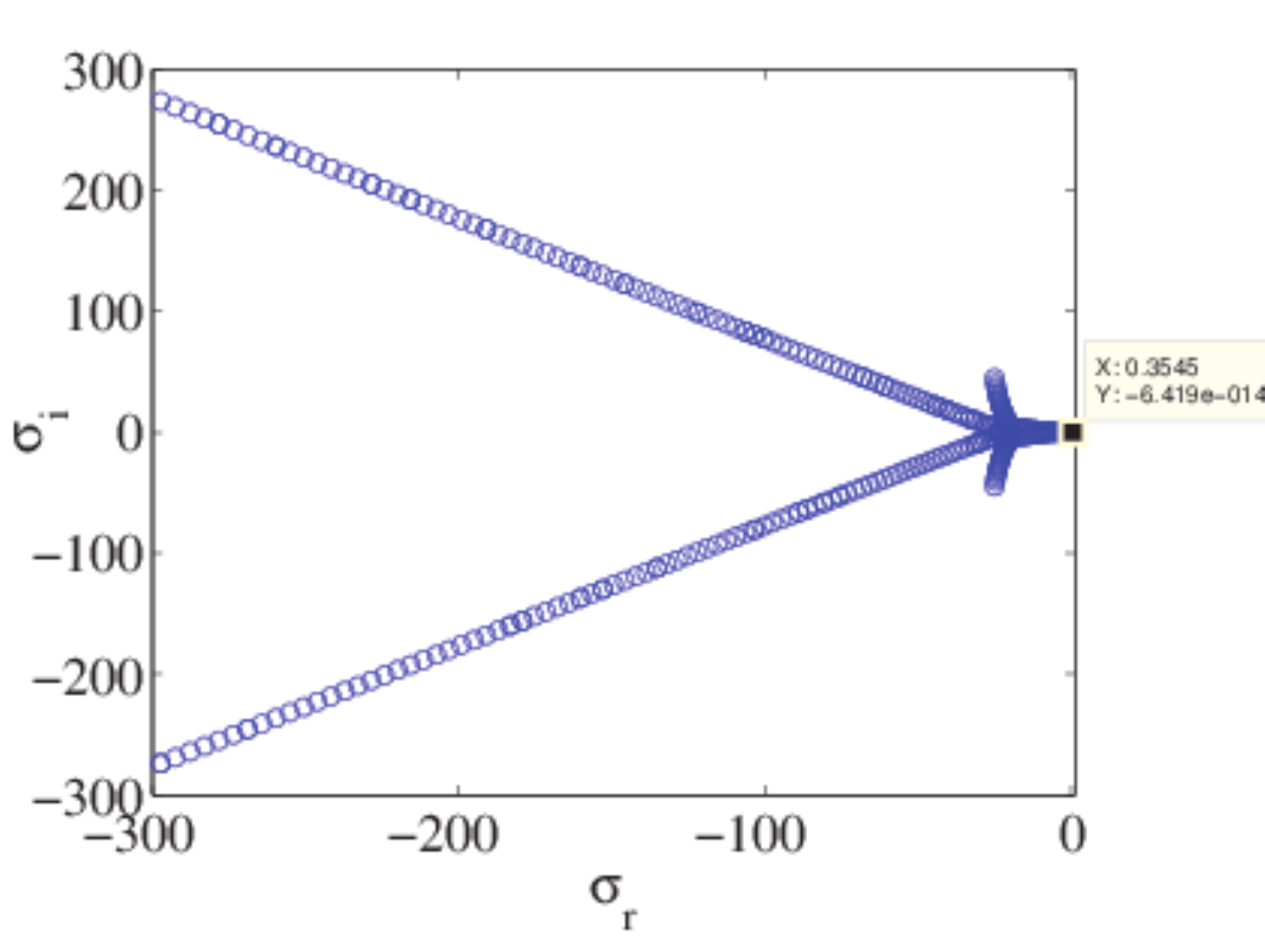}}
\subfigure[]{\includegraphics[width=0.32\textwidth]{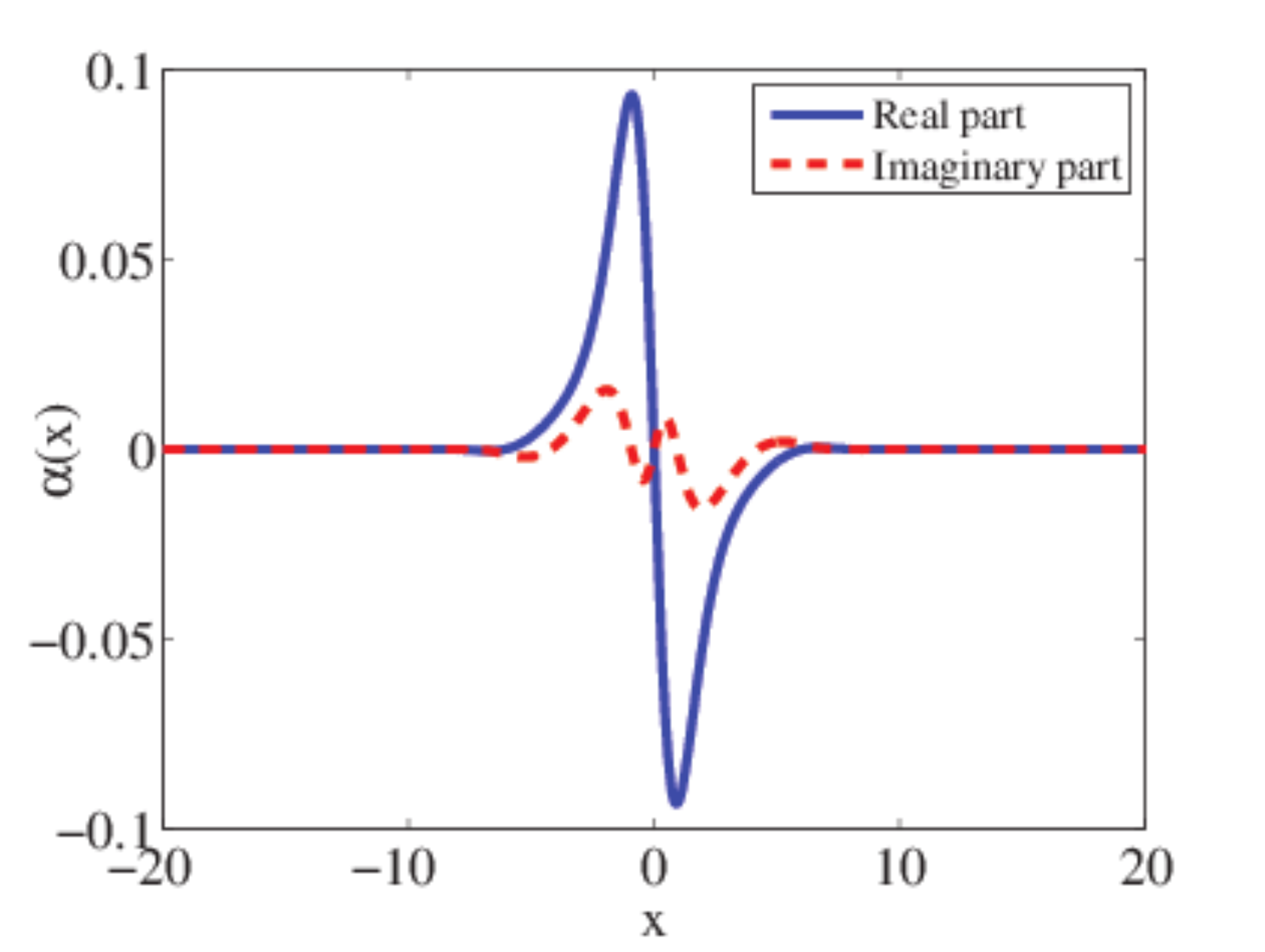}}
\subfigure[]{\includegraphics[width=0.32\textwidth]{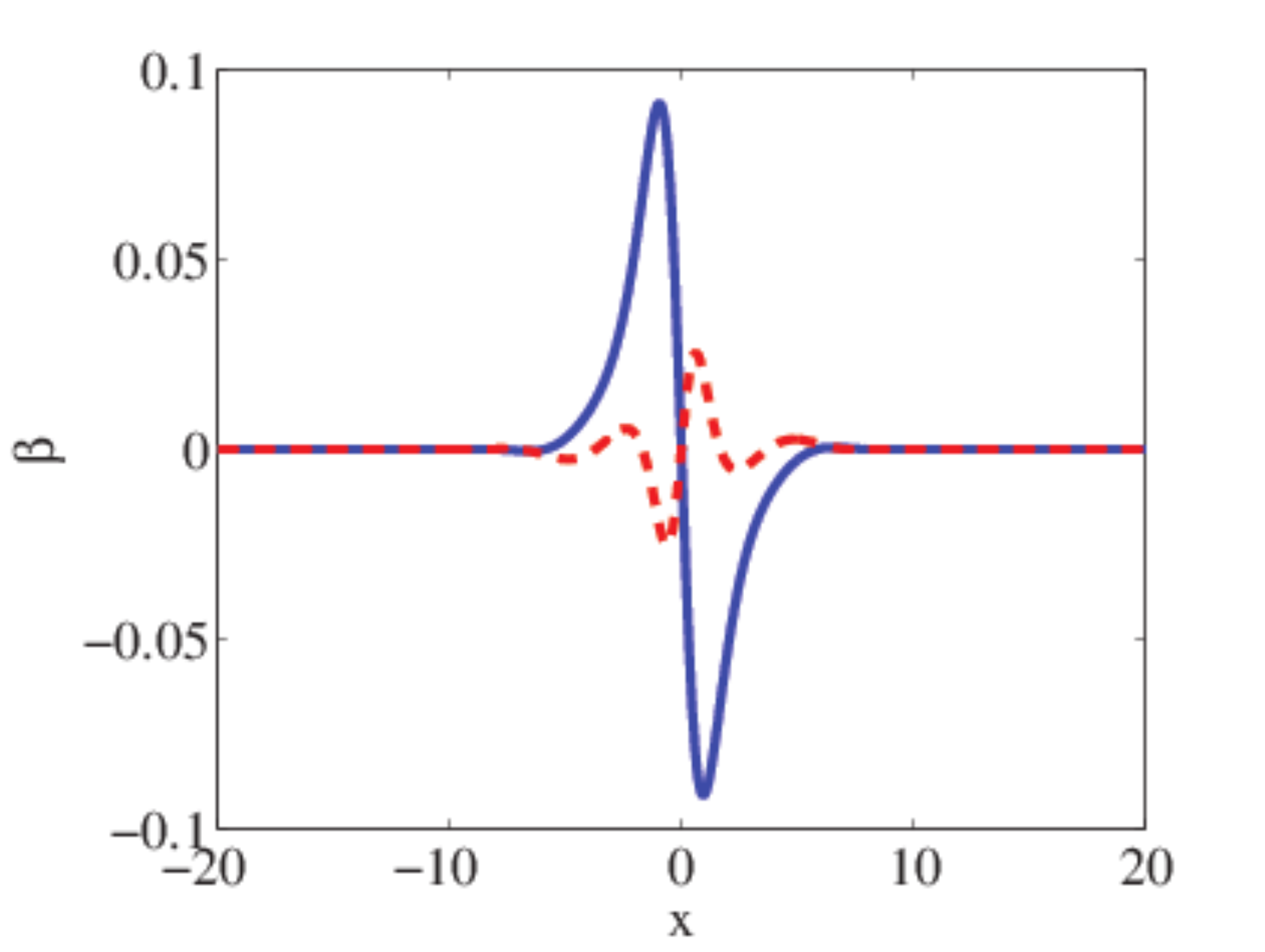}}
\caption{Panel~(a): Spectrum of Floquet exponents corresponding to the linear instability of the global mode with parameters~\eqref{eq:params_u0} and $R_\mathrm{c}=2.77$; Panels (b)-(c): the eigenfunction of the most-dangerous mode arising from the Floquet analysis. }
\label{fig:floquet_u1}
\end{figure}
\begin{table}[t]
\centering
\begin{tabular}{|c|c|c|}
\hline
$p$ & $|\langle \psi_p^\dagger,\alpha\rangle|$ & $|\langle \psi_p^\dagger,\beta\rangle|$\\
\hline
\hline
$\phantom{aa}$ 0  $\phantom{aa}$ &  $1.94\times 10^{-10}$  &    $1.92\times 10^{-10}$         \\
$\phantom{aa}$ 1  $\phantom{aa}$ &  $0.488$                &    $0.488$   \\
$\phantom{aa}$ 2  $\phantom{aa}$ &  $1.64\times 10^{-10}$  &    $1.64\times 10^{-10}$              \\
$\phantom{aa}$ 3  $\phantom{aa}$ &  $0.141$                &    $0.141$  \\
$\phantom{aa}$ 4  $\phantom{aa}$ &  $5.72 \times 10^{-11}$ &    $5.72 \times 10^{-11}$    \\
$\phantom{aa}$ 5  $\phantom{aa}$ &  $0.0338$               &    $0.0338$    \\
$\phantom{aa}$ 6  $\phantom{aa}$ &  $1.66 \times 10^{-11}$ &    $1.66 \times 10^{-11}$    \\
$\phantom{aa}$ 7  $\phantom{aa}$ &  $0.00812$              &    $0.00812$    \\
$\phantom{aa}$ 8  $\phantom{aa}$ &  $4.62 \times 10^{-12}$ &    $4.62\times 10^{-12}$    \\
$\phantom{aa}$ 9  $\phantom{aa}$ &  $0.00200$              &    $0.00200$   \\
\hline
\end{tabular}
\caption{Coefficients of the expansion of $\alpha(x)$ and $\beta(x)$ in terms of the eigenbasis $\{\psi_p(x)\}_{p=0}^\infty$ ($R_\mathrm{c}=2.77$).}
\label{tab:cgl_stability2}
\end{table}
For both global modes, we have $|\langle \psi_p^\dagger,\alpha\rangle|=|\langle\psi_p^\dagger,\beta\rangle|$.  This is a consequence of the structure of the eigenvalue problem~\eqref{eq:cgl_floquet}: not only are the eigenvalues real or complex-conjugate pairs, but also, $\alpha(x)=\beta(x)\mathe^{\imag\varphi}$, where $\varphi$ is a constant phase.

For the particular parameter choice in Equation~\eqref{eq:params_u0}, we have uncovered two unstable global modes.  It is of interest to understand how these global modes manifest themselves in the context of the full time-dependent equation~\eqref{eq:cgl_nonlinear}.  For that purpose, we have carried out direct numerical simulations of Equation~\eqref{eq:cgl_nonlinear} using a pseudospectral code and a fourth-order Adams--Bashforth temporal discretization.  The initial condition
\begin{equation}
u(x,t=0)=k_1\psi_0(x)[1+k_2\psi_3(x)],
\label{eq:ic_cgl}
\end{equation}
where $k_1$ and $k_2=0.00001$ are real constants, and $k_1$ is chosen such that $|u(x,t=0)|=\rho$, where $\rho$ is a parameter to be varied.
The timestep is chosen as $\Delta t=10^{-3}$, the simulation domain has length $L=40$, and $1000$ gridpoints are used.  These numerical parameters are adjusted to check that the simulation has converged numerically.  A spacetime plot of the absolute value of the solution $|u(x,t)|$ is shown in Figure~\ref{fig:cgl_spacetime1}.  The spatial structure is localized in space shows a characteristic oscillation in time.  However, the temporal oscillations are not persistent: there is a quasi-periodic drift away from the oscillatory state, and then a sudden `snapping back' to the oscillatory state. 

To connect these data to the global modes found in the theoretical considerations, consideration is given to the quantities $X(t)=\Re[u(0,t)]$,  $Y(t)=\Im[u(0,t)]$, for various values of $\rho=|u(x,t=0)|$.
\begin{figure}[t]
\centering
\includegraphics[width=0.6\textwidth]{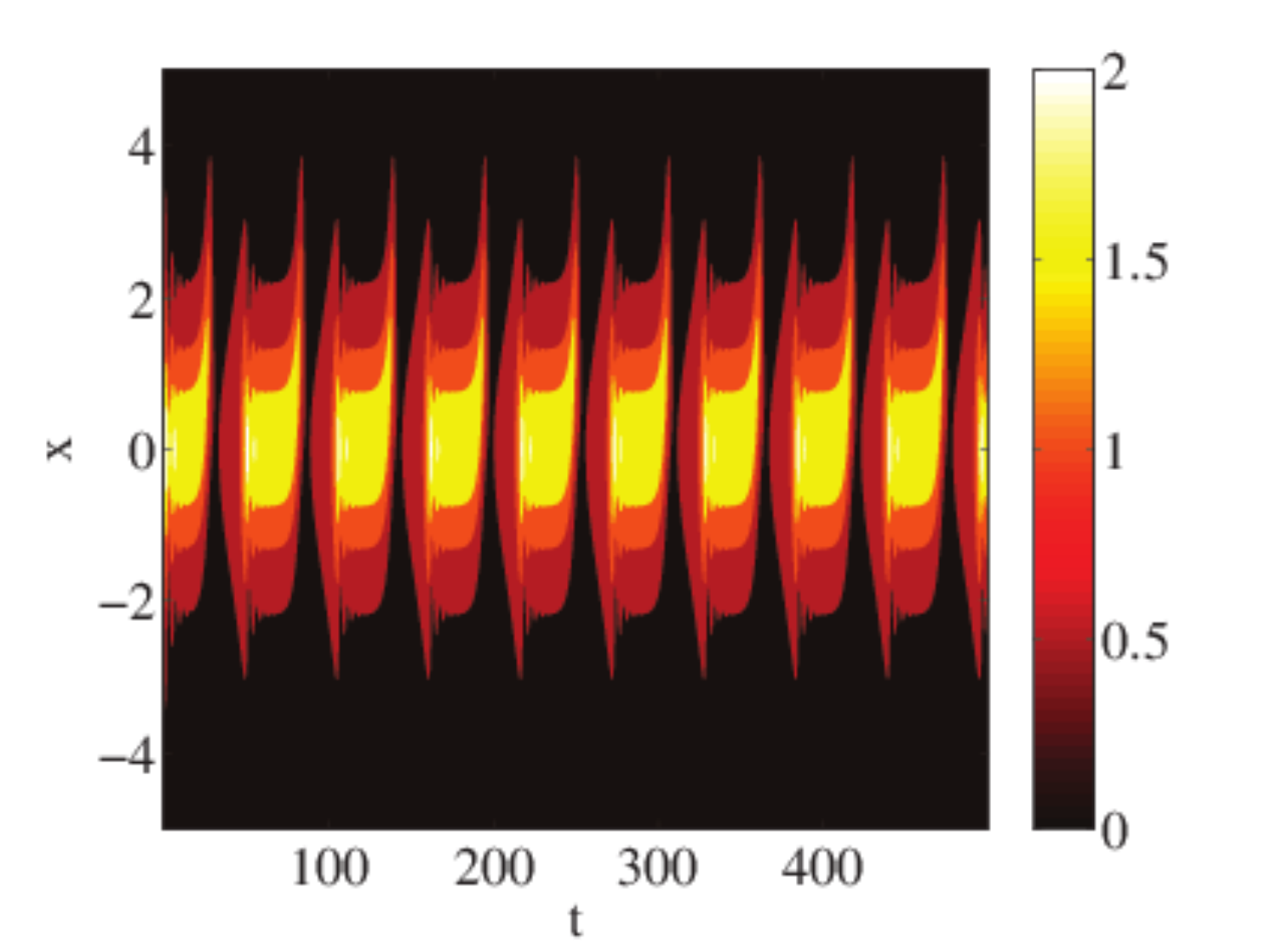}
\caption{Spacetime plot of $|u(x,t)|$ for the direct numerical simulation.  The parameters are as given in Equations~\eqref{eq:params_u0} and~\eqref{eq:ic_cgl}, and $\rho:=|u(x,t=0)|=3$. }
\label{fig:cgl_spacetime1}
\end{figure}
These results are shown in Figure~\ref{fig:cgl_spacetime1}.
\begin{figure}[htb]
\centering
\subfigure[$\,\,\rho=0.2$]{\includegraphics[width=0.45\textwidth]{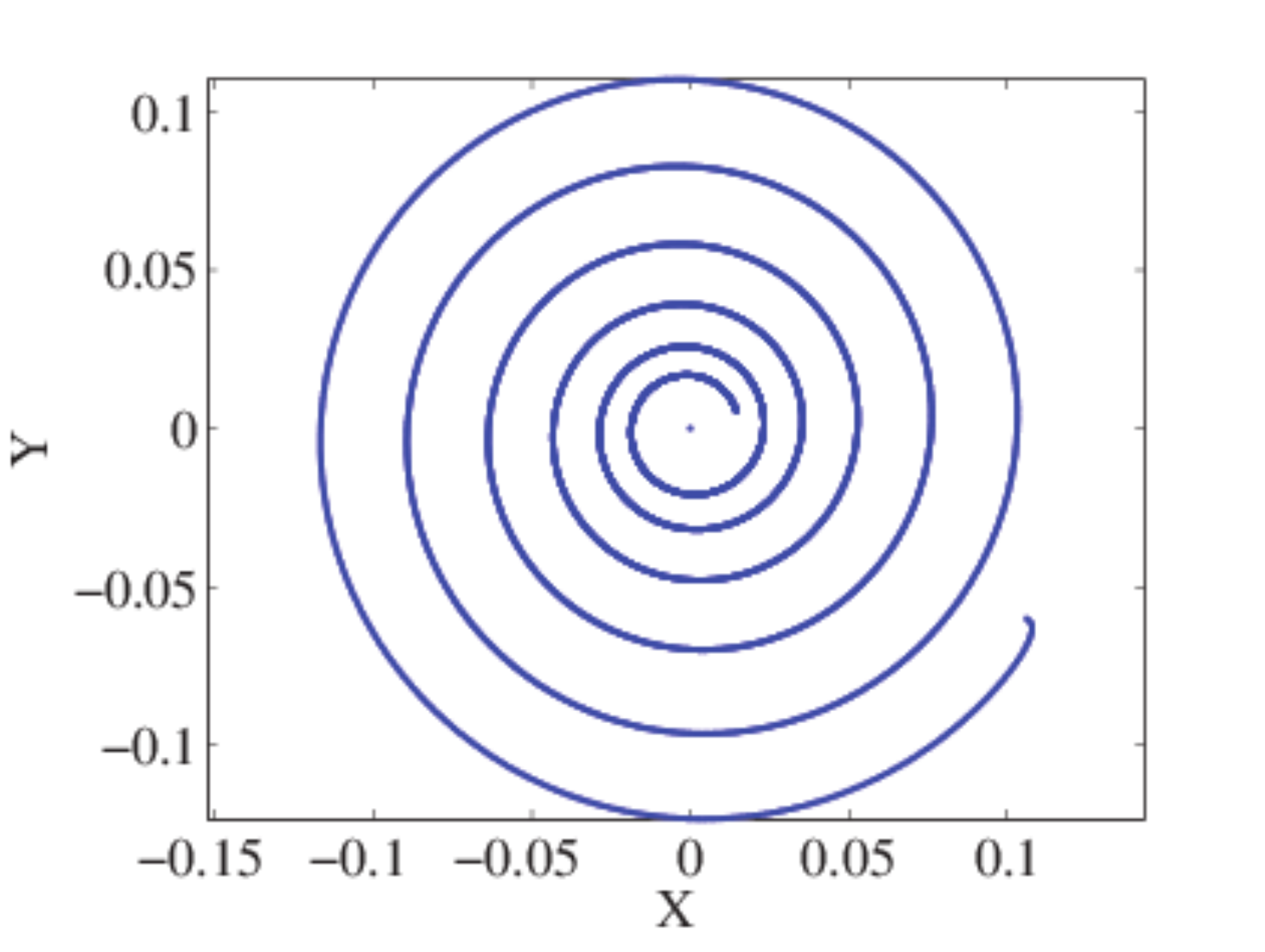}}
\subfigure[$\,\,\rho=0.3$]{\includegraphics[width=0.45\textwidth]{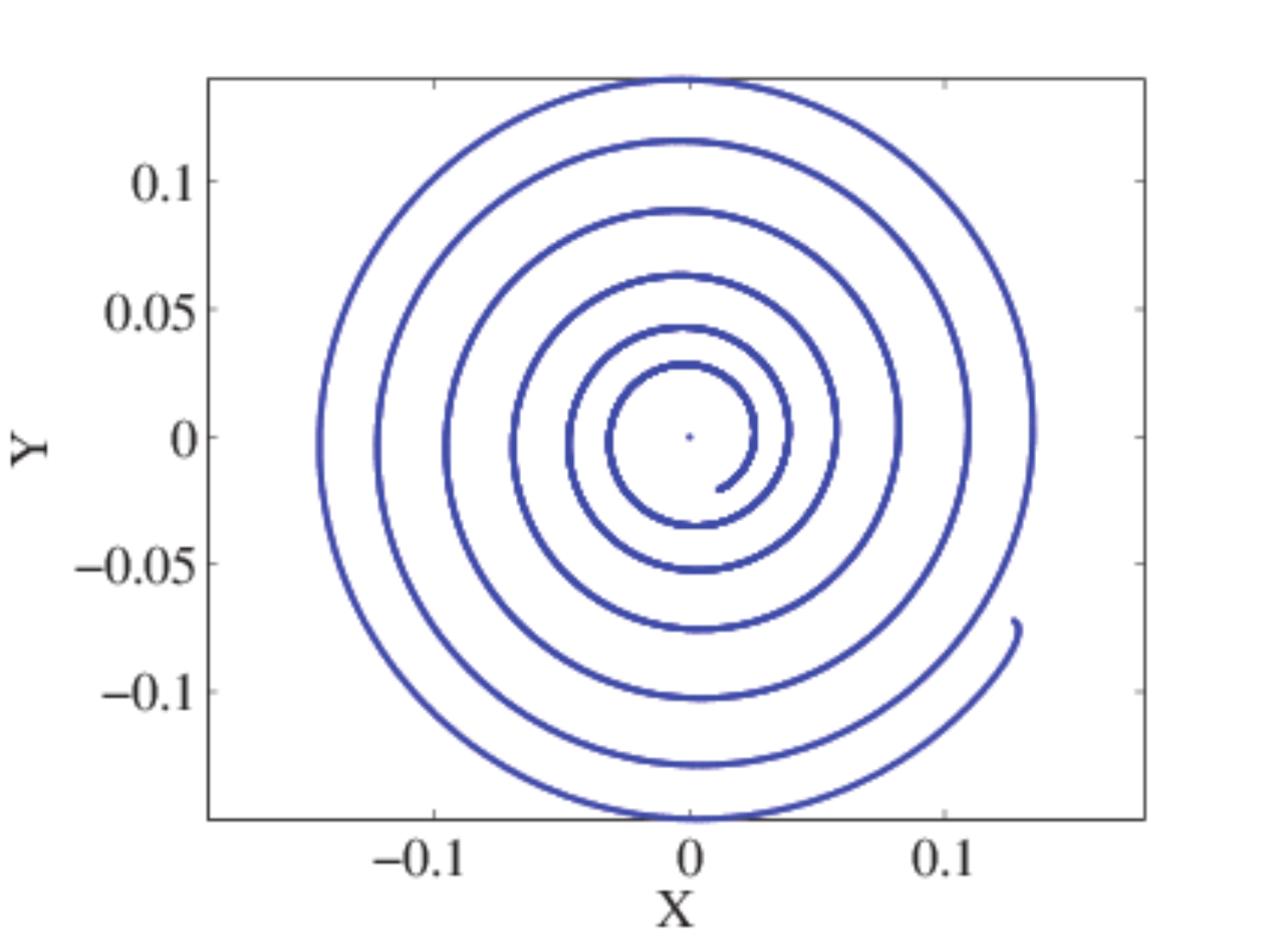}}\\
\subfigure[$\,\,\rho=0.4$]{\includegraphics[width=0.32\textwidth]{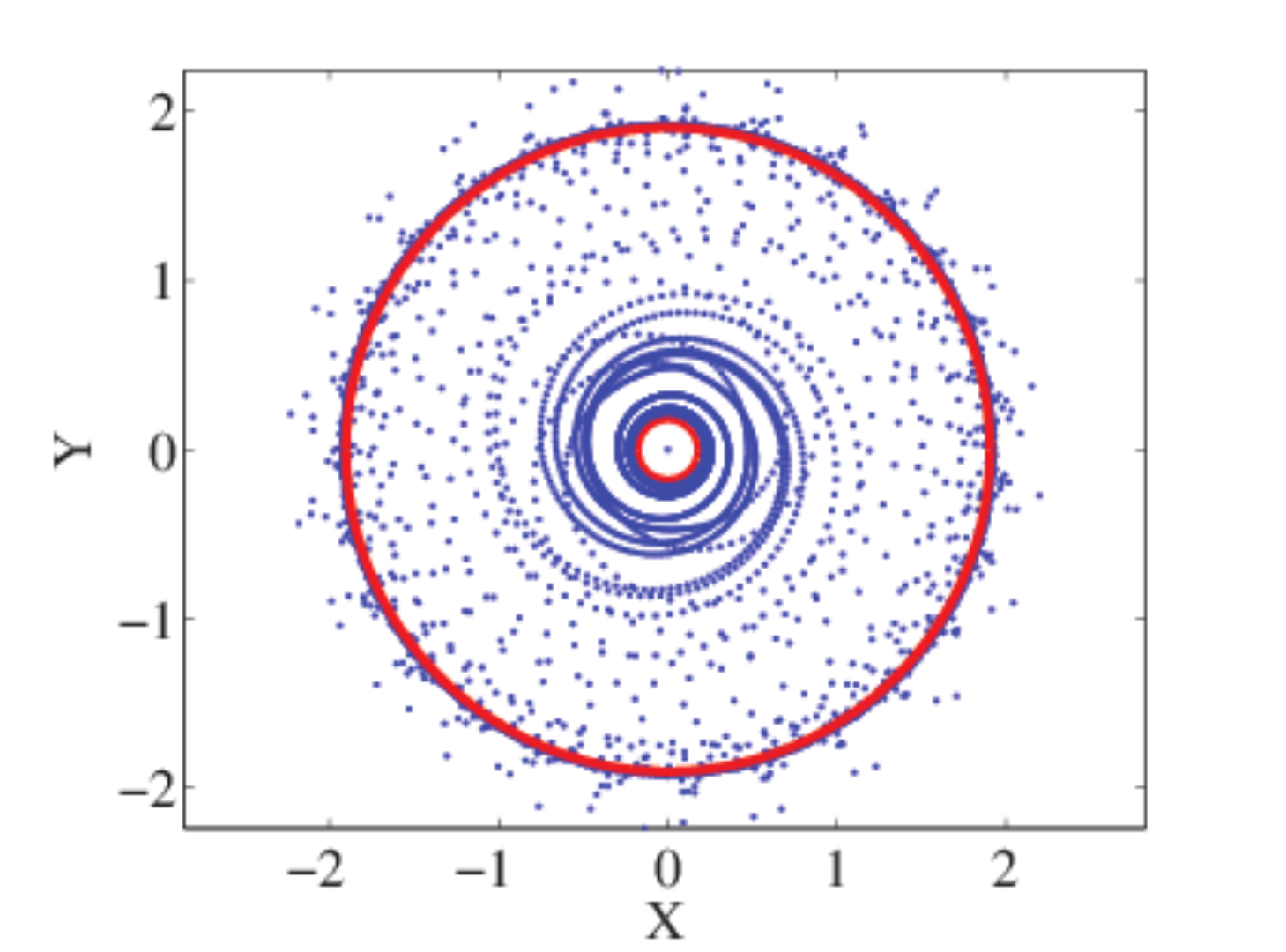}}
\subfigure[$\,\,\rho=3$]{\includegraphics[width=0.32\textwidth]{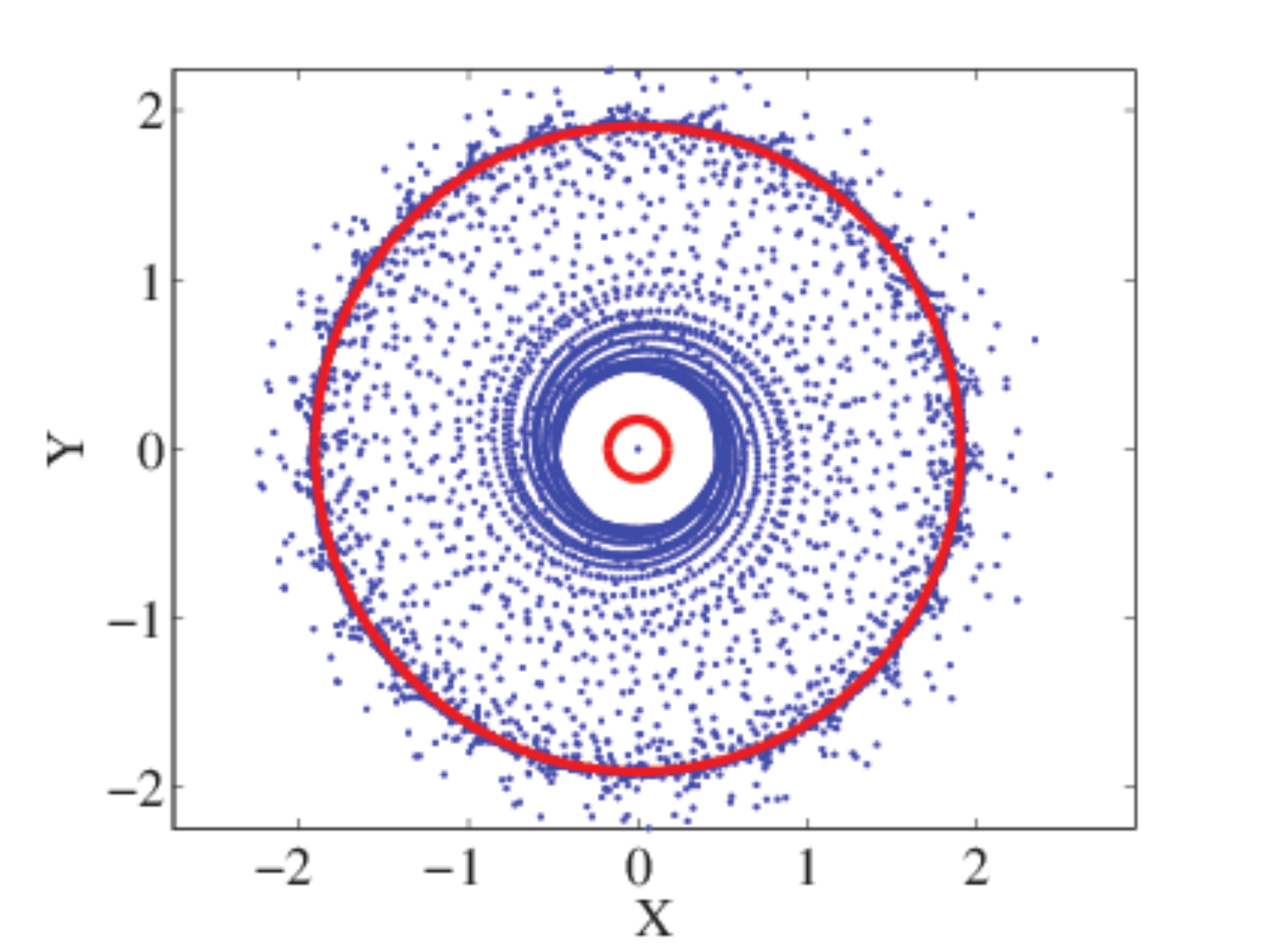}}
\subfigure[$\,\,\rho=100$]{\includegraphics[width=0.32\textwidth]{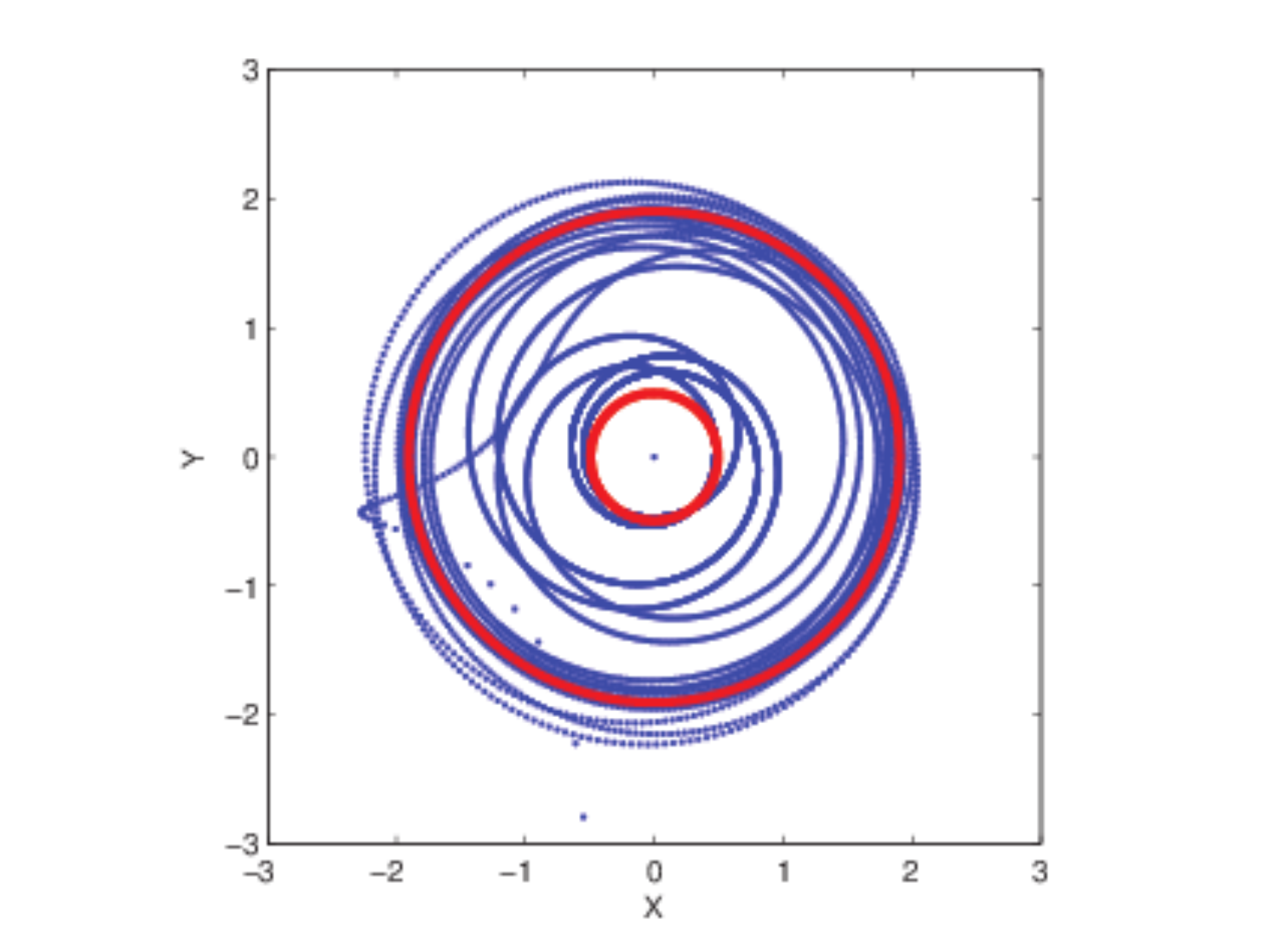}}
\caption{Plot of $Y(t)$ versus $X(t)$ for various values of $\rho$.  In Panels (c)--(d) the radii of the global modes   $\mathpzc{R}_{1,2}:=|u_{01,2}(0)|R_\mathrm{c1,2}$ in the $XY$ space are also shown.}
\label{fig:cgl_spacetime1}
\end{figure}
Panels (a)--(b) correspond to a weakly nonlinear scenario wherein the disturbance $u(x,t)$ decays to zero: the initial condition is repelled from the unstable global mode at $R_\mathrm{c1}=0.355$ (mode 1) and spirals towards the linearly stable fixed point $u=0$.  In Panel (c) there is a sharp change: the initial condition is again repelled from the unstable global mode at $R_\mathrm{c1}=0.355$ but attracted to the \textit{unstable} mode at $R_\mathrm{c2}=2.77$ (mode 2).  Mode 2 is clearly visible in Panel~(c) as a circle of radius $\radiustwo:=|u_{02}(0)|R_\mathrm{c2}=1.905$, where $u_{02}(x)$ is the spatial part of the second global mode.  Now, although mode 2 is a linearly unstable periodic orbit, it is unstable only in a one-sided sense: when the system occupies a state wherein $\sqrt{X^2+Y^2}>\radiustwo$, the system is drawn back towards mode 2, while for $\sqrt{X^2+Y^2}<\radiustwo$ the system is drawn towards the fixed point at zero.  However, the system cannot attain the fixed point at zero: it is forbidden from doing so by the presence of mode 1.  
More precisely, mode 1 is of radius $\radiusone:=|u_{01}|R_\mathrm{c2}=0.494$ and is therefore entirely contained inside the disc inscribed by mode 2.  Also, from Figure~\ref{fig:cgl_spacetime1}, mode 1 is linearly unstable in a two-sided sense.  Hence,  when the system is in the annular region $\radiusone<\sqrt{X^2+Y^2}<\radiustwo$, it is repelled back towards mode 2 (Panels (c)--(d)).  Thus, starting near mode 2, the system makes regular excursions from the periodic orbit but is forbidden from reaching the fixed point at $u=0$.  In conclusion, at large amplitudes such that $\sqrt{X^2+Y^2}>\radiustwo$, the system `snaps back' to the periodic mode at mode 2, while at smaller amplitudes, the system is drawn towards mode 1, repelled from mode 1, and drawn back again towards mode 2. 

The excursions away from mode 2 take place with quasi-regularity (Figure~\ref{fig:cgl_spectral}).  However, a close inspection of this figure shows that the excursions do not have a single regular period.  This is confirmed by examining the power spectrum $|\widehat{X}_\omega|$: there is a single sharp peak at $\omega=2.19\pm 0.01$ in close correspondence with global mode 2 with $\Omega=2.24$ (other power spectra based on $X$ and $Y$ yield the same conclusions).  There is a range of other 
\begin{figure}[htb]
\centering
\subfigure[]{\includegraphics[width=0.45\textwidth]{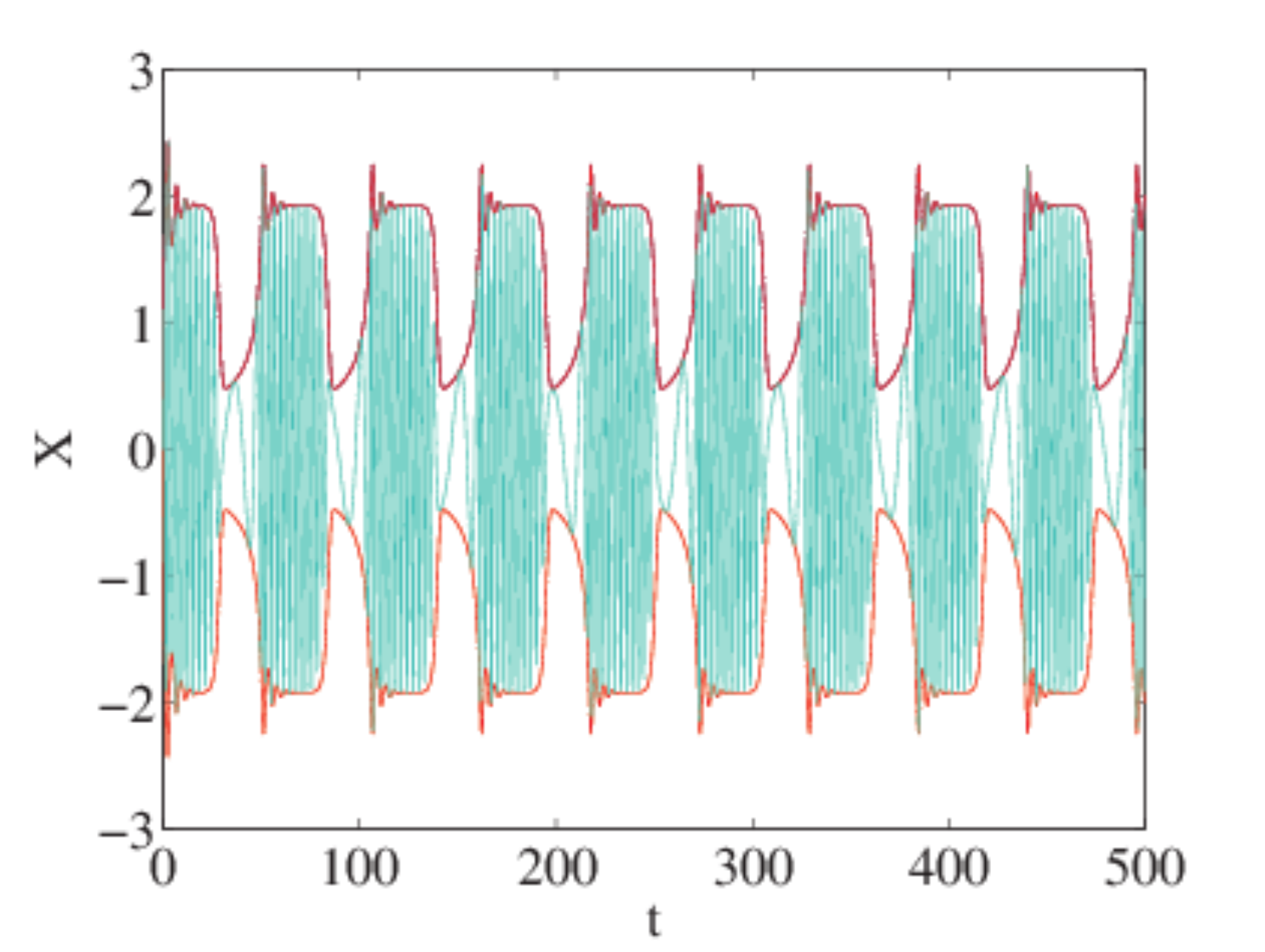}}
\subfigure[]{\includegraphics[width=0.45\textwidth]{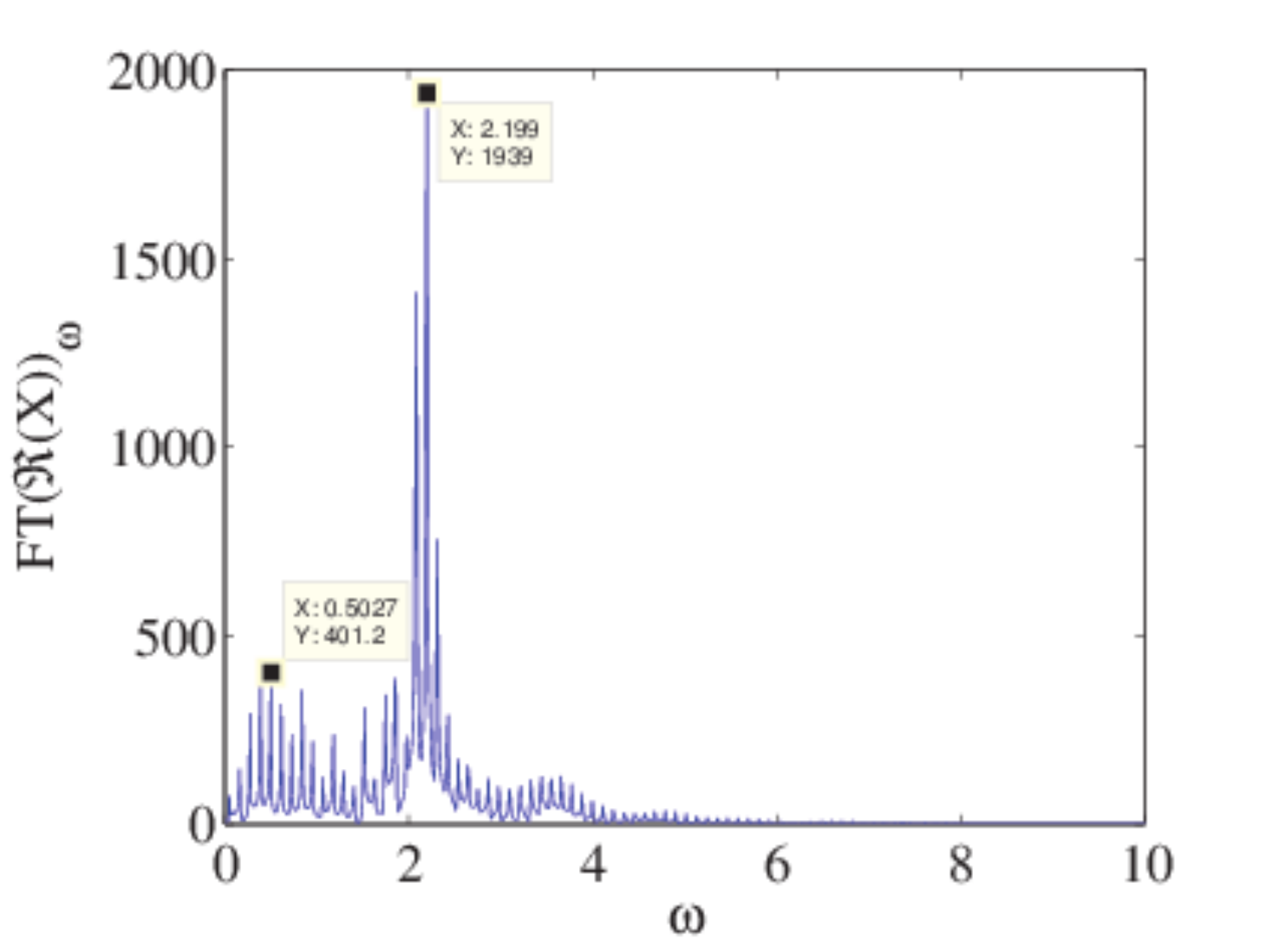}}
\caption{Numerical simulations at $\,\,\rho=3$.  (a) Plot of $\Re(t)$ versus $X(t)$ showing the envelope curves; (b) Fourier transform of the same showing a single sharp maximum corresponding to the global-mode frequency at mode 2.}
\label{fig:cgl_spectral}
\end{figure}
prominent frequencies $\omega=\{0.277,0.390,0.503,0.616,0.842\}$ corresponding to a certain narrow (but still finite) range of frequencies at which the excursions from mode 2 take place.  For these reasons, the motion near the second global mode is chaotic.

In spite of the chaoticity of the investigated trajectories, numerical evidence from Figure~\ref{fig:cgl_spacetime1} suggests that the chaotic motion is bounded, in the sense that $\sup_x|u(x,t)|<\infty$ for all time.  Moreover, investigation of Figure~\ref{fig:cgl_spectral} indicates that only a handful of frequencies are active, indicating the possibility that the dynamics are `slaved' of master modes arising from the linearized dynamics.   Rigorous analysis of this putative slaving phenomenon (using the theoretical tools advocated in Reference~\cite{doering1988low}) will therefore be a subject of a follow-up work on the theoretical side.

\subsection{Other parameter values: the case $U\neq 0$}

A second set of parameters is also taken, with $U=1$ but otherwise identical to Equation~\eqref{eq:params_u0}.  The corresponding value of $k^2$ is $k^2= 0.1976$.  A plot of $\Im(\Omega)$ as a function of $R$ is shown in Figure~\ref{fig:study_u1}(a).
\begin{figure}
	\centering
		\subfigure[]{\includegraphics[width=0.32\textwidth]{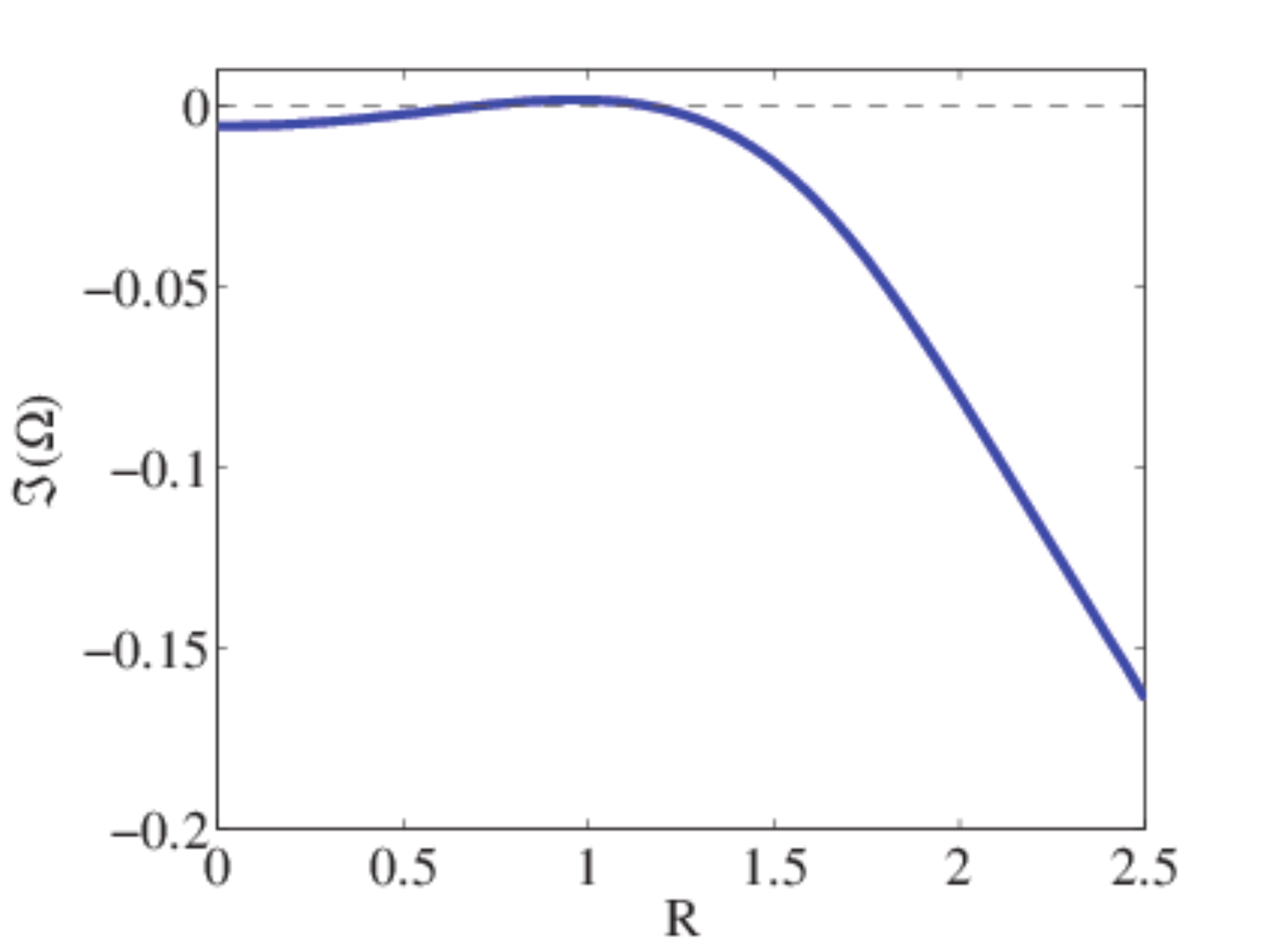}}
		\subfigure[]{\includegraphics[width=0.32\textwidth]{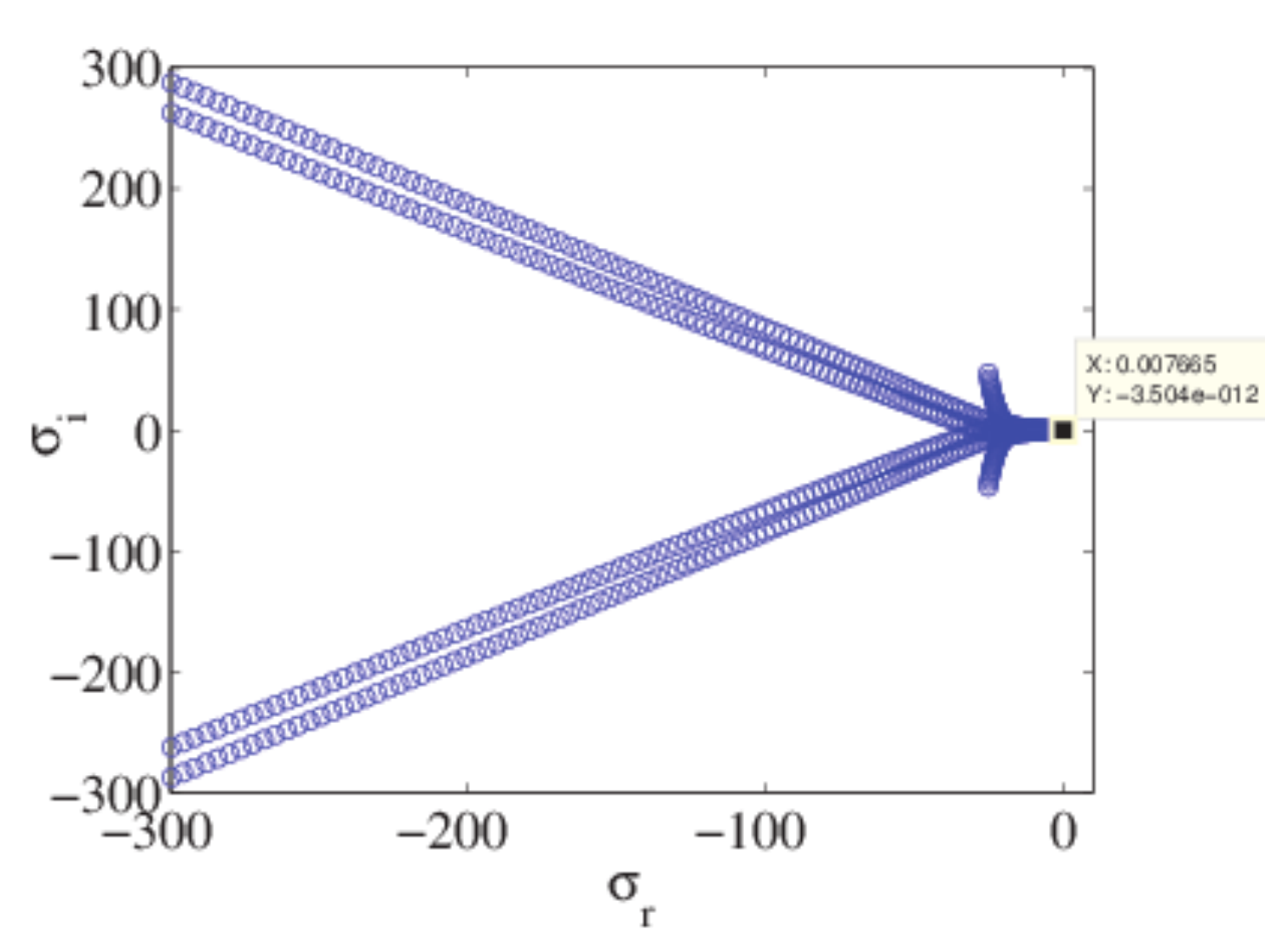}}
		\subfigure[]{\includegraphics[width=0.32\textwidth]{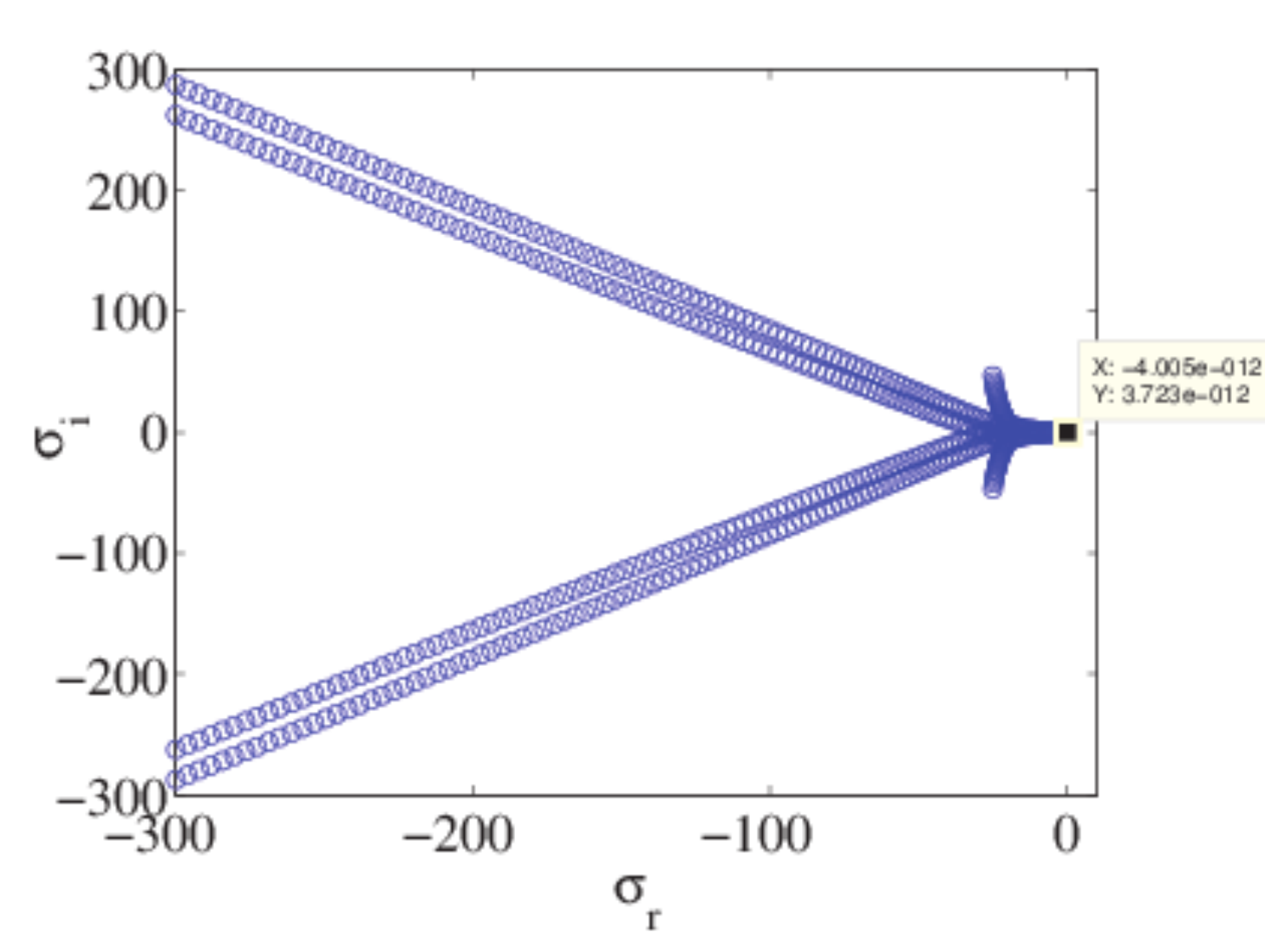}}
		\caption{The case $U=1$.  (a) Plot of $\Im(\Omega)$ as a function of $R$; (b) Floquet analysis at the global mode $R_\mathrm{c}=0.700$; (c) Floquet analysis at the global mode $R_\mathrm{c}=1.166$}
	\label{fig:study_u1}
\end{figure}
Two global modes are found at $R_{\mathrm{c}1}=0.7$ and $R_{\mathrm{c}2}=1.166$, with eigenfrequencies $\Omega=0.2734$ and $0.4064$ respectively.  A Floquet analysis is performed and the first mode is revealed to be unstable (consistent with the perturbation theory already derived) while the second mode is found to be neutrally stable (Figures~\ref{fig:study_u1}(b),(c)).
To examine whether the neutrally stable global mode is selected in a temporally-evolving scenario, we have carried out direct numerical simulations of Equation~\eqref{eq:cgl_nonlinear} using the same method as before, and with the initial condition given by Equation~\eqref{eq:ic_cgl}).  The results are shown in Figure~\ref{fig:dns_u1}.
\begin{figure}
	\centering
		\subfigure[]{\includegraphics[width=0.32\textwidth]{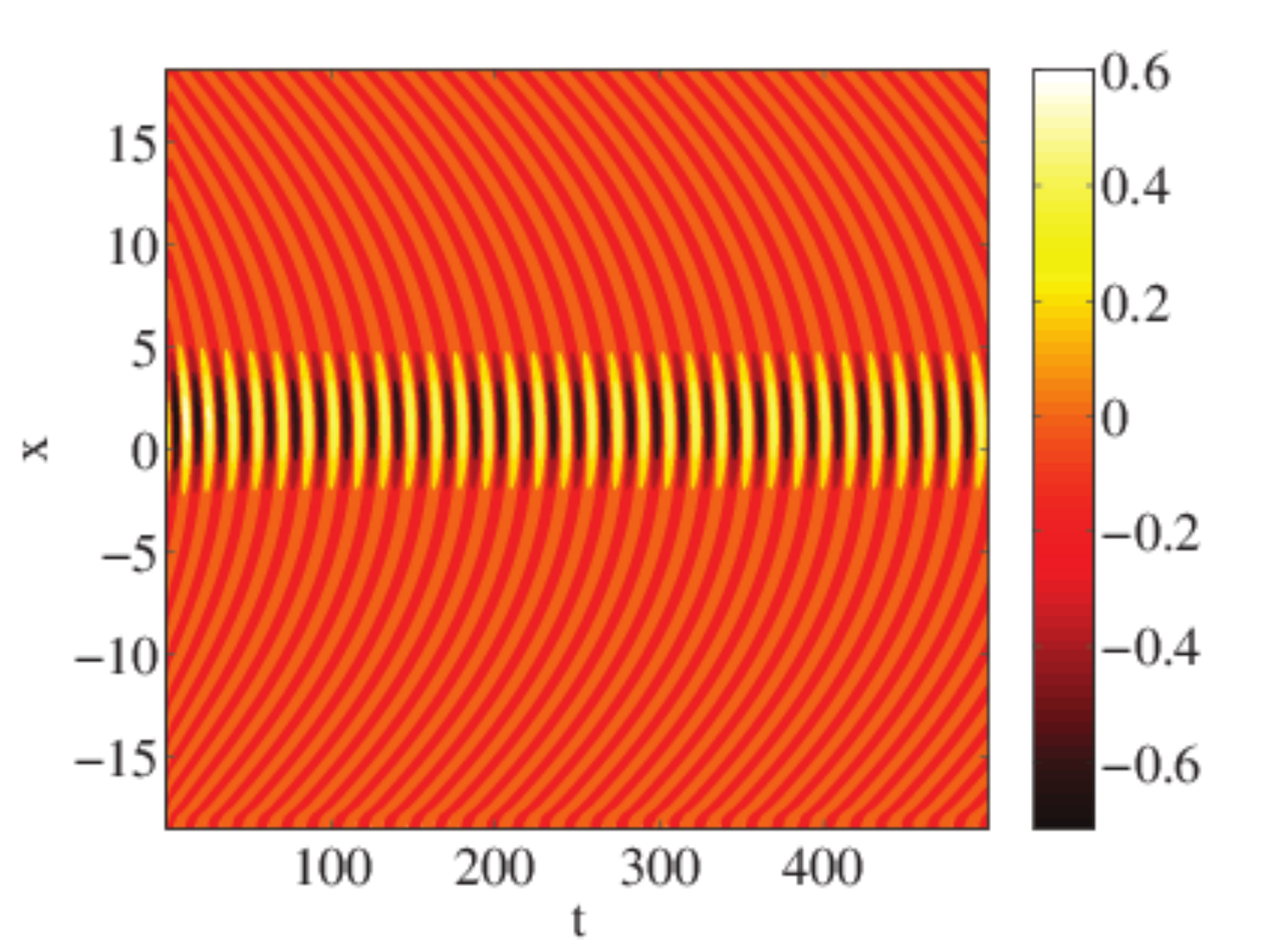}}
		\subfigure[]{\includegraphics[width=0.32\textwidth]{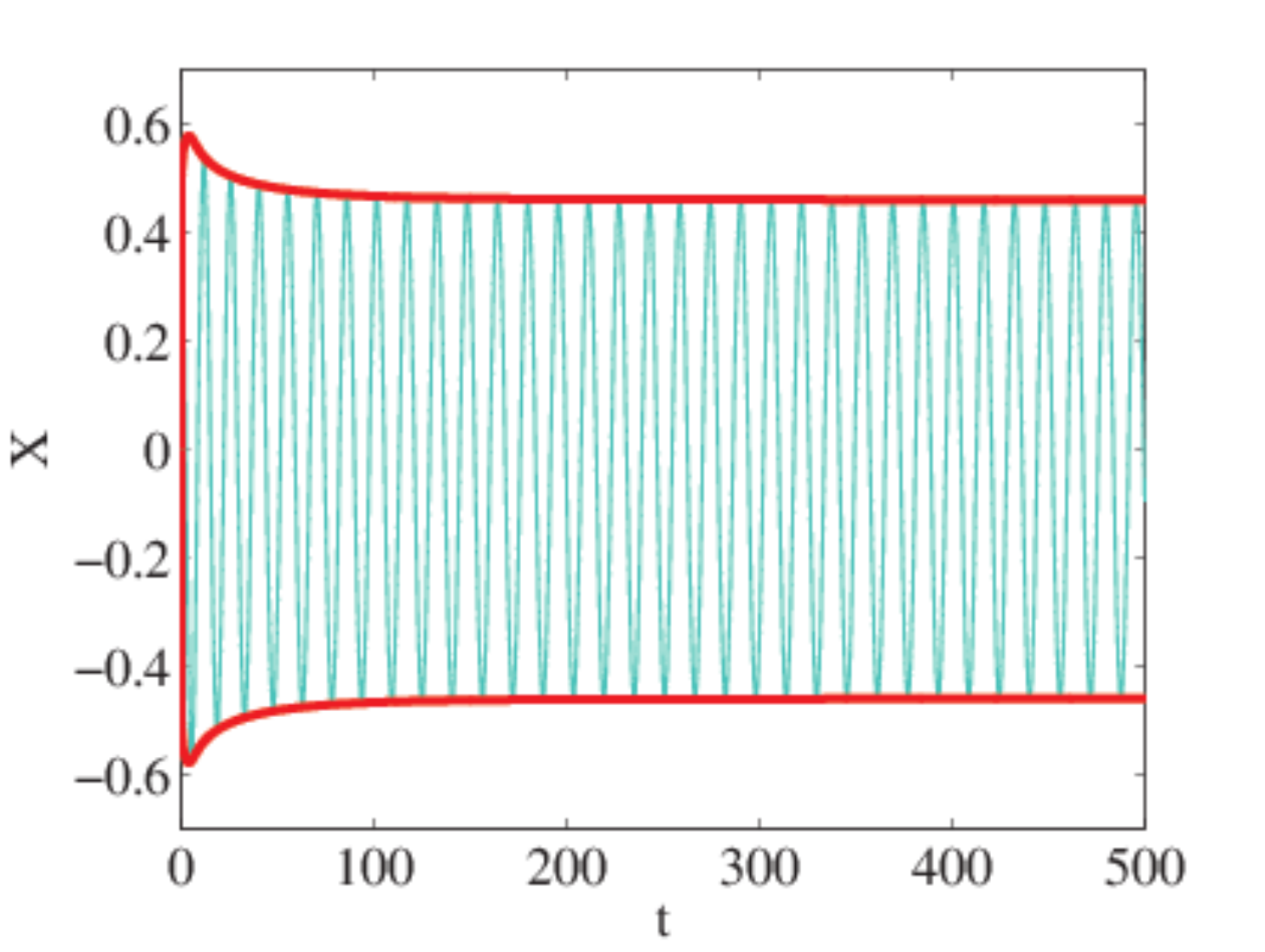}}
		\subfigure[]{\includegraphics[width=0.32\textwidth]{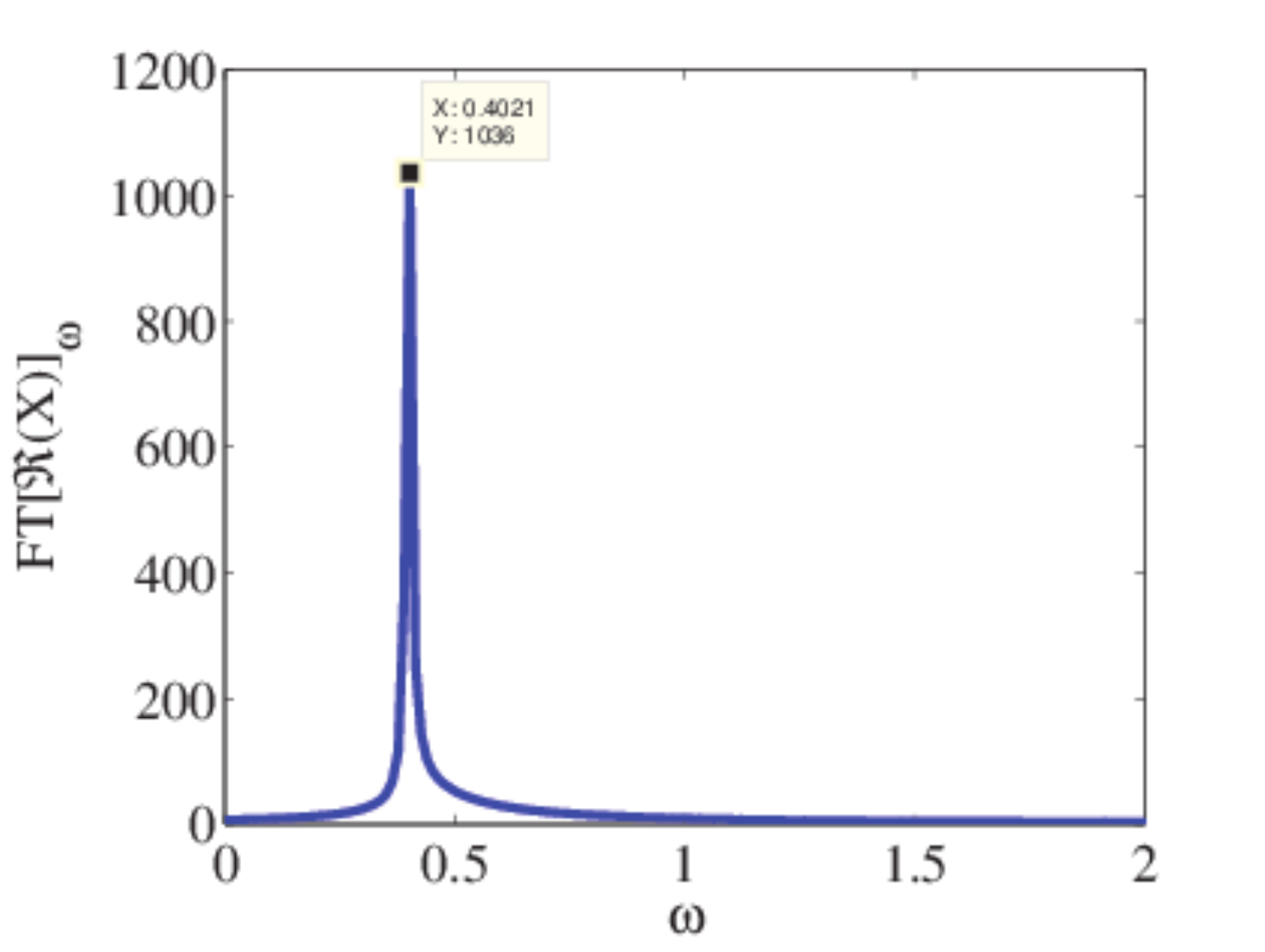}}
		\caption{DNS for the case $U=1$.  (a) Spacetime plot of $|u(x,t)|$; (b) Time series of $X=\Re(u(0,t)$; (c) Fourier transform of (b) showing the importance of the single large-amplitude global mode.}
	\label{fig:dns_u1}
\end{figure}
A spacetime plot in Figure~\ref{fig:dns_u1}(a) shows the selection of a persistent spatial structure, confirmed in Figure~\ref{fig:dns_u1}(b).  The frequency of oscillation coincides with that of the neutrally stable global mode (Figure~\ref{fig:dns_u1}(c)).  Compared to the case with $U=0$, the presence of advection has stabilized the large-amplitude global mode and eliminated the chaos from the system.


\subsection{Critical parameters for the onset of the subcritical nonlinear global mode}

In this subsection we take $\gamma=1-\imag$, $U=1$, and the parameters $\mu_0$ and $\mu_2$ both real and positive.  Based on Equation~\eqref{eq:cgl_eigs_formula_linear_cond}, the neutral curve for the onset of instability in the linear global mode becomes
\[
\mu_{0\mathrm{c}}^{\text{LGM}}=\frac{U^2}{4|\gamma|^2}+\sqrt{\tfrac{1}{2}|\mu_2\gamma|}\cos\left(\arg(\gamma)/2\right),
\]
and we work in a subcritical regime with $\mu_0\leq \mu_{0\mathrm{c}}^{\text{LGM}}(\mu_2)$.  The numerical iterative method outlined at the start of this section is applied to compute the nonlinear global modes.  These nonlinear global modes emerge when $\Im(\Omega)\geq 0$ for at least one $R$-value.  The emergence of the nonlinear global modes occurs precisely when the  correction $\omega_1$ changes sign (Section~\ref{sec:global_modes}, in particular Equation~\eqref{eq:perturbation_theory0}), corresponding to a subcritical Hopf bifurcation~\cite{Chomaz05}.
The critical parameter value $\mu_0^{\text{NGM}}$ such that $\max(\Im(\Omega))=0$ is plotted in Figure~\ref{fig:neutral},
with $\mu_0^{\text{NGM}}\leq \mu_0^{\text{LGM}}$, indicating the subcritical appearance of the nonlinear global modes in the parameter space. 
\begin{figure}[htb]
\centering
\includegraphics[width=0.6\textwidth]{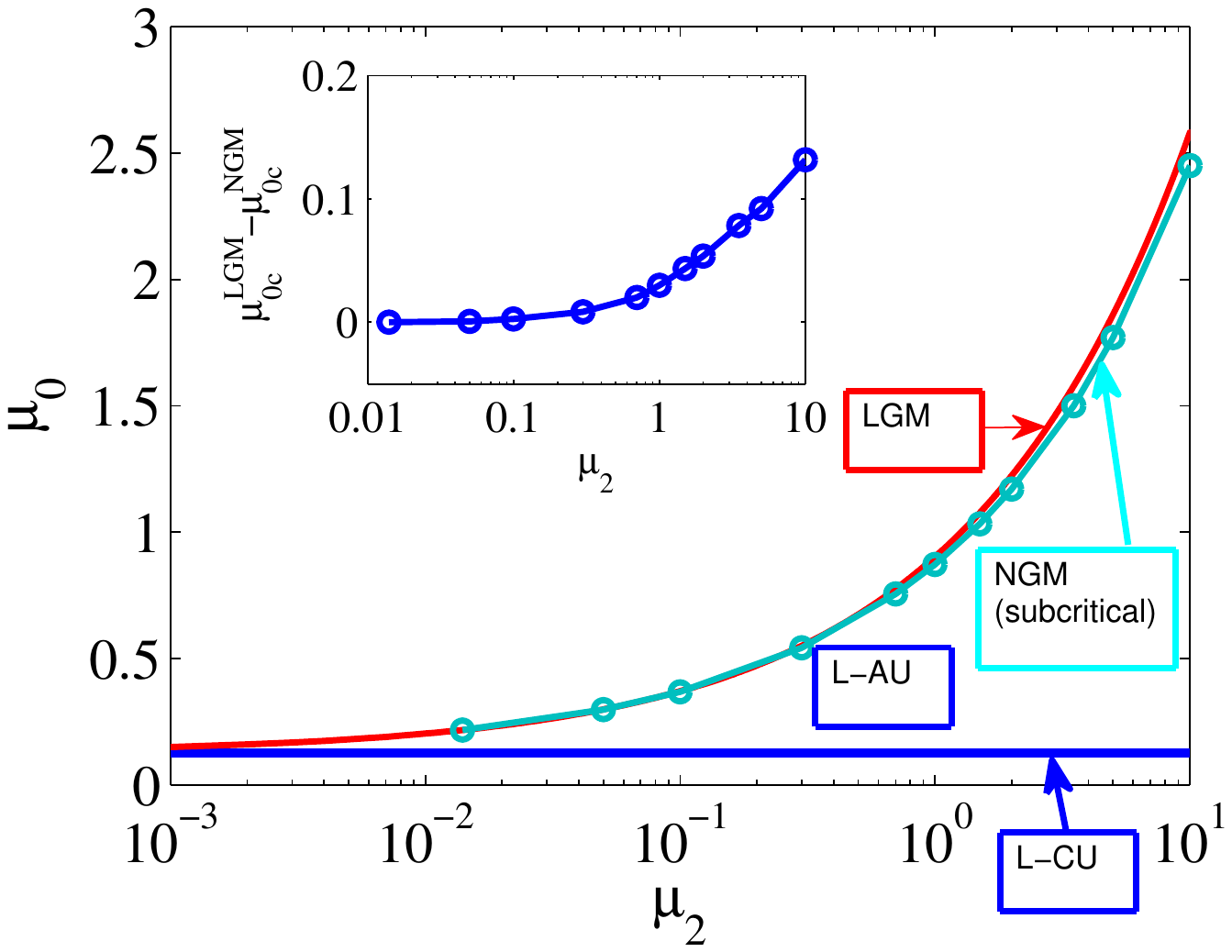}
\caption{The onset of the various disturbances in the $(\mu_2,\mu_0)$ parameter space.  Here, `LGM' means Linear Global Mode (the region above this curve corresponds to linear global instability), `NGM' means nonlinear Global mode, `L-AU' means linearly absolutely unstable, and `L-CU' means linearly convectively unstable.  The inset shows the difference $\mu_{0c}^{\text{LGM}}-\mu_{0c}^{\text{NGM}}$ as a function of $\mu_2$, which vanishes at the point where the nonlinear global mode emerges for the first time, corresponding to a subcritical Hopf bifurcation.}
\label{fig:neutral}
\end{figure}
The difference $\mu_{0c}^{\text{LGM}}-\mu_{0c}^{\text{NGM}}$ as a function of $\mu_2$ for the onset of the first nonlinear global mode is shown in the inset to Figure~\ref{fig:neutral}.  The difference between the two critical values increases as $\mu_2$ increases.  This corresponds to
a path in parameter space that starts in the neighbourhood of the point where $\omega_1=0$ (i.e. where the 
perturbative techniques introduced in Section~\ref{sec:global_modes} apply); the path then tends towards a region of parameter space 
far removed from criticality wherein the amplitude of the nonlinear global mode is finite.  However, in the main part of the figure, the two critical curves `shadow' one another, meaning that the nonlinear global mode is influenced by the corresponding linearized dynamics.  In contrast to the asymptotic global modes studied by Pier and co-workers~\cite{pier1996fully,pier1998}, the onset of linear absolute instability is necessary but insufficient for the onset of nonlinear global modes: one must also be `close' to the neutral curve of the corresponding linear global mode.  Also, the necessary condition previously derived for the onset of nonlinear global modes and linear transient growth (Equation~\eqref{eq:onset_necessary}) is not a sufficient condition for the onset of nonlinear global modes: again, one must also be  `close' to the neutral curve of the corresponding linear global mode  to observe the subcritical nonlinear global mode.

\section{Conclusions}
\label{sec:conc}

Summarizing, we have introduced a new two-level model admitting nonnormality and nonlinearity and highlighting the way in which subcritical transient growth can trigger an unstable global mode, which then leads to unbounded growth -- a rather counterintuitive result in a system that is linearly stable with respect to eigenvalue analysis.  We have argued that this model can be imported directly into an optical experiment, and the hope would be to  such experiments realised as a vivid illustration of the transient-growth mechanism.

Complementing the developments in the two-level system, we have investigated the nonlinear inhomogeneous complex Ginzburg--Landau equation as a spatially-extended analogue of the simple two-level system, and have demonstrated using non-perturbative methods the existence of multiple global modes in such systems.  Theory shows that for the subcritical case, the global mode of smaller amplitude is always unstable, while the numerics show that the global mode of larger amplitude can either be stable or unstable, with the instability controlled in particular by the advection.  The large-amplitude unstable global mode however demonstrates a kind of `one-sided' instability whereby trajectories in an abstract phase space make periodic excursions away from the global mode but return close to the global mode, thereby giving rise to a chaotic dynamics.

%

\subsection*{Acknowledgements}

L\'O N would like to acknowledge relevant discussions with many helpful experts, in particular M. Pietrzyk,  P. D. M. Spelt, Darryl Holm, and B. Pier.

This material is based upon works supported by the European Research 
Council (ERC) under the research project ERC-2011-AdG 290562-MULTIWAVE.

\appendix

\section{Application of the Complex Ginzburg--Landau Equation to paraxial optical beams}
\label{app:paraxial}

In optics, laser beams tend to be well collimated, such that their propagation is almost unidirectional.   The derivation of a model suitable for this system is described briefly.  In free space, and for an electric field oscillating at a single frequency, Maxwell's equations for a scalar electric field reduce to Helmholtz's equation, $\left(\nabla^2+k^2\right)E(\vecx)=0$, where $k=(\omega/c)>0$, $\omega$ is the wave frequency, and $c$ is the speed of light in a vacuum.  Here, information about the polarization of the light is not considered.  The Green's function (outgoing waves) for the Helmoholtz equation in three dimensions is $\mathe^{\imag kr}/r$.  Using this Green's function and Kirchoff's integral formula, the light received at a field point $\vecx$ from an aperture of surface area $S$ is
\begin{equation}
E(\vecx)=\frac{1}{4\pi}\int_S \left[\frac{\mathe^{\imag kr}}{r}\left(\frac{\partial}{\partial n}E(\vecx_0)\right)-E(\vecx_0)\frac{\partial}{\partial n}\left(\frac{\mathe^{\imag kr}}{r}\right)\right]\mathd S(\vecx_0),
\label{eq:kirchoff_huygens}
\end{equation}
where $\partial/\partial n=\widehat{\mathbf{n}}\cdot\nabla$, and where $\widehat{\mathbf{n}}$ is the outward-pointing unit normal of the surface $S$.  Also, $r=|\vecx-\vecx_0|$ is the relative separation between the field point $\vecx$ and the source point $\vecx_0$.  It is assumed that light propagating from the aperture behaves more or less like an outward-moving wave front, such that $\imag k u=-\partial u/\partial n$ on $S$.  Then, Equation~\eqref{eq:kirchoff_huygens} simplifies:
\begin{equation}
E(\vecx)=-\frac{1}{4\pi}\int_S\left[\imag k\left(1+\cos\theta\right)-\frac{\cos\theta}{r}\right]\frac{\mathe^{\imag kr}}{r}E(\vecx_0)\,\mathd S(\vecx_0).
\label{eq:kirchoff_huygens1}
\end{equation}
Moreover, the far-field assumption $kr\gg 1$ is made, such that Equation~\eqref{eq:kirchoff_huygens1} becomes
\begin{equation}
E(\vecx)=\frac{1}{4\pi\imag }\int_S k\left(1+\cos\theta\right)\frac{\mathe^{\imag kr}}{r}E(\vecx_0)\,\mathd S(\vecx_0)
\label{eq:kirchoff_huygens2}
\end{equation}
where $\theta$ is the angle between the normal vector $\widehat{\mathbf{n}}$ and the vector $\vecx-\vecx_0$.
Now, the paraxial approximation is invoked, namely that the beam is well collimated far from the source, such that $E(\vecx)$ is important only close to the optical axis (here chosen arbitrarily to be the $z$-axis).  Thus, we take $(z-z_0)^2\gg (x-x_0)^2+(y-y_0)^2$.  Hence, $\cos\theta\approx 1$, and the aperture $S$ resembles a planar surface with outward-pointing unit normal $\widehat{\mathbf{n}}=\widehat{\mathbf{z}}$.  Equation~\eqref{eq:kirchoff_huygens2} then reads
\[
E(\vecx)=\frac{k}{2\pi\imag }\iint_S \frac{\mathe^{\imag kr}}{r}E(\vecx_0)\,\mathd x_0\mathd y_0.
\]
Finally, the relative distance $r$ is approximated as $z-z_0+(1/2)[(x-x_0)^2+(y-y_0)^2]/(z-z_0)$ and the integral $S$ is taken over $\mathbb{R}^2$ (with the extra condition that $E(\vecx_0)$ is supported only on $S$).  One obtains
\begin{subequations}
\begin{equation}
E(x,y,z)=\frac{k}{2\pi\imag }\frac{\mathe^{\imag k(z-z_0)}}{z-z_0}\iint_{\mathbb{R}^2}E(x_0,y_0,z_0)\exp\left[k\frac{(x-x_0)^2+(y-y_0)^2}{2(z-z_0)}\right] \,\mathd x_0\mathd y_0.
\end{equation}
In this equation, the quantity
\begin{equation}
G_{3D}(x,x_0,y,y_0,z,z_0)=\frac{k}{2\pi\imag}\frac{\mathe^{\imag k(z-z_0)}}{z-z_0}\exp\left[k\frac{(x-x_0)^2+(y-y_0)^2}{2(z-z_0)}\right],\qquad z\neq z_0
\end{equation}%
\label{eq:greens_paraxial1}%
\end{subequations}%
can be viewed as a Green's function that propagates the light from the source point $z_0$ to a field point at $z$, in the paraxial approximation.  Moreover, the Green's function satisfies the paraxial wave equation
\begin{equation}
\left(\nabla_\perp^2+k^2\right)G_{3D}+2\imag k\frac{\partial G_{3D}}{\partial z}=0,\qquad z\neq z_0,\qquad \nabla_\perp^2=\partial_x^2+\partial_z^2.
\label{eq:greens_paraxial2}
\end{equation}
Thus, $E$-fields that satisfy the paraxial condition also solve Equation~\eqref{eq:greens_paraxial2}.  For such fields, one writes $E(x,y,z)=u(x,y)\mathe^{\imag kz}$, such that
\begin{equation}
\nabla_\perp^2 u+2\imag k\frac{\partial u}{\partial z}=0.
\label{eq:paraxial}
\end{equation}
By inspection of Equation~\eqref{eq:greens_paraxial1}, it is clear that the solution $u(x,y,z)$ to Equation~\eqref{eq:paraxial} is propagated from $z_0$ to $z$ as follows:
\[
u(x,y,z)=\iint_{\mathbb{R}^2} \mathcal{G}_{3D}(x,x_0,y,y_0,z,z_0)u(x_0,y_0,z_0)\,\mathd x_0,\mathd y_0,
\]
where
\[
\mathcal{G}_{3D}(x,x_0,y,y_0,z,z_0)=\frac{k}{2\pi\imag}\exp\left[k\frac{(x-x_0)^2+(y-y_0)^2}{2(z-z_0)}\right],\qquad z\neq z_0.
\]

In two dimensions with coordinates $(x,z)$, similar arguments hold: the Green's function for the full Helmholtz equation is $(\imag/4)H_0^{(1)}(k\rho)$, where $H_0^{(1)}$ is the Hankel function of the first kind, and where $\rho=\sqrt{(x-x_0)^2+(z-z_0)^2}$.  Following the same procedures as before, the paraxial wave equation~\eqref{eq:paraxial} is recovered, and the Green's function
\[
\mathcal{G}_{2D}(x,x_0,z,z_0)=\sqrt{\frac{k}{2\pi\imag(z-z_0)}}\mathe^{\imag k(x-x_0)^2/2(z-z_0)}
\]
is obtained, such that $u(x,z)$ is obtained from the same field at $z_0$ by convolution,
\[
u(x,z)=\int_{-\infty}^\infty \mathcal{G}_{2D}(x,x_0,z,z_0)u(x_0,z_0)\,\mathd x.
\]

In optical media, the propagation constant $k$ is no longer real or constant.  This can be seen as follows.  First, we note that $k=\omega/v$, where $v$ is the speed of light in the medium, and $v=c/\widetilde{n}$ is the (complex) index of refraction, $\widetilde{n}=n+\imag \kappa$.  Thus, $k=(\omega/c)(n+\imag \kappa)$, and the right-propagating electric field contains a factor $\mathe^{\imag k z}=\mathe^{-\imag (\omega/c)nz}\mathe^{-(\omega/c)\kappa z}$.  Thus, $\kappa>0$ corresponds to absorption, whereby the beam's energy is absorbed by the medium.
In a Gaussian duct, both the index of refraction $n$ and the absorption coefficient $\kappa$ vary in the direction perpendicular to propagation.  We consider the situation wherein $n=n_0-(n_2/2)x^2$ and where $n_0,n_2>0$, meaning that the index of refraction is maximal at the duct centreline $x=0$.  In this scenario, optical rays will be trapped near the duct centreline~\cite{SiegmanBook}.  In the opposite scenario ($n_2<0$), the variation of the index of refraction will cause incident rays to diverge from the duct centreline.  Also, we consider $\kappa=\kappa_0+(\kappa_2/2)x^2$, where $\kappa_0>0$.  Thus,
\[
k=(\omega/c)(n+\imag\kappa)=(\omega/c)(n_0+\imag\kappa_0)+\frac{\omega x^2}{2c}\left(-n_2+\imag \kappa_2\right).
\]
We also write this as $k=K_0+(1/2)K_2x^2$, with 
\[
K_0=(\omega/c)(n_0+\imag \kappa_0),\qquad 
K_2=(\omega/c)(-n_2+\imag \kappa_2),
\]
and we work in the limit where $|K_0|\gg |K_2x^2|$, such that $K^2\approx K_0^2+K_0 K_2 x^2$.  
Thus, the paraxial wave equation~\eqref{eq:paraxial} becomes
\[
\frac{\partial^2 u}{\partial x^2}+2\imag K_0\frac{\partial u}{\partial z}+K_0 K_2x^2 u=0,\qquad z>0,
\]
where we work with two-dimensional beams.  Re-arranging, one obtains
\begin{equation}
\frac{\partial u}{\partial z}=\frac{\imag }{2K_0}\frac{\partial^2 u}{\partial x^2}+\tfrac{1}{2}\imag K_2 x^2 u,\qquad z>0,
\label{eq:cgl_optical}
\end{equation}
which is precisely the linearized inhomogeneous complex Ginzburg-Landau equation~\eqref{eq:qm_model} with $\gamma=\imag/(2K_0)$ and $\mu_2=-\imag K_2$, and $\mu_0=U=0$.    Such a system is referred to as a `Gaussian duct'~\cite{SiegmanBook}.

%
%

\end{document}